\newcounter{mytempeqncnt}
\newtheorem{thm}{Theorem}
\newtheorem{lem}{Lemma}
\newtheorem{crlry}{Corollary}
\newtheorem{defntn}{Definition}
\newenvironment{proof}[1][Proof]{\begin{trivlist}
\item[\hskip \labelsep {\bfseries #1}]}{\end{trivlist}}
\newcommand{\qed}{\nobreak \ifvmode \relax \else
      \ifdim\lastskip<1.5em \hskip-\lastskip
      \hskip1.5em plus0em minus0.5em \fi \nobreak
      \vrule height0.75em width0.5em depth0.25em\fi}
\begin{document}
\title{A Micro-foundation of Social Capital in Evolving Social Networks}

\author{Ahmed~M.~Alaa,~\IEEEmembership{Member,~IEEE}, Kartik Ahuja, and Mihaela van der Schaar,~\IEEEmembership{Fellow,~IEEE}% <-this % stops a space
%\author{Author 1, Author 2
\thanks{The authors are with the Department of Electrical Engineering, University of California Los Angeles (UCLA), Los Angeles, CA, 90095, USA (e-mail: ahmedmalaa@ucla.edu, ahujak@ucla.edu, mihaela@ee.ucla.edu). This work was funded by the Office of Naval Research (ONR).} 
%\thanks{Manuscript received XXXX XX, 2013; revised XXXX XX, 201X.}
}
%\IEEEspecialpapernotice{(Invited Paper)}
\markboth{XXXXXXXXXXXXXXXXX, ~Vol.~XX, No.~X, XXXX~2015}%
{Alaa \MakeLowercase{\textit{et al.}}: A Micro-foundation of Social Capital in Evolving Social Networks}
\maketitle
\begin{abstract}
A social network confers benefits and advantages on individuals (and on groups); the literature refers to these benefits and advantages as \textit{social capital}. An individual's social capital depends on its position in the network and on the shape of the network -- but positions in the network and the shape of the network are determined endogenously and change as the network forms and evolves. This paper presents a micro-founded mathematical model of the evolution of a social network and of the social capital of individuals within the network. The evolution of the network and of social capital are driven by exogenous and endogenous processes -- birth, meeting, linking -- that have both random and deterministic components. These processes are influenced by the extent to which individuals are \textit{homophilic} (prefer others of their own type), \textit{structurally opportunistic} (prefer neighbors of neighbors to strangers), \textit{socially gregarious} (desire more or fewer connections) and by the \textit{distribution of types} in the society. In the analysis, we identify different kinds of social capital: \textit{bonding capital} refers to links to others; \textit{popularity capital} refers to links from others; \textit{bridging capital} refers to connections between others. We show that each form of capital plays a different role and is affected differently by the characteristics of the society. Bonding capital is created by forming a circle of connections; homophily increases bonding capital because it makes this circle of connections more homogeneous. Popularity capital leads to \textit{preferential attachment}: individuals who become popular tend to become more and more popular because others are more likely to link to them.  Homophily creates inequality in the popularity capital attained by different social groups; more gregarious types of agents are more likely to become popular. However, in homophilic societies, individuals who belong to less gregarious, less opportunistic, or major types are likely to be more \textit{central} in the network and thus acquire a bridging capital. And, while extreme homophily maximizes an individual's bonding capital, it also creates \textit{structural holes} in the network, which hinder the exchange of ideas and information across social groups. Such structural holes represent a potential source of bridging capital: non-homophilic (tolerant or open-minded) individuals can fill these holes and broker interactions at the interface between different groups. 
\end{abstract}

\begin{IEEEkeywords}
Centrality, homophily, network formation, popularity, preferential attachment, social capital, social networks. 
\end{IEEEkeywords}
\IEEEpeerreviewmaketitle{}
\section{Introduction} 
%\begin{figure}[t!]
%    \centering
%    \includegraphics[width=3.5 in]{blockdiagcap.png}     
%    \caption{Framework for the analysis of the emerging social capital.}
%\end{figure}
\IEEEPARstart{S}{ocial} networks bestow benefits -- tangible benefits such as physical and monetary resources and intangible benefits such as social support, solidarity, influence, information, expertise, popularity, companionship and shared activity -- on the individuals and groups who belong to the network. Such resources allow individuals to do better in the network; they help individuals accomplish tasks, produce and spread information, broker interactions across social groups, display influence on other individuals, gain more knowledge, or enjoy more emotional and social support. The concept of \textit{social capital} has come to embody a set of different incarnations of the benefits attained by social groups via networked societal interactions \cite{refcap01}-\cite{refcap05}.\\ 
\\
Contemporary sociologists have established different definitions and conceptualizations for social capital. For instance, Coleman has defined the social capital as \textit{``a function of social structure producing advantage"} \cite{refcap01}, and he advanced social capital as a conceptual tool that puts economic rationality into a social context \cite{refcap01}\cite{refcap02}. Social capital for Bourdieu is related to the size of network and the volume of past accumulated social capital commanded by an individual \cite{ref370}. Bourdieu considers that clear profit is the main reason for an individual to engage in and maintain links in a network, and the individuals' potential for accruing social profit and control of capital are non-uniformly distributed. Both conceptualizations of Coleman and Bourdieu are related; they view social capital as existing in relationships and ties, and they postulate that \textit{density} and \textit{closure} are distinctive advantages of capital. While such vision assumes that \textit{strong ties} (the links between homogeneous and like-minded individuals) are the prominent sources of social capital, other sociologists such as Granovetter, Putnam, and Burt have argued that \textit{weak ties} (the links between diverse and weakly connected network components) are also a source of capital \cite{refcap05}-\cite{ref55sh3}. That is, individuals who can broker connections between otherwise disconnected social groups are more likely to connect non-redundant sources of information, thus promoting for innovation and new ideas \cite{ref55sh}. In \cite{ref55sh}, Burt provided a generalized framework for social capital, viewing bonding capital in connected communities as a source for bridging capital for individuals who connect these communities.\\ 
\\  
As it is for other forms of capital, inequality is displayed in the creation of social capital \cite{refcap00}; that is to say, social capital accrues over time as networks emerge and evolve, and since individuals gain different social positions in the emergent network, capital is not created uniformly across agents; ``better connected" agents possess more capital. While there are a number of somewhat different definitions of social capital in the literature, these definitions share the following set of features. First, social capital is a metaphor about advantage, and it can be thought of as the contextual complement of human capital; it is not depleted by use, but rather depleted by non-use. Second, social capital is a function of the collective social structure, and the social positions of individuals; well connected individuals possess more capital, and well connected networks possess a larger shared value. Finally, the creation of social capital exhibits inequality due to the heterogeneity of norms and behaviors of the different social groups, which reflects on their positions in the network.\\
\\
Motivated by this discussion, this paper aims at establishing the micro-foundations of emerging social capital in an evolving network. In particular, we present a comprehensive mathematical model for dynamic network formation, where agents belonging to heterogeneous social groups take link formation decisions (e.g. ``follow" a user on Twitter or ResearchGate \cite{ref01} \cite{ref02}, ``cite" a paper that is indexed by Google Scholar, etc) which on one hand gives rise to an endogenously formed network, and on the other hand creates social capital for individual agents and groups. We view social capital as: \textit{``any advantage or asset that is accrued by an individual or a social group in an evolving network due to the social position that they hold in the underlying network structure. An advantage can correspond to the extent of popularity, prestige, or centrality of an individual; or the density and quality of an individual's ego network."} In our model, we consider that \textit{homophily}, which is an individual's tendency to connect to similar individuals \cite{ref371}, contextualizes economic rationality, i.e. homophily is what creates the incentives for individuals to connect to each other. However, the way individuals meet, the number of links they form, and the way trust propagates among them is governed by norms and behaviors, which generally vary from one social group to another. We view the different forms of social capital as being emergent by virtue of an evolving network, where the evolution of the network is highly influenced by both the actions of individuals, as well as the norms and behaviors of social groups. Due to the heterogeneity of the norms and behaviors of different social groups, social capital inequality is exhibited, and some groups would collectively acquire more prominent positions in the network than others. In the following subsection, we briefly describe the basic elements of our model.

\subsection{A micro-foundational perspective of network evolution and social capital emergence} 

The central goal of the paper is to study the micro-foundations of different forms of emerging social capital via a mathematical model for network evolution. In our model, networks are formed over time by the actions of boundedly rational agents that join the network and meet other agents via a random process that is highly influenced by the dynamic network structure and the characteristics of the agents themselves. Thus, networks evolve over time as a stochastic process driven by the individual agents, where the formation of \textit{social ties} among agents are in part endogenously determined, as a function of the current network structure itself, and in part exogenously, as a function of the individual characteristics of the agents. Agents have bounded rationality, i.e. they only have information about other agents they meet over time, they are not able to observe the global network structure or reason about links formed by others, and they are \textit{myopic} in the sense that they take linking decisions without taking possible future meetings into account. We focus on the impact of various exogenous parameters that describe the norms and behaviors of heterogeneous social groups, on the endogenously evolving network structure, and consequently on the emerging social capital. Fig. 1 depicts all such exogenous and endogenous parameters. In the following, we provide definitions for the exogenous parameters under study. \\
\textbf{1- Type Distribution:} Agents are heterogeneous as they possess type attributes that designate the social groups to which they belong. A social group is a group of individuals with the same occupation, social class, age, gender, religion, race or ethnicity, and are assumed to follow the same norms and behavior. The experiences of the different interacting social groups in the network are generally not symmetric; thus, social capital is created non-uniformly across them. The type distribution corresponds to the relative population share of different social groups, and represents the fraction of agents of each type in the network. We say that an agent belongs to a \textit{type minority} to qualitatively describe a scenario where the fraction of agents of the corresponding type in the population is small, and we say that an agent belongs to a \textit{type majority} otherwise. \\
\textbf{2- Homophily:} Homophily refers to the tendency of  agents to connect to other similar-type agents; it is widely regarded as a pervasive feature of social networks \cite{ref381}\cite{ref93}\cite{ref90}. We capture the extent to which    an agent is \textit{homophilic} by an \textit{exogenous homophily index}, which we formally define in Section 2. The homophily index can be thought of as the amount of ``intolerance" that a certain type of agents have towards making contacts with other types. It can also represent the ``closed-mindedness" of a social group; low homophilic tendency means that agents are eager to connect and accept views of other social groups, whereas high homophilic tendency means that agents restrict their social ties to only like-minded individuals. \\
\textbf{3- Social Gregariousness:} Some types of agents can be more \textit{sociable} than others, and thus are willing to form more links. Social gregariousness is simply measured by the minimum number of links an agent is willing to make.\\
\textbf{4- Structural Opportunism:} Agents in the network are said to be \textit{opportunistic} if they exploit their contacts to find new contacts; thus, agents are more likely to link with the neighbors of their neighbors if they are opportunistic. Structural opportunism can also be interpreted as the \textit{flow of trust} among individuals; each agent trusts the connections of his neighbors more than he trusts others. Opportunism induces \textit{closure} in the network, i.e. connections of an individual are well connected, which on one hand may be thought of as a source of increasing social support for an individual, and on the other hand it can lead to information redundancy, i.e. all connections of an individual possess similar information since they are well connected among each other. Structural opportunism can also correspond to a property of a behavior-dependent meeting mechanism; for instance, users in Twitter are expected to \textit{retweet} the tweets posted by users they follow, which leads to the followers of followers of a certain user to follow him. Similarly, researchers find new papers to cite by looking at the references of papers that they have already cited. \\

We focus on three different incarnations of social capital that agents gain as the network evolves. These  forms of capital differ in terms of the type of advantage they offer to agents, the way they are created and distributed among agents and social groups, and their dependence on the underlying norms and behaviors of social groups, which are abstracted by the exogenous parameters. We focus on {\it directed networks}, i.e. networks in which ties are formed unilaterally such as Twitter and citation networks. In particular, we focus on the following forms of social capital that emerge in such networks.\\
\textbf{1- Bonding capital:} We define the bonding capital as the aggregate informational and social benefits that an individual draws from its direct neighbors in the network. The bonding capital depends only on an individual's ego network (direct connections), and is invariant to the global network structure as long as the local ego network is preserved. The bonding capital increases if the ego network is more homogeneous; individuals are better off when connecting to other similar individuals. This is because more similar individuals are more likely to provide more social support and more relevant information. Since in our model agents form links driven by homophilic incentives, we measure the bonding capital by the agents' utility functions. This form of capital is close to the definitions of Coleman and Bourdieu \cite{refcap01}-\cite{ref370}.\\
\textbf{2- Popularity capital:} In our model, we consider a directed social network, thus links are formed by an individual and others also form links towards that individual. Individuals gain bonding capital by forming links to others, and they also gain popularity capital by having other individuals form links to them. The popularity capital represents an individual's ability to influence others. That is, an individual's popularity capital allows it to better spread information and ideas in the network, and also to gain support and agreement on the individual's views and opinions. We measure the popularity capital of an individual by simply counting the number of individuals forming links with that individual. \\
\textbf{3- Bridging capital:} Individuals who connect different social groups are able to control the flow of information across those groups and obtain non-redundant information from diverse segregated communities, which allows them to come up with innovations and new ideas \cite{ref55sh3}. Thus, individuals can acquire a \textit{bridging capital} because of their centrality in the network rather than their popularity or the quality of their ego networks. We measure the bridging capital using a graph theoretic centrality measure, namely, the \textit{betweenness centrality}. \\

Examples of bonding capital include the knowledge acquired by citing research papers, information and news obtained from following users on Twitter, etc. Popularity capital includes the number of citations associated with a published paper, the impact factor of a journal, the number of followers of a user on Twitter \cite{reftwit}, etc. Examples of bridging capital include conducting interdisciplinary research, creating cross-cultural memes on Twitter, etc. Bonding capital helps individuals acquire knowledge, information and support, which allows them to accomplish tasks \cite{refgay}, whereas popularity capital can give financial returns (such as research funds for popular scholars), or intellectual influence (such as in the case of citation networks) \cite{ref02}. Finally, bridging capital leads to innovation \cite{ref55sh3}, i.e. innovative interdisciplinary research \cite{refcent7}; cross-cultural creative content generated by internet users \cite{refcent1}; or acquisition of non-redundant information about job opportunities in informal organizational networks \cite{ref55sh}. Fig. 1 depicts the framework of the paper; we focus on four different exogenous parameters, which abstract the norms and behaviors of social groups, and study their impact on the emergence of the three forms of social capital discussed above.    

\subsection{Preview of the results}
The central questions addressed in this paper are: how do bonding, popularity, and bridging forms of capital emerge simultaneously in an evolving network? Which social groups possess which forms of capital? How is the capital accrued by a social group affected by its norms and behavior? We classify our results based on the different forms of capital as follows. \\
\textbf{1- Bonding capital and the egocentric value of networking:} In Section 3, we study the emergence of bonding capital by characterizing the \textit{ego networks} of individual agents in terms of the time needed for an agent to form its ego network, and the types of agents in that network. We show that majority and opportunistic types are more likely to establish their ego networks in a short time period. Moreover, we show that extreme homophilic tendencies for all social groups is a necessary and sufficient condition for maximizing the aggregate bonding capital of the society -- so we show that polarization in a society maximizes bonding capital. However, we also show that polarization in a society leads to ``structural holes'' in the network that is formed, and these structural holes hinder the exchange of information and ideas. \\
\textbf{2- Popularity capital and preferential attachment:} In Section 4, we show that the acquisition of \textit{popularity capital} displays a \textit{preferential attachment} effect due to the individuals' structural opportunism. In other words, the popular individuals get more popular as structural opportunism promotes the propagation of trust and reputation across the network, which endows popular agents with ``reputational advantages" over time. Furthermore, we show that in tolerant (non-homophilic) societies, an individual's age and the collective gregariousness of social groups are the forces that determine an agent's popularity capital, whereas homophily can create asymmetries in the levels of popularity attained by different social groups; more gregarious types of agents have more chances to become popular, whereas the type distribution plays no significant role in the rate of accumulation of popularity capital. \\
\textbf{3- Bridging capital and the strength of weak ties:} In Section 5, we show via simulations that in a homophilic society, individuals who belong to less gregarious, less opportunistic, or major types are likely to be more \textit{central} in the network and thus acquire a bridging capital within their social groups. Moreover, we emphasize the strength of weak ties by showing that when a social group has a different attitude towards homophily compared to all other groups, it ends up being the most central in the network. In particular, we show that the \textit{structural holes} created in extremely homophilic networks represent a potential source of bridging capital for ``open-minded" social groups; non-homophilic individuals can fill these holes and broker interactions at the interface between different groups, which allows them to be the most central agents, even if they are neither the most popular nor  represent a majority type in the network. Furthermore, we show that in extremely non-homophilic societies, homophilic social groups are the most central; that is, despite the absence of cross-group structural holes, homophilic agents reside in the center of the network, acting as an \textit{information hub} or a \textit{dominant coalition}, through which information diffusion is controlled. \\

\subsection{Related works}

To the best of the authors' knowledge, none of the network formation models in literature have studied the emerging social capital associated with endogenously formed networks. Qualitative studies on social capital by contemporary sociologists such as Coleman, Bourdieu, Lin, Putnam, Portes and Granovetter can be found in \cite{refcap01}-\cite{refcap00}, \cite{reffish}-\cite{reftwit}. These studies give qualitative definitions for the social capital in general (not necessarily networked) societies along with some hypotheses about its emergence in different societies, and they support their hypotheses on the basis of historical and experimental evidence. Moreover, empirical and qualitative studies on the social capital in Online Social Networks (OSN) were carried out in \cite{reftwit}, \cite{refcent1} and \cite{refface}. These works have given qualitative insights into the emergence of social capital in OSNs mainly based on data, e.g. the number of followers and followees of a user on Twitter, the frequency of interaction and message exchange among users in Facebook, etc. All these works do not come up with mathematical models for the emerging social capital in evolving social networks, thus they neither offer a concrete understanding and explanation for the micro-foundations of social capital, nor offer a counterfactual analysis for different scenarios of network evolution.

While no mathematical model has studied emergent social capital in networks, there exists a voluminous literature focusing on network formation models. Previous works on network formation can be divided into three categories: networks formed based on \textit{random events} \cite{ref39}, \cite{ref24},\cite{ref9}, \cite{ref3}-\cite{ref13}, \cite{ref555}, networks formed based on \textit{strategic decisions} \cite{ref91}-\cite{ref16}, \cite{ref55sh2}, and empirical models distilled by mining networks' data \cite{ref01}-\cite{ref03}, \cite{ref19}, \cite{ref31}-\cite{ref361}, \cite{ref575}. While a fairly large    literature has been devoted to developing mathematical models for network formation, a much smaller literature attempts to interpret and understand how networks evolve over time, how  individual agents affect the characteristics of such networks, and  the ``value" of social networking conceptualized in terms of social capital. Probabilistic models based on random events are generative models that are concerned with constructing networks that mimic real-world social networks. In \cite{ref3}-\cite{reftnse2}, agents get connected in a pure probabilistic manner in order to realize some degree distribution \cite{ref3}, or according to a \textit{preferential attachment} rule \cite{ref4}\cite{ref5}. While such models can capture the basic structural properties of social networks, they fail to explain why and how such properties emerge over time. 

In contrast, strategic network formation models such as those in \cite{ref91}-\cite{ref17}, and our previous works in \cite{ref25}\cite{ref16}, can offer an explanation for why certain network topologies emerge as an equilibrium of a network formation game. However, these results are limited to studying network \textit{stability} and \textit{efficiency}, and provide only very limited insight into the dynamics and evolution of networks. Moreover, although mining empirical data can help in building algorithms for detecting communities \cite{ref20}-\cite{ref361}, predicting agents' popularity \cite{ref36}, or identifying agents in a network \cite{ref31}, it is of limited use in  understanding how networks form and evolve. 

\section{Model}
\subsection{Network model}
We construct a  model for a growing and evolving social network. Time is discrete. One \textit{agent} is born at each moment of time; we index agents by their birth dates $i\in\{1,2,.\,.\,.,t,.\,.\,.\}$. Agents who are alive at a given date $t$ have the opportunity to form (directed) links; we write $G^{t}$ for the network that has been formed (by birth and linking) at time $t$. As we will see, this is a random process $\{G^{t}\}_{t=0}^{\infty}.$ We write  $\mathcal{G}^{t}$ for the space of all possible networks that might emerge at time $t$ and $\Omega_{\mathcal{G}}$ for the space of all possible realizations of the network process. At date $t\in\mathbb{N}$, a snapshot of the network is captured by a \textit{step graph} $G^{t}=(\mathcal{V}^{t},\mathcal{E}^{t})$, where $\mathcal{V}^{t}$ is the set of nodes, $\mathcal{E}^{t}=\{e_{1}^{t},e_{2}^{t},.\,.\,.,e_{|\mathcal{E}^{t}|}^{t}\}$ is the set of edges between different nodes, with each edge $e_{k}^{t}$ being an ordered pair of nodes $e_{k}^{t}=(i,j)$ $\left( i\neq j,\,\mbox{and}\, i,j\in\mathcal{V}^{t}\right)$, and $|\mathcal{E}^{t}|$ is the number of distinct edges in the graph. We emphasize that $G^{t}$ is a directed graph. Nodes correspond to agents (social actors) and edges correspond to directed links (social ties) between the agents. The adjacency matrix of $G^{t}$ is denoted by ${\bf A}_{G}^{t}=[A^{t}(i,j)],A^{t}(i,j)\in\{0,1\},A^{t}(i,i)=0,\forall i,j\in\mathcal{V}^{t}$. An entry of the adjacency matrix $A^{t}(i,j)=1$ if $(i,j)\in\mathcal{E}_{k}^{t}$, and $A^{t}(i,j)=0$ otherwise. If $A^{t}(i,j)=1$, then agent $i$ has formed a link with agent $j$, and we say that $j$ is a ``followee" of $i$, and $i$ is a ``follower" of $j$. The directed nature of a link indicates the agent forming the link, and only this agent obtains the \textit{social benefit} of linking and pays the link cost. The \textit{indegree} of agent $i$ is the number of links that are initiated towards $i$, denoted by $\mbox{deg}_{i}^{-}(t)$, while the \textit{outdegree}, denoted by $\mbox{deg}_{i}^{+}(t)$, is the number of links initiated by agent $i$. Agents $i$ and $j$ are connected if there is a path of edges from $i$ to $j$ (ignoring directions); a \textit{component} is a maximal connected set of agents. A \textit{singleton component} is a component comprising one agent. The number of non-singleton components of a step graph $G^{t}$ is denoted by $\omega\left(G^{t}\right)$, where $1\leq\omega\left(G^{t}\right)\leq\left|\mathcal{V}^{t}\right|$.

Each agent $i$ is described by a type attribute $\theta_{i}$, which belongs to a finite set of types $\theta_{i}\in\Theta,\Theta=\{1,2,3,.\,.\,.,|\Theta|\}$, where $|\Theta|$ is the number of types. The type of an agent abstracts the social group to which it belongs; and all agents belonging to the same social group have the same characteristics and will follow the same behavior. The set of type-$k$ agents at time $t$ is denoted by $\mathcal{V}_{k}^{t}$, where $\mathcal{V}^{t}=\bigcup_{k=1}^{|\Theta|}\mathcal{V}_{k}^{t}$, and $\mathcal{V}_{k}^{t} \bigcap\mathcal{V}_{m}^{t}=\emptyset,\forall k,m\in\Theta,k\neq m$. We define the length-$L$ \textit{ego network} of agent $i$ at time $t$, $G_{i,L}^{t}$, as the subgraph of $G^{t}$ induced by node $i$, and any node $j$ that can be reached via a directed path of length less than or equal to $L$ starting from node $i$. In this paper, an ``ego network" generally refers to the length-1 ego network of an agent.

There are three aspects of network formation: agents are born; agents meet; agents form links. Birth is governed by a stationary random process; meeting is governed by a non-stationary random process; linking is governed by active choices. We describe each of these processes in the following subsections.
\subsection{The Birth Process}
At time $0$ the network is empty ($G^{0}=\emptyset$). Agents are born one at a time at each date $t$ according to a stationary stochastic process $\lambda(t)=\{\theta_{t}\}_{t\in\mathbb{N}}$, with a sample space $\Lambda=\Theta^{\mathbb{N}}$, i.e. $\Lambda=\left\{\left(\theta_{1},\theta_{2},.\,.\,.\right): \theta_{t}\in\Theta,\,\forall t\in\mathbb{N}\right\}$. We assume that the types of agents are independent and identically distributed ($\theta_{i}$ and $\theta_{j}$ are independent for all $i \neq j$), and that the agents' \textit{type distribution} is $\mathbb{P}(\theta_{i}=k)=p_{k}$, where $\sum_{k\in\Theta}p_{k}=1$, so, $\lambda(t)$ is a \textit{Bernoulli scheme}. At date $t$, the expected number of type-$k$ agents in the network is $p_{k}t$, the total number of agents is $t$, i.e. $|\mathcal{V}^{t}|=t,$ and $\lim_{t\rightarrow\infty}\frac{|\mathcal{V}_{k}^{t}|}{|\mathcal{V}^{t}|}=p_{k}$. Using \textit{Borel's law of large numbers}, we know that 
\[
\mathbb{P}\left(\lim_{t\rightarrow\infty}\frac{1}{t}\left|\mathcal{V}_{k}^{t}\right|=p_{k}\right) = 1.
\] 
In other words, for a sufficiently large network size (and age $t$), the actual fraction of agents of each type in the network converges almost surely to the prior type distribution of the Bernoulli scheme.

\subsection{The Meeting Process}

At each moment in time $t$, every agent $i$ who is alive at time $t$ (i.e. $i \leq t$) meets one other agent $m_i(t)$ (identified by its birth date). The meeting process is random (described in detail below); we write $M_{i}(t)=\{m_{i}(t)\}_{t=i}^{i+T_{i}-1}$ for the \textit{meeting process} of agent $i$. The meeting process may stop at some finite time $T_{i}$ (the stopping time) or continue indefinitely (in which case $T_{i} = \infty$). The sample space of the meeting process is given by $\mathcal{M}$. Agents meet other agents who belong to one of two {\it choice sets} \footnote{This terminology was first introduced by Bruch and Mare in \cite{ref579}.}, namely the set of {\it followees of followees} and the set of {\it strangers}. Unlike the birth process, which is stationary, the meeting process depends on the current network, which in turn depends on the past history: the probability that agent $i$ meets agent $j$ at time $t$ depends on their relative positions in the network at time $t$, which in turns depend on the sequence of meetings for both agents up to time $t-1$. Moreover, the probability that a certain sample path of the meeting process occurs depends on all the exogenous parameters shown in Fig. 1. 

Given a time $t$, an agent $i$ alive at time $t$, and the existing network $G^{t}$, write $\mathcal{N}_{i,t}^{+}$ for the set of followees of $i$ and $\mathcal{K}_{i,t} = \left(\bigcup_{j \in \mathcal{N}^{+}_{i,t-1}} \mathcal{N}^{+}_{j,t-1}/\left\{i\right\}\right)/\mathcal{N}^{+}_{i,t-1}$ for the set of \textit{followees of followees} of agent $i$. Everyone who is neither a followee nor a followee of followee is a {\it stranger}. (Note that the newly born agent $t$ is always a stranger.) At time $t$ agent $i$ meets either a followee of a followee or a stranger; the probabilitity of meeting a followee of a followee (if one exists) is an exogenous parameter $\gamma_{k}\in[0,1]$ (where $k$ is the type of $i$), which we think of as structural opportunism (taking advantage of opportunities \footnote{The parameter $\gamma_{k}$ can also be thought of as a realization of the \textit{triadic closure}; the flow of ``trust" among connected individuals \cite{ref555}, or as an exploration-exploitation behavior; an agent either explores the network or exploits his current connections with different probabilities.}), where $\gamma_{k}=1$ for fully opportunistic agents, and $\gamma_{k}=0$ for fully non-opportunistic agents. 

Denote the set of type-$k$ followees of agent $i\in\mathcal{V}^{t}$ by $\mathcal{N}_{i,t}^{+,k}$, and the set of all followees of $i$ as $\mathcal{N}_{i,t}^{+}=\bigcup_{k=1}^{|\Theta|}\mathcal{N}_{i,t}^{+,k}$, where $|\mathcal{N}_{i,t}^{+}|=\mbox{deg}_{i}^{+}(t)$. Similarly, we denote the followers of agent $i$ by $\mathcal{N}_{i,t}^{-}$, where $|\mathcal{N}_{i,t}^{-}|=\mbox{deg}_{i}^{-}(t)$. Define the set $\mathcal{K}_{i,t}=\left(\bigcup_{j\in\mathcal{N}_{i,t-1}^{+}}\mathcal{N}_{j,t-1}^{+}/\left\{ i\right\} \right)/\mathcal{N}_{i,t-1}^{+}$ as the set of \textit{followees of followees} of agent $i$ at time $t$, and the set $\bar{\mathcal{K}}_{i,t}=\mathcal{V}^{t}/\left\{\mathcal{K}_{i,t}\bigcup\mathcal{N}_{i,t-1}^{+}\bigcup i\right\}$ as the set of \textit{strangers} to agent $i$ at time $t$. The set of same type followees of followees is denoted as $\mathcal{K}^{\theta_{i}}_{i,t}$. Let $N_{i}^{s}(t)=|\mathcal{N}_{i,t}^{+,\theta_{i}}|, N_{i}^{d}(t)=\mbox{deg}_{i}^{+}(t)-N_{i}^{s}(t)$, $K_{i}(t)=|\mathcal{K}_{i,t}|$, $K_{i}^{s}(t)=|\mathcal{K}_{i,t}^{\theta_{i}}|$, and  $K_{i}^{d}(t)=K_{i}(t)-K_{i}^{s}(t)$.    

For $t\geq i$, if there are no followees of followees, then $i$ meets a stranger with uniform probability. If there are followees of followees, then $i$ meets a followee of followee with probability $\gamma_k$ (and uniform over this choice set) and meets an agent picked uniformly at random from the network with probability $1-\gamma_k$, i.e. 
\[
\mathbb{P}\left(m_{i}(t)\in\mathcal{K}_{i,t}\left|\mathcal{K}_{i,t}\neq\emptyset\right.\right)=\gamma_{\theta_{i}}+(1-\gamma_{\theta_{i}})\frac{K_{i}(t)}{t-1},
\]
and 
\[
\mathbb{P}\left(m_{i}(t)\in\bar{\mathcal{K}}_{i,t}\left|\mathcal{K}_{i,t}\neq\emptyset\right.\right)=(1-\gamma_{\theta_{i}})\frac{t-1-K_{i}(t)}{t-1},
\]
whereas $\mathbb{P}\left(m_{i}(t)\in\bar{\mathcal{K}}_{i,t}\left|\mathcal{K}_{i,t}= \emptyset\right.\right)=1$. Note that since a new agent is born at each time step, and such an agent is a stranger to all other agents, then we have $\mathbb{P}\left(\bar{\mathcal{K}}_{i,t} \neq \emptyset\right) = 1$ for any time step $t$. At each time $t$, $i$ meets a new agent; $i$ may or may not form a link to this agent. In addition some agents may meet agent $i$, but $i$ does not form links to those agents. The meeting process realizes the \textit{limited-observability} of agents over time, i.e. agent $i$ reasons about forming social ties with only the agent it meets at time $t$, and cannot observe the global network structure or the types of all agents it does not meet. This is different from the complete information and complete observability network formation games in \cite{ref4}, or the preferential attachment models in \cite{ref91} which assumes that the linking behavior of a newly born agent relies on its knowledge of all the degrees of other agents. 

%\begin{figure*}[t!]
%    \centering
%    \begin{subfigure}[b]{0.3\textwidth}
%        \centering
%        \includegraphics[width=2.25 in]{TNSE_1_final.eps}
%        \caption{$h_{1} = h_{2} = 0$.}
%    \end{subfigure}%
%    ~ 
%    \begin{subfigure}[b]{0.3\textwidth}
%        \centering
%        \includegraphics[width=2.25 in]{TNSE_2_final.eps}
%        \caption{$h_{1} = 1, h_{2} = \frac{2}{3}$.}
%    \end{subfigure}
%		~
%		\begin{subfigure}[b]{0.3\textwidth}
%        \centering
%        \includegraphics[width=2.25 in]{TNSE_3_final.eps}
%        \caption{$h_{1} = h_{2} = 1$.}
%    \end{subfigure}
%		\captionsetup{font= small}
%    \caption{Snapshots for a social network generated via our model at $t$ = 900 with $L_{1}^{*}(0) = L_{2}^{*}(0) = 3$, $p_{1}=p_{2}=\frac{1}{2}$ and $|%\Theta|=2$ for various values of the exogenous homophily indices. The types of agents are distinguished by colors. In (a) there is lots of cross-%linking, in (b) the network is very dispersed, in (c) the network is separated into two homogeneous subnetworks.}
%\end{figure*}

\subsection{The Linking Process}
When agent $i$ meets agent $m_{i}(t)$ at time $t$, it observes the type of $m_{i}(t)$ and decides whether or not to form a link with $m_{i}(t)$ (Thus true types of agents who meet are revealed). Agents draw benefits by linking to others but link formation is costly. Agents optimize so they form a new link if the marginal benefit of that link exceeds marginal cost. The marginal benefit depends on existing links and on types; we assume that linking to agents of the same type is (weakly) better than linking to agents of a different type -- this is homophily. For simplicity we assume marginal cost of linking is a constant $c$. 

We assume \textit{local externalities}, i.e. linking benefits do not flow to indirect contacts, so $i$ derives benefits only from its (direct) neighbors. For simplicity we assume that the utility depends only on the number of followees of the same type $N_{i}^{s}(t)$ and the number of followees of different types $N_{i}^{d}(t)$, and has the form  
\begin{equation}
u_{i}^{t}\left(G_{i,1}^{t}\right)=v_{\theta_{i}}\left(\alpha_{\theta_{i}}^{s}N_{i}^{s}(t)+\alpha_{\theta_{i}}^{d}N_{i}^{d}(t)\right)-c \sum_{j=i}^{t-1}a_{i}^{j},
\label{eq2}
\end{equation}
where $a_{i}^{t} \in \{0,1\}$ is the action of agent $i$ at time $t$; $a_{i}^{t}=1$ means that $i$ links to $m_{i}(t)$, and $a_{i}^{t}=0$ means that $i$ decides not to link to $m_{i}(t)$, and $\sum_{j=i}^{t-1}a_{i}^{j} = \left(N_{i}^{s}(t)+N_{i}^{d}(t)\right)$, $\alpha_{\theta_{i}}^{s} \geq \alpha_{\theta_{i}}^{d}, \forall \theta_{i} \in \Theta$ are the (type-specific) linking benefits, $v_{\theta_{i}}(x):x\rightarrow\mathbb{R}^{+}$ is the (type-specific) \textit{social benefit aggregation} function. For convenience, we assume that $v_{\theta_{i}}(x)$ is strictly concave \footnote{While we assume concavity of the utility function, our analysis applies to any saturating function, e.g. the \textit{sigmoid} function.}, twice continuously differentiable, monotonically increasing in $x$, and $v_{\theta_{i}}(0)=0$. That is, the marginal benefit of forming links diminishes as the number of links increases. This corresponds to the fact that agents do not form an infinite number of links in the network, but rather form a ``satisfactory" number of links \footnote{For instance, in citation networks, the number of references cited in a paper is finite and corresponds to the number of papers the authors need to acquire knowledge, yet the number of citations on a specific paper can be arbitrarily large.}. As shown in (\ref{eq3}), $i$ decides to link to $m_i(t)$ only if the marginal utility is positive. Note that $i$'s link formation decisions depend not only on the types of agents it meets, but also on the order with which it meets these agents. 

\begin{figure*}[!t]
\setcounter{mytempeqncnt}{\value{equation}} \setcounter{equation}{1}
\[a_{i}^{t}= \mathbb{I}_{\left\{\Delta u_{i}^{t}\left(m_{i}^{t}\right) > 0\right\}},\]
\begin{equation}
\Delta u_{i}^{t}\left(m_{i}^{t}\right) = v_{\theta_{i}}\left(\alpha_{\theta_{i}}^{s}\left(N_{i}^{s}(t)+\mathbb{I}_{\left\{\theta_{m_{i}(t)} = \theta_{i}\right\}}\right)+\alpha_{\theta_{i}}^{d}\left(N_{i}^{d}(t)+\mathbb{I}_{\left\{\theta_{m_{i}(t)} \neq \theta_{i}\right\}}\right)\right) - v_{\theta_{i}}\left(\alpha_{\theta_{i}}^{s}N_{i}^{s}(t)+\alpha_{\theta_{i}}^{d}N_{i}^{d}(t)\right) - c.
\label{eq3}
\end{equation}
\setcounter{equation}{\value{mytempeqncnt}+1} \hrulefill{}\vspace*{4pt}
 \end{figure*}

Agent $i$ will form a link to $m_i(t)$ exactly when doing so creates a network that yields higher utility for him. Agents are \textit{myopic} and form links without taking the future  into account. This seems to us to be a realistic description of behavior in social networks. 

\subsection{The Exogenous Homophily Index and Social Gregariousness}
We propose a novel definition of an \textit{exogenous homophily index} for type-$k$ agents, which is a variant of the well known \textit{Coleman homophily index} \cite{ref13}. For an agent $i$ of type $k$, let $\mathbb{N}_{i,t}^{+}$ be the space of all possible sets of followees of $i$ at time $t$. The exogenous homophily index of type-$k$ agents is the minimum fraction of same-type followees that type-$k$ agents desire. Thus, $h_{k}$ satisfies 
\begin{align}
\mathbb{P}\left(\left. \lim_{t\rightarrow\infty}\inf_{\mathcal{N}_{i,t}^{+} \in \mathbb{N}_{i,t}^{+}}\frac{N_{i}^{s}(t)}{\mbox{deg}_{i}^{+}(t)} = h_{k}\right|\theta_{i}=k, \gamma_{k} < 1\right) = 1,
\label{eq221}
\end{align}
where $0\leq h_{k}\leq1$. Note that this index is exogenous because it only depends on the agent's utility function and not the meeting process, thus it is independent from the network evolution path. When type-$k$ agents are indifferent to the types of agents they connect to, i.e. type-$k$ agents are extremely non-homophilic, then we have $\alpha_{k}^{s}=\alpha_{k}^{d}$, which means that $\lim_{t\rightarrow\infty}\inf_{\mathcal{N}_{i,t}^{+} \in \mathbb{N}_{i,t}^{+}}\frac{N_{i}^{s}(t)}{\mbox{deg}_{i}^{+}(t)}=0,\forall\theta_{i}=k$, i.e. agent $i$ can get satisfied by connecting to a set of followees that does not contain any same type followee. On the other hand, if agents restrict their links to same-type agents only, then we have $\alpha_{k}^{d} = 0$, and $\lim_{t\rightarrow\infty}\inf_{\mathcal{N}_{i,t}^{+} \in \mathbb{N}_{i,t}^{+}}\frac{N_{i}^{s}(t)}{\mbox{deg}_{i}^{+}(t)}=1,\forall\theta_{i}=k$. We can provide a closed form computation for  the exogenous homophily index by connecting it to social gregariousness. Define 
\begin{equation}
L_{\theta_{i}}^{*}(\alpha)=\arg\max_{x\in\mathbb{Z}}v_{\theta_{i}}\left(x\alpha_{\theta_{i}}^{s}+\alpha\right)-xc,\label{eq301}
\end{equation}
and 
\begin{equation}
\bar{L}_{\theta_{i}}^{*}(\alpha)=\arg\max_{x\in\mathbb{Z}}v_{\theta_{i}}\left(x\alpha_{\theta_{i}}^{d}+\alpha\right)-xc.\label{eq302}
\end{equation}
It follows from the concavity of $v_{\theta_{i}}(.)$ that $L_{\theta_{i}}^{*}(\alpha) < \infty$ and $\bar{L}_{\theta_{i}}^{*}(\alpha) < \infty, \forall \alpha \in \mathbb{R}$. It can be easily shown that 
\[
L_{\theta_{i}}^{*}\left(0\right)=\lim_{t\rightarrow\infty} \inf_{M_{i}(t)\in\mathcal{M}} \mbox{deg}_{i}^{+}(t).
\]
Thus, the parameter $L_{\theta_{i}}^{*}\left(0\right)$ represents the minimum number of links an agent will form with probability 1 in any (infinite) realization of the formation process; this captures social gregariousness. It can be shown that the exogenous homophily index of agent $i$ is given by \footnote{A detailed proof can be found in Appendix A.} 
\[
h_{\theta_{i}}=\frac{L_{\theta_{i}}^{*}\left(\alpha_{\theta_{i}}^{d}\bar{L}_{\theta_{i}}^{*}\left(0\right)\right)}{L_{\theta_{i}}^{*}\left(\alpha_{\theta_{i}}^{d}\bar{L}_{\theta_{i}}^{*}\left(0\right)\right)+\bar{L}_{\theta_{i}}^{*}\left(0\right)}.
\]
Thus, gregariousness and homophily are coupled. While each type of agents has an exogenous homophily index, which is network-independent, the actual fraction of same-type links an agent will realize depends on the meeting process and the individual agent's experience in the network. 

\subsection{Summary: Exogenous Parameters}
In summary, our model involves four exogenous parameters: 
\begin{itemize}
\item \textit{Homophily}: the homophily of type-$k$ agents is captured by the exogenous homophily index $h_{k}$. 
\item \textit{Social gregariousness}: the gregariousness of type-$k$ agents is captured by $L_{k}^{*}(0)$. 
\item \textit{Structural opportunism}: the parameter $\gamma_{k}$ reflects the extent of structural opportunism for type-$k$ agents. 
\item \textit{Type distribution}: the fraction of type-$k$ agents in a large network (relative population share) is given by $p_{k}$. 
\end{itemize}
Throughout this paper, we will use the notion of {\it first-order stochastic dominance} (FOSD). We say that a pdf (or pmf) $f(x)$ first-order stochastically dominates a pdf $g(x)$ if and only if $G(x) \geq F(x), \forall x,$ with strict inequality for some values of $x$, where $F(x)$ and $G(x)$ are the cumulative density functions. We write $X \succeq Y$ for the two random variables $X$ and $Y$ when $X$ first-order stochastically dominates $Y$. 

\section{Bonding capital and the egocentric value of networking}	
Our model captures several different forms of capital that an agent $i$ might acquire over time: 
\begin{itemize}
\item \textit{bonding capital} reflects agent $i$'s direct utility; 
\item \textit{popularity capital} reflects how other agents feel toward agent $i$; 
\item \textit{bridging capital} reflects agent $i$'s ability to connect other agents.
\end{itemize}
In this section we focus on bonding capital; we discuss popularity capital and bridging capital in following sections.   

\subsection{Ego network formation time}
Unlike previous works where link formation is a one-shot process (which is the case in \cite{ref9}, \cite{ref5}, \cite{ref18}-\cite{ref92}, \cite{ref15}, and \cite{ref17}), links (and consequently the bonding capital) are created over time in our model; individuals meet others and decide to establish connections until they forms a ``satisfactory" ego network/network of followees. The time needed for an agent to form its ultimate ego network/network of followees is obviously an important aspect of network formation. In this section, we characterize the bonding capital in terms of the time needed for the emergence of an ego network, as well as the utility resulting from bonding to that ego network.

Based on the definition of the utility function in (\ref{eq2}) and (\ref{eq3}), we know that there exists a finite number of connections after which an agent stops forming links. The time horizon over which the agent forms its ego network is random and depends on all the exogenous parameters. For an agent $i$, the ego network formation time (EFT) $T_{i}$ is a random function of the exogenous parameters, defined as 
\[
T_{i}\triangleq
\]
\begin{equation}
\inf \left\{t \in \mathbb{N}: u^{\tau}_{i}\left(G_{i,1}^{t}\right) \geq u^{\tau}_{i}\left(G_{i,1}^{t} \cup j\right), \forall \theta_{j} \in \Theta, \tau > t\right\} - i + 1.
\label{eq8}
\end{equation} 
We emphasize that $T_{i}$ is random: it depends on the network formation process. We characterize the time spent by an agent in the process of forming his ego network/network of followees in terms of the probability mass function (pmf) of $T_{i}$. \footnote{Note that $T_{i}$ can be thought of as the \textit{stopping time} of the linking process. This can be easily proven by showing that the event $T_{i}=T$ only depends on the history of meetings and link formation decisions up to time $T$.} We denote the pmf of $T_{i}$ as $f_{T_{i}}(T_{i}):\mathbb{N}\rightarrow[0,1].$ The expected ego network Formation Time (EEFT) $\overline{T}_{i}$ conditioned on agent $i$'s type is given by 
\begin{equation}
\overline{T}_{i}=\mathbb{E}_{\Omega_{\mathcal{G}}}\left[T_{i}\left|\theta_{i}\right.\right],
\label{eq9}
\end{equation}
where $\mathbb{E}_{\Omega_{\mathcal{G}}}\left[.\right]$ is the expectation operator, and the expectation is taken over all realizations of the graph process (we drop the subscript $\Omega_{\mathcal{G}}$ in the rest of our analysis). We say that agent $i$ is \textit{socially unsatisfied} if $T_{i}=\infty$; a socially unsatisfied agent is an agent that never satisfies its gregariousness requirements, i.e. agent $i$ is socially unsatisfied if $\mbox{deg}_{i}^{+}(t)<L_{\theta_{i}}^{*}(0),\forall t\geq i$. Such an agent keeps searching for followees forever. In the following Lemma, we specify the necessary and sufficient conditions under which a newly born agent has a positive probability of becoming socially unsatisfied. 
\begin{lem}
In order that agent $i$ becomes socially unsatisfied with positive probability, it is necessary and sufficient that $\gamma_{\theta_{i}}=1$ and $0<h_{\theta_{i}}<1$. 
\end{lem}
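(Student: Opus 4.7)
The plan is to prove the characterization by a contrapositive case analysis for necessity and by constructing an explicit trapping event of positive probability for sufficiency. Recall that $T_i = \infty$ is equivalent to agent $i$ failing to reach at least one of the two finite saturation targets $L^{*}_{\theta_i}(\alpha^{d}_{\theta_i}\bar{L}^{*}_{\theta_i}(0))$ (same-type followees) and $\bar{L}^{*}_{\theta_i}(0)$ (different-type followees); these are finite by concavity of $v_{\theta_i}$, and $0 < h_{\theta_i} < 1$ is equivalent to both targets being strictly positive.

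For necessity I would establish the contrapositive in three exhaustive cases. If $\gamma_{\theta_i} < 1$, then at every time $t$ the conditional probability of meeting a uniformly random agent in the entire network is at least $1-\gamma_{\theta_i}$; combined with the Bernoulli birth scheme, the per-step conditional probability of meeting a type-$k$ non-followee is bounded below by a strictly positive constant for all sufficiently large $t$ and every $k \in \Theta$. A conditional Borel--Cantelli argument then forces infinitely many meetings of each type almost surely, so both saturation targets are reached and $T_i < \infty$ almost surely. If $h_{\theta_i} = 0$, then $\alpha^{s}_{\theta_i} = \alpha^{d}_{\theta_i}$, so agent $i$ accepts every non-followee meeting regardless of type; since a meeting occurs at every step, the common saturation target is reached in at most $L^{*}_{\theta_i}(0)$ steps deterministically. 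Finally, if $h_{\theta_i} = 1$ and $\gamma_{\theta_i} = 1$, then $\alpha^{d}_{\theta_i} = 0$ and every type-$\theta_i$ agent links only to same-type agents; hence whenever $\mathcal{K}_{i,t}$ is non-empty it contains only type-$\theta_i$ agents, and during the initial window in which $\mathcal{K}_{i,t} = \emptyset$ the agent meets strangers and almost surely meets a type-$\theta_i$ stranger in finite time, after which it saturates by exploring the type-$\theta_i$ subnetwork via followees of followees.

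For sufficiency I would construct, under $\gamma_{\theta_i} = 1$ and $0 < h_{\theta_i} < 1$, an event $\mathcal{E}$ of strictly positive probability on which $T_i = \infty$. Pick any $k \neq \theta_i$ (which exists because $0 < h_{\theta_i}$ forces $|\Theta| \geq 2$) and let $\mathcal{E}$ be the event that (i) agent $i$'s first $\bar{L}^{*}_{\theta_i}(0)$ accepted meetings, taking place while $\mathcal{K}_{i,t}$ is still empty, are all with type-$k$ strangers, and (ii) each of those type-$k$ followees ultimately saturates its own ego network without ever linking to a type-$\theta_i$ agent, so that $\mathcal{K}_{i,t}$ contains only type-$k$ members. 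Each clause is a finite intersection of per-step events whose conditional probabilities are bounded below by elementary functions of $p_k$ and the meeting/linking rules (using the finiteness of each followee's saturation point), yielding $\mathbb{P}(\mathcal{E}) > 0$. On $\mathcal{E}$, once $\mathcal{K}_{i,t}$ is non-empty it contains only type-$k$ agents, and $\gamma_{\theta_i} = 1$ forces every subsequent meeting to come from $\mathcal{K}_{i,t}$; hence agent $i$ never meets a type-$\theta_i$ agent, the positive threshold $L^{*}_{\theta_i}(\alpha^{d}_{\theta_i}\bar{L}^{*}_{\theta_i}(0))$ is never reached, and $T_i = \infty$.

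The main obstacle I anticipate is making the Borel--Cantelli step in the $\gamma_{\theta_i} < 1$ case rigorous, because the per-step meeting events are not independent---they are adapted to the evolving graph state $G^{t}$. One needs to show that $(t-1-|\mathcal{K}_{i,t}|)/(t-1)$ stays bounded away from zero almost surely, which is plausible because $|\mathcal{K}_{i,t}|$ is dominated by the sum of outdegrees of agent $i$'s (finitely many) followees and each outdegree is a.s.\ finite by concavity, and then to invoke a conditional second Borel--Cantelli lemma for adapted sequences. A secondary technical step is to confirm, in the sufficiency construction, that the probability that each type-$k$ followee permanently avoids linking to a type-$\theta_i$ agent is strictly positive, which again reduces to a finite-time lower bound on that followee's meeting sequence up to its own (finite) saturation point.
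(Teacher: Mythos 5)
Your overall strategy matches the paper's: necessity by an exhaustive case analysis showing that outside the regime $\{\gamma_{\theta_i}=1,\ 0<h_{\theta_i}<1\}$ the agent has a uniformly positive per-step chance of meeting the agents it still needs and therefore saturates almost surely, and sufficiency by exhibiting a positive-probability trapping event in which the agent's first $\bar{L}^{*}_{\theta_i}(0)$ followees are all of different type and none of them ever acquires a type-$\theta_i$ followee, so that with $\gamma_{\theta_i}=1$ the followees-of-followees pool never contains a type-$\theta_i$ agent and the strictly positive same-type target $L^{*}_{\theta_i}(\alpha^{d}_{\theta_i}\bar{L}^{*}_{\theta_i}(0))$ is never met. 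The necessity cases are all sound (and you are right, and more candid than the paper, that the $\gamma_{\theta_i}<1$ case needs a conditional Borel--Cantelli argument for an adapted sequence; the boundedness of $K_i(t)$ by the finite sum of the followees' outdegrees is exactly what makes it go through).

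There is, however, one step in your sufficiency construction that would fail as literally stated: you require the first $\bar{L}^{*}_{\theta_i}(0)$ accepted meetings to be with \emph{strangers}, ``taking place while $\mathcal{K}_{i,t}$ is still empty.'' With $\gamma_{\theta_i}=1$ the agent is forced into the followees-of-followees pool the instant that pool is non-empty, and in many parameter configurations it is non-empty immediately after the first link with probability one: if every other type has $h_{k}<1$, then every agent born before $i$ has linked to someone at its own birth date, so whichever agent $i$ first follows already has a followee outside $\{i\}\cup\mathcal{N}^{+}_{i,t}$, and $\mathcal{K}_{i,i+1}\neq\emptyset$ almost surely. For $\bar{L}^{*}_{\theta_i}(0)\geq 2$ your event $\mathcal{E}$ then has probability zero. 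The fix is to drop the requirement on the choice set: constrain only the \emph{types} of the first $\bar{L}^{*}_{\theta_i}(0)$ met agents (all $\neq\theta_i$) together with the requirement that those followees' own followee sets never contain a type-$\theta_i$ agent --- this is exactly the event the paper lower-bounds, and the subsequent meetings may then come from either pool, since a followee-of-followee of a followee with no type-$\theta_i$ followees is automatically not of type $\theta_i$. With that repair your argument coincides with the paper's proof.
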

\begin{proof} 
See Appendix B. \, \, \IEEEQEDhere 
\end{proof}
This Lemma says that an agent gets unsatisfied if and only if it is not extremely homophilic and at the same time does not explore the strangers' choice set in its meeting process. In such a scenario, an agent's meeting process is governed by the actions taken previously by his neighborhood, which may not allow that agent to meet with other agents of diverse types. Unless otherwise stated, we assume that $\gamma_{k}<1,\forall k\in\Theta,$ thus agents never get trapped and all agents have a finite EFT. In the rest of this subsection, we characterize the EFT. We start by characterizing the EFT for extreme cases of agents' homophily in the following Theorem. 
\begin{thm}
\begin{enumerate}
\item If $h_{k}=0,\forall k\in\Theta$, then the EFT for agent $i$ is equal to $T_{i}=L_{\theta_{i}}^{*}(0)$ almost surely. 
\item If $h_{k}=1, \forall k \in \Theta$, then the distribution of the EFT for every agent $i$ conditioned on its type converges to a {\it steady-state distribution}, i.e. $\lim_{i\rightarrow \infty} f_{T_{i}}\left(T_{i}\left|\theta_{i}=k\right.\right) \rightarrow f^{k}_{T}(T),$ and the EEFT for an agent $i$ conditioned on its type $\overline{T}_{i} = \mathbb{E}\left\{T_{i}\left|\theta_{i}=k\right.\right\}$ converges as follows  
\[\lim_{i \rightarrow \infty} \overline{T}_{i} = \frac{1}{p_{k}} + \frac{L_{k}^{*}(\alpha^{s}_{k})}{(1-\gamma_{k}) \, p_{k}+\gamma_{k}}.\] 
\end{enumerate}
\end{thm}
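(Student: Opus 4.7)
The key observation driving the proof is that the two extreme values of homophily produce qualitatively different dynamics: part (1) is deterministic because every meeting yields an acceptable link, while part (2) requires an asymptotic argument in which the per-step meeting probabilities converge to constants. For part (1), I would first note that $h_{k}=0$ is equivalent to $\alpha_{k}^{s}=\alpha_{k}^{d}$, so the utility (\ref{eq2}) depends only on the total outdegree and the optimization in (\ref{eq301}) gives an optimal outdegree of $L_{\theta_{i}}^{*}(0)$. The marginal-utility condition (\ref{eq3}) then makes $a_{i}^{t}=1$ for every meeting with an agent not already in $\mathcal{N}_{i,t-1}^{+}$, as long as the current outdegree is below $L_{\theta_{i}}^{*}(0)$. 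Since a new agent is born each step and is always a stranger to $i$, $\bar{\mathcal{K}}_{i,t}$ is non-empty for every $t\geq i$; hence $m_{i}(t)$ is always a non-followee and one link is added per unit time. The ego network is completed in exactly $L_{\theta_{i}}^{*}(0)$ steps, deterministically.

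For part (2), $h_{k}=1$ for all $k$ corresponds (via the closed-form homophily-index expression) to $\alpha_{k}^{d}=0$, so the utility depends only on $N_{i}^{s}(t)$ and agents never form different-type links; the stopping condition then demands exactly $N_{i}^{s}(T_{i})=L_{k}^{*}(0)$ same-type followees. I would decompose $T_{i}=\tau_{1}+\tau_{2}+\cdots+\tau_{L_{k}^{*}(0)}$, where $\tau_{j}$ is the number of steps between the $(j-1)$-th and the $j$-th same-type acceptable meeting. For $\tau_{1}$, agent $i$ has no followees, so $\mathcal{K}_{i,t}$ is empty and each meeting is drawn uniformly from $\mathcal{V}^{t}\setminus\{i\}$; the probability of a same-type draw equals $|\mathcal{V}_{k}^{t}\setminus\{i\}|/(t-1)$, which by Borel's law converges almost surely to $p_{k}$. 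For $\tau_{j}$ with $j\geq 2$, the set $\mathcal{K}_{i,t}$ contains only type-$k$ agents (because all of $i$'s same-type followees, being themselves homophilic, link only to same-type agents); applying the meeting probabilities from Section 2.3 and using $K_{i}(t)/(t-1)\to 0$, the per-step probability of a same-type meeting converges to $\gamma_{k}+(1-\gamma_{k})p_{k}=(1-\gamma_{k})p_{k}+\gamma_{k}$.

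Putting the pieces together, $\tau_{1}$ converges in distribution to $\mathrm{Geom}(p_{k})$ and each $\tau_{j}$ for $j\geq 2$ to $\mathrm{Geom}((1-\gamma_{k})p_{k}+\gamma_{k})$, with the $\tau_{j}$ becoming mutually independent in the limit because the per-step success probabilities become non-random. Taking expectations and using the identity $L_{k}^{*}(\alpha_{k}^{s})=L_{k}^{*}(0)-1$, which follows from the change of variable $y=x+1$ inside the arg max in (\ref{eq301}), yields the claimed limit
\[
\lim_{i\to\infty}\overline{T}_{i}=\frac{1}{p_{k}}+\frac{L_{k}^{*}(0)-1}{(1-\gamma_{k})p_{k}+\gamma_{k}}=\frac{1}{p_{k}}+\frac{L_{k}^{*}(\alpha_{k}^{s})}{(1-\gamma_{k})p_{k}+\gamma_{k}}.
\]
The same argument establishes that $f_{T_{i}}(\,\cdot\,|\theta_{i}=k)$ converges to the pmf $f^{k}_{T}$ of the corresponding sum of independent geometrics.

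The hard part is the rigorous asymptotic control in part (2): the per-step success probabilities depend on the random current network state through $K_{i}(t)$, $\mbox{deg}_{i}^{+}(t)$, and the fraction of type-$k$ agents in $\bar{\mathcal{K}}_{i,t}$, none of which is deterministic for finite $i$. I would handle this by conditioning on the almost-sure Borel-law event $|\mathcal{V}_{k}^{t}|/t\to p_{k}$; by observing that every agent in $\mathcal{K}_{i,t}$ is a followee of one of $i$'s at most $L_{k}^{*}(0)$ followees and that each agent forms at most $\max_{\ell}L_{\ell}^{*}(0)$ outgoing links, so both $K_{i}(t)$ and $\mbox{deg}_{i}^{+}(t)$ are uniformly bounded while $t-1\to\infty$; and by verifying (using Lemma 1 in the regime $\gamma_{k}<1$) that $\mathcal{K}_{i,t}$ becomes non-empty with probability one soon after $i$'s first link, since $i$'s followee will eventually form its own same-type links. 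A coupling between the finite-$t$ meeting process and its limiting Bernoulli-trial counterpart then yields both the distributional convergence and the convergence of means.
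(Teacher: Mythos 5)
Your proposal is correct and follows essentially the same route as the paper's Appendix C: part (1) via the observation that every meeting produces a new link until saturation at $L^{*}_{\theta_{i}}(0)$, and part (2) via the decomposition of $T_{i}$ into inter-link waiting times whose per-step success probabilities converge to $p_{k}$ for the first link and $(1-\gamma_{k})p_{k}+\gamma_{k}$ for subsequent links, combined with the identity $L^{*}_{k}(\alpha^{s}_{k})=L^{*}_{k}(0)-1$. The only cosmetic difference is that you justify the limiting distribution by a coupling argument, whereas the paper identifies it explicitly as a geometric convolved with a negative binomial via moment generating functions and Scheff\'e's lemma.
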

\begin{proof} 
See Appendix C. \, \, \IEEEQEDhere 
\end{proof}
\begin{figure*}[t!]
    \centering
		    \begin{subfigure}[b]{0.3\textwidth}
        \centering
        \includegraphics[width=2.5 in]{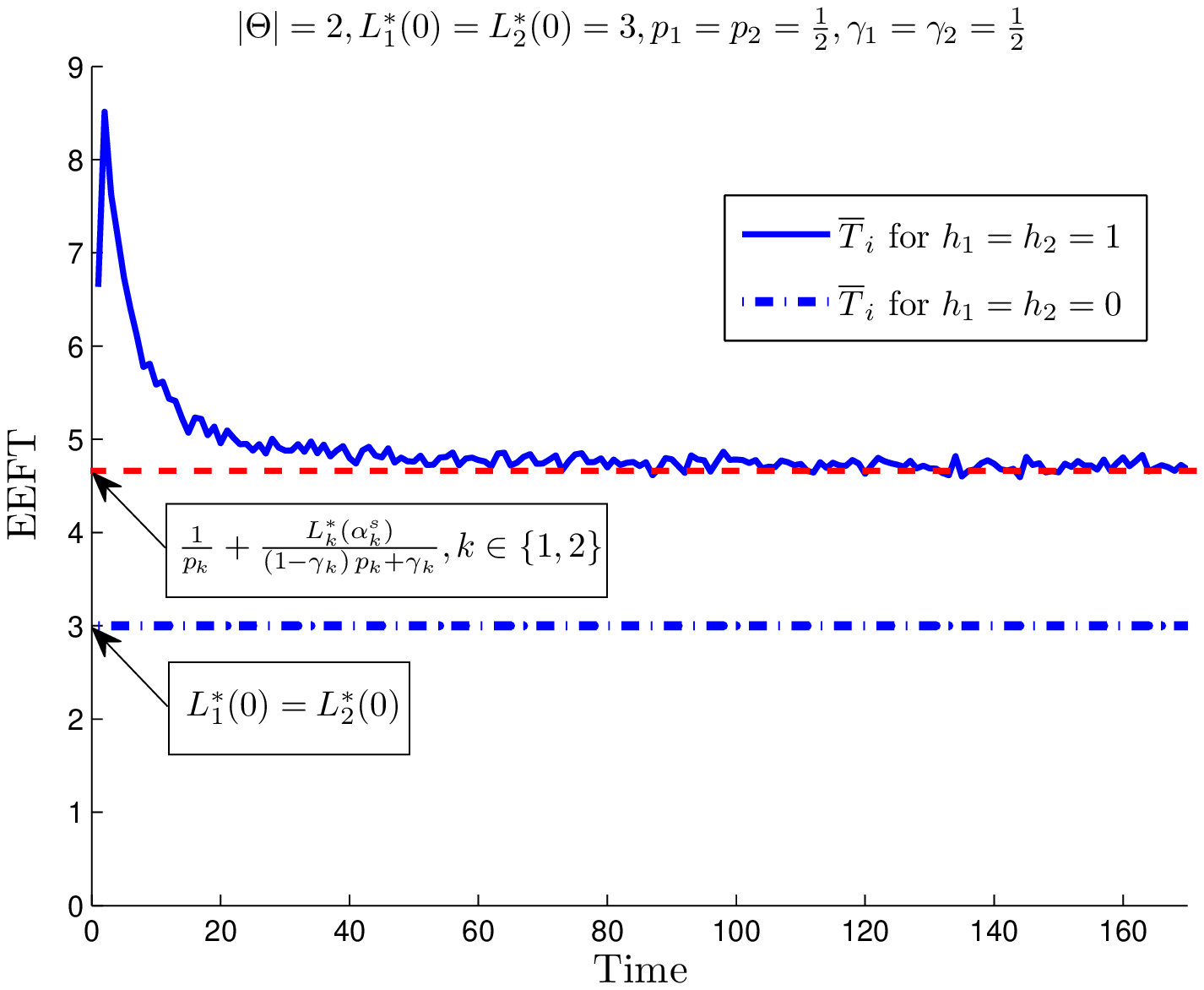}
        \caption{Impact of homophily on the EEFT.}
    \end{subfigure}
    ~ 
    \begin{subfigure}[b]{0.3\textwidth}
        \centering
        \includegraphics[width=2.5 in]{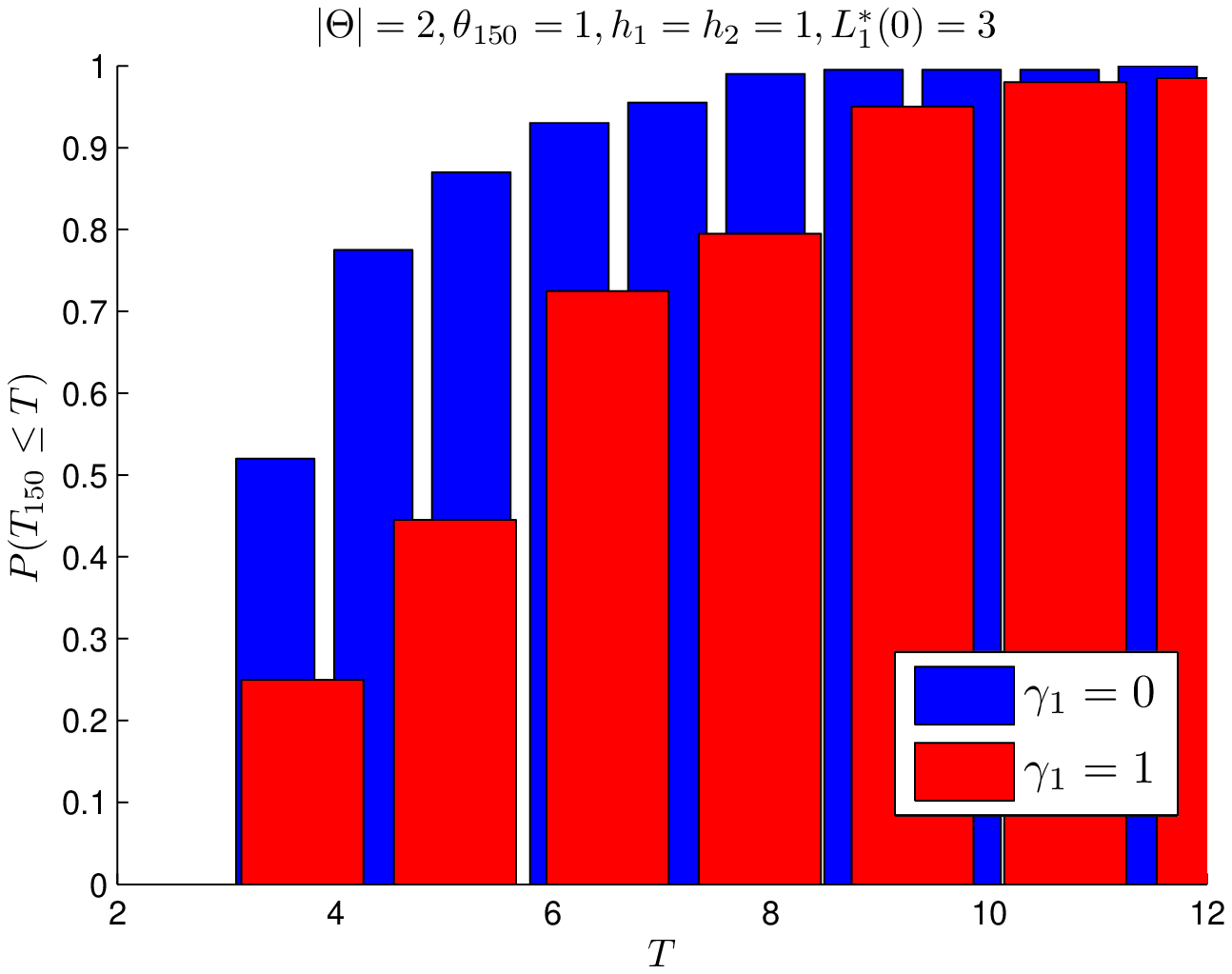}
        \caption{The EFT cdf for $\gamma_{1} = 0$ and $\gamma_{1} = 1$.}
    \end{subfigure}
    ~ 
     \begin{subfigure}[b]{0.3\textwidth}
        \centering
        \includegraphics[width=2.5 in]{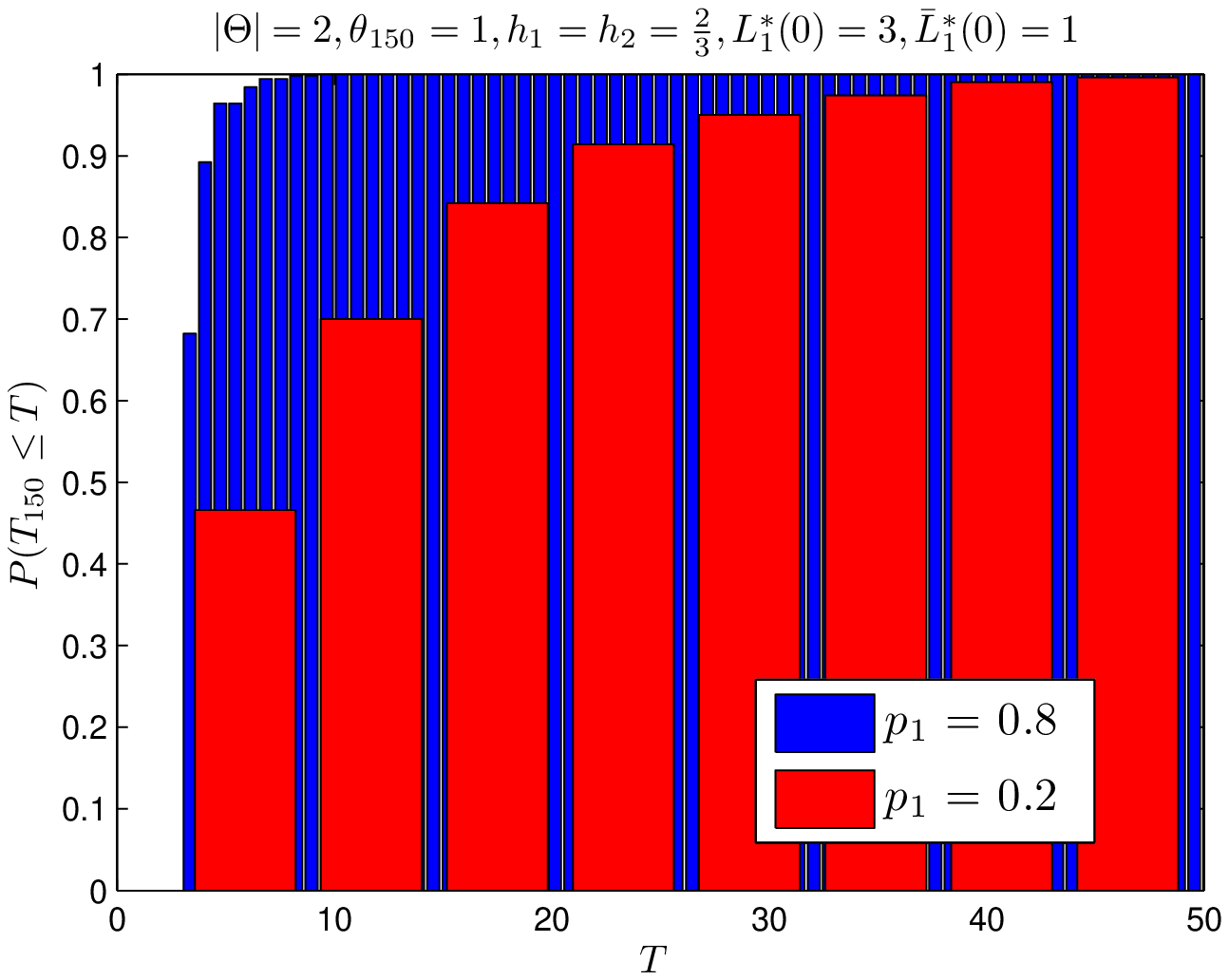}
        \caption{The EFT cdf for $p_{1} = 0.2$ and $p_{1} = 0.8$.}
    \end{subfigure}%
    \caption{Stochastic ordering of the EFT with respect to the exogenous parameters.}
\end{figure*}
Thus, the EFT for agents joining a large network only depends on their types. Theorem 1 says that when the agents are \textit{not} homophilic, there is no uncertainty in the ego network formation process, then both the number of links and the EFT are equal to $L_{\theta_{i}}^{*}(0)$ almost surely. This ``deterministic" EFT is independent of the network, and only depends on the agent's gregariousness. That is, if $h_{k}=0,\forall k\in\Theta$, then an agent's journey in the network is determined by how it values linking, and not by the network structure or the actions of others. If agents are more \textit{sociable}, i.e. are more gregarious, then they will spend more time searching for followees, yet this time is deterministic and only depends on parameters that are determined by the agent and not the network. On the other hand, if agents are extremely homophilic, then the agent's journey in the network will depend randomly on meetings with other agents with whom they do not form any links. It is clear from Theorem 1 that the EEFT of extremely homophilic agents depends on the type distribution and opportunism, in addition to gregariousness. We emphasize these dependencies in the following corollary.
\begin{crlry} \textit{(Gregarious agents and minorities search for followees longer, opportunistic agents search shorter)} 
If $h_{k}=1,\forall k\in\Theta$, $\tilde{L}_{\theta_{i}}^{*}(0)\geq L_{\theta_{i}}^{*}(0)$, $\tilde{p}_{\theta_{i}}\geq p_{\theta_{i}}$, and $\tilde{\gamma}_{\theta_{i}}\geq\gamma_{\theta_{i}}$, then for an agent $i$ born in an asymptotically large network we have that 
\[
T_{i}\left(p_{\theta_{i}},\gamma_{\theta_{i}},L_{\theta_{i}}^{*}(0)\right)\preceq T_{i}\left(p_{\theta_{i}},\gamma_{\theta_{i}},\tilde{L}_{\theta_{i}}^{*}(0)\right),
\]
\[
T_{i}\left(p_{\theta_{i}},\gamma_{\theta_{i}},L_{\theta_{i}}^{*}(0)\right)\succeq T_{i}\left(\tilde{p}_{\theta_{i}},\gamma_{\theta_{i}},L_{\theta_{i}}^{*}(0)\right),
\]
\[
T_{i}\left(p_{\theta_{i}},\gamma_{\theta_{i}},L_{\theta_{i}}^{*}(0)\right)\succeq T_{i}\left(p_{\theta_{i}},\tilde{\gamma}_{\theta_{i}},L_{\theta_{i}}^{*}(0)\right),
\]
where $T_{i}\left(p_{\theta_{i}},\gamma_{\theta_{i}},L_{\theta_{i}}^{*}(0)\right)$ is the EFT associated with the exogenous parameter tuple $\left(p_{\theta_{i}},\gamma_{\theta_{i}},L_{\theta_{i}}^{*}(0)\right).$
\end{crlry}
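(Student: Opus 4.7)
The plan is to leverage the proof structure of Theorem 1, part 2, to represent the EFT as a sum of independent geometric waiting times under extreme homophily, and then apply standard stochastic dominance properties termwise.

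First, I would formalize the following decomposition. For agent $i$ with $h_{\theta_i} = 1$ born into an asymptotically large network, let $T_i^{(j)}$ denote the number of elapsed time steps between $i$'s $(j-1)$-st and $j$-th successful link formation, for $j = 1, \ldots, L^{*}_{\theta_i}(0)$. Because every link in the mature network is same-type (since $h_{\theta_i} = 1$), any followee-of-followee that $i$ meets has type $\theta_i$ and is linked to with probability one. Combined with the asymptotic vanishing of the finite-network correction $K_i(t)/(t-1)$ (established en route to Theorem 1, part 2), the per-meeting probability of a successful same-type link reduces to $p_{\theta_i}$ for the very first link (when the followee-of-followee set is empty and $i$ can only meet strangers), and to $\pi(\gamma_{\theta_i}, p_{\theta_i}) := \gamma_{\theta_i} + (1-\gamma_{\theta_i}) p_{\theta_i}$ thereafter. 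Hence $T_i^{(1)} \sim \text{Geom}(p_{\theta_i})$, and $T_i^{(j)} \sim \text{Geom}(\pi(\gamma_{\theta_i}, p_{\theta_i}))$ for $j \geq 2$, mutually independent in the steady-state limit, with $T_i = \sum_{j=1}^{L^{*}_{\theta_i}(0)} T_i^{(j)}$. Note that this is consistent with Theorem 1, part 2, via $L^{*}_{\theta_i}(\alpha^{s}_{\theta_i}) = L^{*}_{\theta_i}(0) - 1$.

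Second, I would invoke two elementary facts: independent sums preserve FOSD (if $X_j \preceq Y_j$ are independent across $j$, then $\sum_j X_j \preceq \sum_j Y_j$), and a geometric random variable $\text{Geom}(p)$ is FOSD-decreasing in its success parameter $p$. Each claim of the corollary then follows by a coupling argument on a common probability space. For gregariousness, replacing $L^{*}_{\theta_i}(0)$ by $\tilde{L}^{*}_{\theta_i}(0) \geq L^{*}_{\theta_i}(0)$ appends additional strictly positive independent geometric terms to the sum, yielding $T_i \preceq \tilde{T}_i$. For the type share, $\tilde{p}_{\theta_i} \geq p_{\theta_i}$ raises the success probability of every $T_i^{(j)}$ (both $p_{\theta_i}$ and $\pi$ are increasing in $p_{\theta_i}$), so each term is FOSD-smaller and the sum is FOSD-smaller. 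For opportunism, $\tilde{\gamma}_{\theta_i} \geq \gamma_{\theta_i}$ leaves $T_i^{(1)}$ unchanged but raises $\pi$ (provided $p_{\theta_i} < 1$), so the remaining $L^{*}_{\theta_i}(0) - 1$ terms are FOSD-smaller.

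The main obstacle is the first step: rigorously justifying the independence and exact geometric form of the waiting times in the asymptotic regime. Outside of steady state, the meeting probabilities depend on the random variable $K_i(t)$, inducing dependence across the $T_i^{(j)}$. The argument must appeal to the convergence results used in Theorem 1, part 2 --- namely the almost-sure vanishing of $K_i(t)/(t-1)$ and the i.i.d.\ Bernoulli$(p_{\theta_i})$ nature of stranger-meetings --- to collapse the dynamics to the clean i.i.d.\ Bernoulli structure above. Once this reduction is secured, the three FOSD claims are immediate consequences of the termwise coupling.
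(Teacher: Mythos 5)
Your proposal is correct and follows essentially the same route as the paper's Appendix D: both decompose $T_i$ into a first geometric waiting time with success probability $p_{\theta_i}$ plus independent geometric waiting times with success probability $\gamma_{\theta_i}+(1-\gamma_{\theta_i})p_{\theta_i}$ (justified by the asymptotic analysis in the proof of Theorem 1), and both conclude via the two facts that independent sums preserve FOSD and that adding further non-negative i.i.d. terms yields FOSD domination. The only cosmetic difference is that the paper verifies the parameter monotonicity by differentiating the explicit geometric and negative-binomial (regularized incomplete beta) cdfs, whereas you argue termwise on each geometric factor, which is equivalent.
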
 
\begin{proof} 
See Appendix D. \, \, \IEEEQEDhere
\end{proof} 
Note that stochastic dominance implies domination in mean. That is, if $T_{i}\left(p_{\theta_{i}},\gamma_{\theta_{i}},L_{\theta_{i}}^{*}(0)\right)\preceq T_{i}\left(p_{\theta_{i}},\gamma_{\theta_{i}},\tilde{L}_{\theta_{i}}^{*}(0)\right),$ then $\overline{T}_{i}\left(p_{\theta_{i}},\gamma_{\theta_{i}},L_{ \theta_{i}}^{*}(0)\right)\leq\overline{T}_{i}\left(p_{\theta_{i}},\gamma_{\theta_{i}},\tilde{L}_{\theta_{i}}^{*}(0)\right).$ Moreover, stochastic dominance implies domination of the expectation of any increasing function of the EFT; if the bonding capital is a decreasing function of the EFT, then one can infer the impact of the exogenous parameters on the expected bonding capital directly from the results of Corollary 1.

Corollary 1 says that in homophilic societies, the EFT of a social group \textit{increases (in the sense of FOSD)} as the gregariousness of that group increases. This is intuitive since the more followees an agent is willing to follow, the longer it takes to find those followees. Moreover, agents belonging to minorities are expected to spend more time in the link formation process. This is again intuitive since when the fraction of similar-type agents in the population is small, each agent would need to meet a longer sequence of agents in order to find similar-type followees. Finally, the EFT decreases in the sense of FOSD as structural opportunism increases. This is because once the agent becomes attached to a network component of similar-type agents, it is then better to be opportunistic and keep meeting the followees of followees who are guaranteed to be similar-type agents, rather than meeting strangers with uncertain types. In this context, structural opportunism captures what Mayhew calls ``structuralist" homophily effects in \cite{ref93}, and what Kossinets and Watts refer to as ``induced homophily" in \cite{ref90}, which corresponds to the fact that similar-type agents are more likely to ``meet" when agents are opportunistic. 

Note that the meeting process, encoded in the structural opportunism, plays a more crucial role for ``minor" types in homophilic societies as we show in the following Corollary.

\begin{crlry} 
If $h_{k}=1,\forall k\in\Theta$, then for an agent $i$ born in an asymptotically large network, the following is satisfied: 
\begin{enumerate}
\item If agent $i$ belongs to a minor type ($p_{\theta_{i}}\rightarrow 0$), then we have that $\lim_{\gamma_{\theta_{i}}\rightarrow 1}\overline{T}_{i} = \frac{1}{p_{\theta_{i}}}+L_{\theta_{i}}^{*}(0),$ and $\lim_{\gamma_{\theta_{i}}\rightarrow 0}\overline{T}_{i} = \frac{L_{\theta_{i}}^{*}(0)}{p_{\theta_{i}}}.$ 
\item If agent $i$ belongs to a major type ($p_{\theta_{i}}\rightarrow 1$), then for every $\gamma_{\theta_{i}}$ we have that $\lim_{\gamma_{\theta_{i}}\rightarrow 0}\overline{T}_{i} = L_{\theta_{i}}^{*}(0).$ \, \, \IEEEQEDhere
\end{enumerate}
\end{crlry}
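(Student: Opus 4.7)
The plan is to treat Corollary 2 as a direct consequence of Theorem 1. Under $h_k = 1$, Theorem 1 already supplies the asymptotic EEFT
\[
\lim_{i\to\infty}\overline{T}_i = \frac{1}{p_k} + \frac{L^*_k(\alpha^s_k)}{(1-\gamma_k)p_k + \gamma_k},
\]
so the proof reduces to evaluating this expression in the four $(p_{\theta_i}, \gamma_{\theta_i})$ regimes named in the statement.

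A brief preliminary is the identity $L^*_k(\alpha^s_k) = L^*_k(0) - 1$. This is immediate from the definition of $L^*_{\theta_i}(\alpha)$ in the preceding section: the substitution $y = x+1$ turns $\arg\max_x v_k(x\alpha^s_k + \alpha^s_k) - xc$ into $\arg\max_y v_k(y\alpha^s_k) - yc + c$, whose maximizer is by definition $L^*_k(0)$, shifted by one. I would insert this identity into the Theorem~1 formula before taking any limits, so that everything is expressed in terms of the gregariousness parameter $L^*_{\theta_i}(0)$ that appears in the statement of the Corollary.

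For Case 1 (minor type, $p_{\theta_i} \to 0$): taking $\gamma_{\theta_i} \to 1$ collapses the denominator of the second summand to $1$, giving $\overline{T}_i \to 1/p_{\theta_i} + L^*_{\theta_i}(0) - 1$, which to leading order as $p_{\theta_i} \to 0$ is $1/p_{\theta_i} + L^*_{\theta_i}(0)$. Taking $\gamma_{\theta_i} \to 0$ collapses the denominator to $p_{\theta_i}$, so the two summands combine exactly to $\bigl(1 + L^*_{\theta_i}(0) - 1\bigr)/p_{\theta_i} = L^*_{\theta_i}(0)/p_{\theta_i}$. For Case 2 (major type, $p_{\theta_i} \to 1$): the denominator of the second term becomes $(1-\gamma_{\theta_i}) + \gamma_{\theta_i} = 1$ regardless of $\gamma_{\theta_i}$, and the expression collapses to $1 + L^*_{\theta_i}(0) - 1 = L^*_{\theta_i}(0)$ uniformly in $\gamma_{\theta_i}$, matching the stated conclusion.

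There is essentially no obstacle beyond Theorem 1 itself and the small algebraic identity above; the Corollary is a substitution calculation. The only subtlety worth flagging is the additive $-1$ in the $\gamma_{\theta_i} \to 1$ subcase of Case 1, which is swamped by the divergent $1/p_{\theta_i}$ term and is absorbed into the leading-order statement. The qualitative content of this bookkeeping is that in homophilic societies opportunism compresses a minority's expected search from $\Theta(L^*_{\theta_i}(0)/p_{\theta_i})$ down to $\Theta(1/p_{\theta_i})$, a multiplicative saving by a factor $L^*_{\theta_i}(0)$, whereas it leaves a majority's search time pinned at $L^*_{\theta_i}(0)$.
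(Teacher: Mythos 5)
Your proposal is correct and follows essentially the same route as the paper, which offers no separate proof for this corollary precisely because it is a direct substitution into the Theorem~1 formula, using the identity $L^*_{\theta_i}(\alpha^s_{\theta_i}) = L^*_{\theta_i}(0)-1$ that the paper itself records in Appendix~C. Your flagging of the additive $-1$ in the $\gamma_{\theta_i}\to 1$ subcase is apt — the exact limit is $\tfrac{1}{p_{\theta_i}} + L^*_{\theta_i}(0) - 1$, and the paper's stated form is only the leading-order expression as $p_{\theta_i}\to 0$.
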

Thus, if minor types \textit{exploit} their current connections to form new links, their EEFT becomes inversely proportional to their population size $p_{\theta_{i}}$ with an {\it additive} gregariousness parameter, whereas if the minor types \textit{explore} the network by meeting strangers, their EEFT becomes inversely proportional to their population size $p_{\theta_{i}}$ with a {\it multiplicative} gregariousness parameter. Therefore, minor types need to be more opportunistic for their EEFT to decrease. On the other hand, agents belonging to a ``major" type with $p_{\theta_{i}}\rightarrow1$ have an EEFT $\overline{T}_{i}\rightarrow L_{\theta_{i}}^{*}(0)$ regardless of their level of opportunism. Thus, the EFT of major types is less affected by the meeting process.

Fig. 3 reports simulations that illustrate the results of Theorem 1 and Corollary 1. In Fig. 3(a), we can see that the EEFT in an extremely homophilic society is greater than that of a non-homophilic society, and as the network grows, the EEFT for homophilic agents converges to the value specified by Theorem 1. In Fig. 3(b), we plot the cdf of the EFT for homophilic agents with different levels of opportunism, and we can see that the EFT of non opportunistic agents stochastically dominates that of opportunistic agents. Similarly, we demonstrate the impact of the type distribution in Fig. 3(c).
    
\subsection{Ego network characterization: homophily and structural holes}
In the previous subsection we have characterized the time needed for individuals to form their local ego networks, and thus realize a bonding capital. A common aspect in the definitions of bonding capital by Putnam \cite{refcap05}, Bourdieu \cite{ref370}, Coleman \cite{refcap01}, Fischer \cite{reffish}, and Cobb \cite{refcobb}, is that it corresponds to the \textit{social support} that individuals obtain through networking. Social support includes companionship, information exchange, emotional and instrumental support. In our model, agents derive social support from their followees; and such support is larger when the agent and its followees are of the same type, i.e. if an agent connects with same-type agents, they will acquire more relevant information \cite{refgay}. Thus, the types of agents in an agent's ego network determine its bonding capital. Based on this, we consider an agent's utility function, which represents the agent's net aggregate linking benefit, as an operational measure for the bonding capital accumulated by that agent. The bonding capital accumulated by an agent $i$ at time $t$ is simply measured by its utility $u_{i}^{t}$, whereas the bonding capital of type-$k$ agent is measured by their average utility $U_{k}^{t}=\frac{1}{\left|\mathcal{V}_{k}^{t}\right|}\sum_{j\in\mathcal{V}_{k}^{t}}u_{j}^{t},$ and the bonding capital of all agents in the network is $U^{t}=\frac{1}{t}\sum_{i\in\mathcal{V}^{t}}u_{i}^{t}$. 

We note that a larger ego network does not imply greater social or informational support. In fact, an agent might establish an ego network that comprises many different-type agents and will then have to pay the cost (time, effort, etc) to maintain the links with them while getting little social/informational support. For instance, a Twitter user who follows many accounts spreading information that is not relevant to the user's interests leads to low bonding capital: the user then spends time following such accounts but gets low informational benefits. The utility of each agent in a steady-state ego network is a measure for the support that an individual can obtain from other individuals in his local personal network. In the following Theorem, we show that maximum bonding capital is only achieved in societies with extreme homophily. 

\begin{thm}
(Homophily induces structural holes) Assume $h_{l}>0, \forall\ell\in\Theta$. In order that the total average utility $U^{t}$ converges to the optimal value $\overline{U}^{*}$ as the network grows without bound it is necessary and sufficient that $h_{l}=1,\forall\ell\in\Theta.$ If this is the case then the network at any time step will be disconnected almost surely and have at least $|\Theta|$ non-singleton components. 
\end{thm}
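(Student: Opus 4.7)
The plan is to proceed in three stages: (i) establish that $h_l=1$ for every $l\in\Theta$ suffices for $U^t\to\overline{U}^*$, (ii) show via the contrapositive that it is also necessary, and (iii) derive the structural-holes statement as a direct corollary of (i). The pivot for every stage is the equivalence, readable off the homophily formula, that $h_l=1$ is equivalent to $\bar{L}_l^*(0)=0$, which is equivalent to $v_l(\alpha_l^d)\leq c$. By concavity of $v_l$, this in turn is equivalent to the statement that the marginal utility of \emph{any} different-type link, regardless of the current $N^s$ and $N^d$, is non-positive; hence under $h_l=1$ no different-type link is ever formed. Conversely, under $h_l<1$ the marginal utility of the \emph{first} different-type link (i.e.\ with $N^s=N^d=0$) is strictly positive.

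For sufficiency, assume $h_l=1$ for every $l$. By the pivot, every edge of $G^t$ connects two same-type agents. Combining this with Theorem~1(2), every agent $i$ born in the asymptotic network almost surely terminates with exactly $N_i^s=L_{\theta_i}^*(0)$ same-type followees and $N_i^d=0$, yielding the single-agent maximum utility $u^*_k \eqdef v_k(\alpha_k^s L_k^*(0))-c\,L_k^*(0)$. Since the expected EFT is finite, the fraction of time-$t$ agents that have not yet reached their terminal ego network is $O(1/t)$, and applying the strong law of large numbers to the i.i.d.\ Bernoulli type process yields $U^t\to\sum_k p_k u^*_k \eqdef \overline{U}^*$ almost surely.

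For necessity I prove the contrapositive: if $h_k<1$ for some $k$ with $p_k\in(0,1)$, then $\limsup_t U^t<\overline{U}^*$. A newly born type-$k$ agent has an empty followee set, so it has an empty followees-of-followees set and its first meeting $m_i(i)$ is drawn uniformly from $\{1,\dots,i-1\}$; by Borel's law of large numbers, this meeting is different-type with probability $1-p_k+o(1)$ almost surely. By the pivot, such an agent forms the link. Any terminal ego network with $N^d\geq 1$ then satisfies
\[
v_k(\alpha_k^s N^s+\alpha_k^d N^d)-c(N^s+N^d) < v_k\bigl(\alpha_k^s(N^s+N^d)\bigr)-c(N^s+N^d) \leq u^*_k,
\]
the first inequality being strict because $\alpha_k^s>\alpha_k^d$ and $v_k$ is strictly increasing, and the second holding by definition of $L_k^*(0)$. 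Concavity of $v_k$ bounds the agent's out-degree, so the set of feasible terminal ego networks is finite and the utility shortfall above is uniformly bounded below by some $\epsilon_k>0$. Averaging gives $\limsup_t U^t\leq \overline{U}^*-p_k(1-p_k)\epsilon_k<\overline{U}^*$.

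The structural-holes claim is immediate from the ``no cross-type edge'' observation in stage (i): every connected component of the undirected skeleton of $G^t$ is contained in a single vertex class $\mathcal{V}_l^t$. By Theorem~1(2) applied to each $l\in\Theta$ (using $L_l^*(0)\geq 1$, which follows from $h_l>0$), once $t$ is large enough each class almost surely contains at least one pair of agents joined by a (same-type) link and thus contributes at least one non-singleton component; the $|\Theta|$ classes being vertex-disjoint gives $\omega(G^t)\geq|\Theta|$ almost surely. The main technical obstacle lies in the necessity direction: securing the uniform-in-$t$ positive fraction of type-$k$ agents who form a different-type link, together with a matching uniform utility gap $\epsilon_k$. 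Both rely on the two ingredients already highlighted---newborns are forced to meet a uniform stranger because they have no followees-of-followees, and concavity of $v_k$ confines feasible ego networks to a finite set---so the analysis, while careful, needs no machinery beyond what the model already supplies.
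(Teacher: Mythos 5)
Your proof is correct and follows essentially the same route as the paper's: the same optimal value $\overline{U}^{*}=\sum_{k}p_{k}\left(v_{k}(\alpha_{k}^{s}L_{k}^{*}(0))-c\,L_{k}^{*}(0)\right)$ as an upper bound, sufficiency because $h_{l}=1$ forbids every cross-type link, necessity because a positive asymptotic fraction of type-$k$ agents forms at least one different-type link at a strict utility loss, and disconnectedness because no edge crosses type classes. Your handling of the necessity step (the newborn's forced uniform first meeting and the uniform gap $\epsilon_{k}$ obtained from the finiteness of terminal ego networks) is in fact more explicit than the paper's one-line assertion, but the underlying argument is identical.
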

\begin{proof} 
See Appendix E. \, \, \IEEEQEDhere 
\end{proof}

If all agents are extremely homophilic, then a disconnected network that maximizes the achieved utility always emerges, and such a network is always disconnected even with the limited observability of the meeting process. A disconnected network obviously entails \textit{structural holes} as defined by Burt \cite{ref55sh}\cite{ref55sh3}: same-type agents form connected components that do not communicate with other types of agents, thus different types of agents do not exchange ideas and information. As shown in Fig. 4, the optimal total average utility is only achieved when agents are extremely intolerant towards different-type agents. Thus, maximizing the bonding capital in homophilic societies implies the presence of structural holes. For any network with non-extremely homophilic agents, the limited observability of agents dictated by the meeting process allows the agents to fill the network's structural holes. In other words, what makes the network connected is that not all similar-type agents observe each other at each time step, but they can potentially meet different-type agents with which they decide to connect. If the meeting process allows unlimited observability, i.e. $m_{i}(t)=\mathcal{V}^{t}/\{i\}$, then the agents will always converge to a disconnected network with $|\Theta|$ non-singleton components. 

The major conclusion drawn from this section is that homophily leads agents to reside in more homogeneous ego networks, but also leads the agents to wait longer in order to establish their ego networks, and induces structural holes in the global network structure. Thus, on one hand homophily  unifies the local structure of the network by gathering people with similar traits together, but on the other hand it divides the global network structure since dissimilar social groups become weakly connected. This creates another potential source of capital, namely a \textit{bridging capital}, which we discuss in Section 5.

\begin{figure}[t!]
    \centering
    \includegraphics[width=3.5 in]{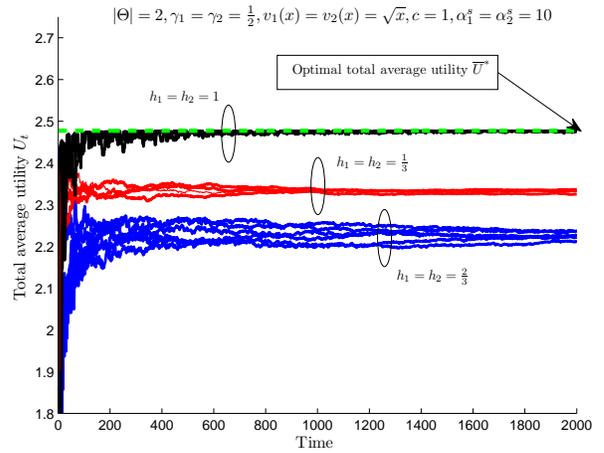}     
    \caption{The total average utility for various realizations of the evolving social network.}
\end{figure}

\section{Popularity capital and preferential attachment}	
Since in our model we consider directed networks, we distinguish between conventional bonding capital, which is realized by homogenous networks of like-minded people that provides social support for the individual, and \textit{popularity capital}, which corresponds to the individual's influence in the network that is gained by supporting others. Popularity is an important form of social capital that represents an individual's influence on a social group; an individual's ability to spread opinions, information, and ideas; and also an individual's acquisition for group support. For instance, users of Twitter acquire popularity measured by the number of followers, which allows them to express opinions, problems, and experiences, and acquire emotional support provided by their online support groups (followers). Followers retweet the tweets posted by users, which allows those users to spread their ideas and opinions \cite{reftwit}. Similarly, the popularity of researchers measured by the number of citations or the h-index allows those researchers to promote for new research ideas and directions. In this section we study popularity capital and connect it to preferential attachment, which is a central concept in network science.

The popularity of agent $i$ at time $t$ is simply given by $\mbox{deg}_{i}^{-}(t)$. We say that the \textit{popularity growth rate} of agent $i$ is $O\left(g(t)\right)$ if $\lim_{t\rightarrow\infty}\frac{\mathbb{E}\left\{\mbox{deg}_{i}^{-}(t)\right\}}{g(t)}=1,$ where the expectation is taken over all realizations of the graph process given that agent $i$ is born with a type $\theta_{i}$. (Note that the growth rate is only uniquely defined ``near infinity".) The \textit{popularity distribution} (sometimes called the degree distribution \cite{ref9}\cite{ref1811}\cite{ref12}) is denoted by $f_{d}^{t}(d),$ and corresponds to the fraction of agents with a popularity level of $d$ at time $t$, i.e. $f_{d}^{t}(d)=\frac{1}{t}\left|\left\{ i\left|\mbox{deg}_{i}^{-}(t)=d,i\in\mathcal{V}^{t}\right.\right\} \right|$. For a given type $k$, $f_{d}^{t,k}(d)$ denotes the popularity distribution of type-$k$ agents at time $t$: $f_{d}^{t,k}(d)=\frac{1}{\left|\mathcal{V}_{k}^{t}\right|}\left|\left\{ i\left|\mbox{deg}_{i}^{-}(t)=d,i\in\mathcal{V}_{k}^{t}\right.\right\} \right|$. Let $\Delta\mbox{deg}_{i}^{-}(t)$ be the number of followers gained by agent $i$ at time $t$, i.e. $\Delta\mbox{deg}_{i}^{-}(t)=\mbox{deg}_{i}^{-}(t)-\mbox{deg}_{i}^{-}(t-1)$.

\textit{Preferential attachment} has been used to explain the underlying mechanism of networks growth \cite{ref9}, \cite{ref4}-\cite{ref7}, \cite{ref18}-\cite{ref12}. All of these previous papers model agents as forming links only once; in our model, agents may form links many times. More importantly, all of these previous models {\it impose} preferential attachment as a behavioral rule (so network growth is viewed as a conventional {\it stochastic urn process}); in our model, preferential attachment {\it emerges endogenously}. 

To fix ideas, we first provide a general definition of preferential attachment that will be adopted in what follows. 
\begin{defntn} (\textit{Preferential attachment}) We say that preferential attachment emerges in the network
formation process if $\mbox{deg}_{i}^{-}(t) \geq \mbox{deg}_{j}^{-}(t)$ implies $\Delta\mbox{deg}_{i}^{-}(t)\succeq\Delta\mbox{deg}_{j}^{-}(t)$. \,\, \IEEEQEDhere 
\end{defntn} 
In words: preferential attachment means that agents who are more popular at a given time are likely to become even more popular in the future.    

\subsection{Popularity capital in tolerant societies}
We begin by studying popularity capital in societies with extreme exogenous homophily index for all types of agents given by $h_{k}=0,\forall k\in\Theta$. It seems natural to refer to such societies as tolerant (rather than totally non-homophilic). We study the factors that create inequality of popularity capital in  tolerant societies. In the following Theorem, we begin by studying the impact of the exogenous network parameters on the popularity growth rates. 
\begin{thm}
(\textit{Popularity growth in tolerant societies}) For a tolerant society popularity growth rates enjoy the following properties: 
\begin{itemize}
\item For $\gamma_{k}=0,\forall k\in\Theta$, the popularity of any agent $i$ grows logarithmically with time, i.e. $\mathbb{E}\left\{\mbox{deg}_{i}^{-}(t)\right\} $ is $O\left(\bar{L}\,\log(t)\right)$, where $\bar{L}=\sum_{m\in\Theta}p_{m}L_{m}^{*}(0)$. 
\item For $\gamma_{k}=1,\forall k\in\Theta$, the popularity of any agent $i$ grows at least sublinearly with time, i.e. $\mathbb{E}\left\{\mbox{deg}_{i}^{-}(t)\right\}$ is at least as fast as $O\left(t^{b}\right)$, where $b$ is given in Appendix F and is the same for all types of agents.
\end{itemize}
\end{thm}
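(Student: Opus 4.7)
My plan is to treat the two cases separately, but in both I exploit a crucial simplification: since $h_k = 0$, every meeting results in a link (because $\alpha^s_k = \alpha^d_k$ makes the types of agents met irrelevant to the marginal utility), and each agent $j$ therefore spends exactly $L_{\theta_j}^*(0)$ time steps as an ``active'' link-former before becoming inert. Combined with Borel's law of large numbers applied to the birth process, this implies that the number of active agents at time $t$ concentrates almost surely around $\bar{L} = \sum_m p_m L_m^*(0)$.

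For case (a), $\gamma_k = 0$, the meeting process degenerates because the followees-of-followees set carries no preferential weight: each active agent $j$ at time $t+1$ samples its target uniformly from $\mathcal{V}^{t+1} \setminus \{j\}$. Hence the probability that any specific agent $i$ receives a new follower from $j$ is $1/t$, plus a negligible correction for the event that $j$ already follows $i$ (which is $O(L^*/t)$). Summing over the $\bar{L}$ (in expectation) active agents at each step,
\begin{equation*}
\mathbb{E}\left[\mbox{deg}_i^-(t)\right] \;=\; \sum_{s=i+1}^{t} \frac{\bar{L}}{s-1} + o(\log t) \;\sim\; \bar{L}\log t,
\end{equation*}
which is the asserted $O(\bar{L}\log t)$ rate. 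The residual term is controlled by a routine concentration argument on the active-agent count.

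For case (b), $\gamma_k = 1$, the analysis is structurally harder because the meeting distribution is endogenous and history-dependent. I would set up a master equation for $d_i(t) \defeq \mathbb{E}[\mbox{deg}_i^-(t)]$. After $j$'s very first time step (where it must meet a stranger, since $\mathcal{K}_{j,j+1} = \emptyset$), it meets followees-of-followees with probability one whenever $\mathcal{K}_{j,t+1} \neq \emptyset$. Conditioned on the current network, the per-step rate at which agent $i$ picks up followers equals
\begin{equation*}
\sum_{j\text{ active}} \frac{\mathbb{P}\left(i \in \mathcal{K}_{j,t+1}\right)}{|\mathcal{K}_{j,t+1}|}.
\end{equation*}
The numerator is an increasing function of $\mbox{deg}_i^-(t)$ because $i$ is an FoF of $j$ precisely when some $k \in \mathcal{N}_{j,t}^+$ satisfies $k \in \mathcal{N}_{i,t}^-$; this is the mechanism that generates preferential attachment endogenously from the opportunistic meeting rule. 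I would lower bound $\mathbb{P}(i \in \mathcal{K}_{j,t+1})$ by the event that at least one follower of $i$ is a followee of $j$, which (using that out-degrees are concentrated near $L^*$) contributes a term proportional to $\mbox{deg}_i^-(t)/t$, and upper bound $|\mathcal{K}_{j,t+1}|$ by a constant of order $L^{*\,2}$. This yields a recursion of the form $d_i(t+1) - d_i(t) \geq (b/t)\,d_i(t)$, whose continuous-time surrogate $\dot{d} = b\,d/t$ integrates to $d_i(t) \geq c\,t^b$, matching the claim.

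The hard part is case (b): $\mathcal{K}_{j,t+1}$ has complicated combinatorial structure (overlaps among followees-of-followees, triadic closure induced by $\gamma=1$, and exclusion of $\mathcal{N}_{j,t}^+ \cup \{j\}$), so a simultaneously valid lower bound on $\mathbb{P}(i\in\mathcal{K}_{j,t+1})$ and upper bound on $|\mathcal{K}_{j,t+1}|$ require a careful mean-field treatment exploiting that $\mbox{deg}_j^+(t)$ is deterministically bounded by $L^*_{\theta_j}(0)$. The explicit exponent $b$ deferred to Appendix F is, I expect, the fixed point of the linearized mean-field recursion after these approximations; the strict positivity of $b$ is what distinguishes the growth regime from the logarithmic one in case (a).
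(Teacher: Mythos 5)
Your overall strategy is the same as the paper's (Appendix F): for $\gamma_k=0$ you identify the $O(1)$ window of recently born, still-active agents whose expected (gregariousness-weighted) size is $\bar L$, obtain a per-step gain of $\bar L/t$, and sum the harmonic series; for $\gamma_k=1$ you lower-bound the per-step gain by a term proportional to $\mbox{deg}_i^-(t)/t$ arising from followers of $i$ being followees of active newcomers, and integrate the resulting continuous-time surrogate. The mechanism you identify for the exponent $b$ (probability $\mbox{deg}_i^-(t)/t$ of hitting a follower of $i$, divided by the size of the FoF choice set, averaged over types) is exactly how the paper arrives at $b=\sum_{w}p_w/L_w^*(0)$.

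There is one genuine gap in case (b). Your recursion is stated as $d_i(t+1)-d_i(t)\ge (b/t)\,d_i(t)$, but in a fully opportunistic society $d_i(i)=0$ (no agent can reach $i$ through a followee-of-followee set at $i$'s birth, and all non-newborn agents meet only followees of followees), so this homogeneous recursion integrates to $d_i(t)\ge 0$ and the claimed bound $d_i(t)\ge c\,t^{b}$ holds only with $c=0$, which is vacuous. The missing ingredient is the additive forcing term: the agent born at time $t$ has $\mathcal{K}_{t,t}=\emptyset$ and therefore meets a uniformly random agent, linking to $i$ with probability $1/(t-1)$. This contributes the $+1/t$ term in the paper's recursion $d_i(t+1)-d_i(t)\ge\bigl(1+b\,d_i(t)\bigr)/t$, whose solution $d_i(t)\ge \frac{1}{b}\bigl((t/i)^{b}-1\bigr)$ is seeded by that inhomogeneity. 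Without it your argument cannot get off the ground; with it, the rest of your sketch goes through as in the paper. (A smaller point: the "concentration" of the active-agent count around $\bar L$ is not a law-of-large-numbers statement, since the active window has bounded size; only its expectation is needed, and that is all the paper uses.)
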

\begin{proof} 
See Appendix F. \, \, \IEEEQEDhere 
\end{proof}

This Theorem demonstrates the impact of opportunism and gregariousness on popularity accumulation. On one hand, the popularity of agents in non-opportunistic societies grows logarithmically with time -- very slowly. On the other hand, the popularity of agents in opportunistic societies grows sublinearly with time -- again slowly, but much faster than for non-opporunistic agents. Thus, opportunism has an enormous influence on popularity. As we show below, this is a consequence of preferential attachment. 
\begin{crlry}
\textit{(Emergence of preferential attachment)} For a tolerant society, preferential attachment emerges if all agents are opportunistic, i.e. $\gamma_{k}=1,\forall k\in\Theta$. 
\end{crlry}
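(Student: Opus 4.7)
The plan is to trace preferential attachment back to a single geometric fact about the meeting process: under $\gamma_{k}=1$, each active agent draws its next acquaintance uniformly from its two-step out-neighborhood, and targets with higher indegree systematically appear in more such neighborhoods.

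First I would collapse the linking and meeting steps. In a tolerant society $\alpha_{k}^{s}=\alpha_{k}^{d}$, so by the concavity of $v_{\theta_{\ell}}$ the marginal utility in (\ref{eq3}) is positive exactly while $\ell$ has fewer than $L^{*}_{\theta_{\ell}}(0)$ followees, and every meeting by an active agent therefore produces a link deterministically. With $\gamma_{\theta_{\ell}}=1$ and $\mathcal{K}_{\ell,t+1}\neq\emptyset$, the meeting-process formula reduces to $\mathbb{P}(m_{\ell}(t+1)=i\mid G^{t})=|\mathcal{K}_{\ell,t+1}|^{-1}\mathbb{I}\{i\in\mathcal{K}_{\ell,t+1}\}$. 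Conditional on $G^{t}$, the increment $\Delta\mbox{deg}_{i}^{-}(t+1)$ is then a sum of independent Bernoullis, one per active agent $\ell$, with success probability $1/|\mathcal{K}_{\ell,t+1}|$ whenever $i\in\mathcal{K}_{\ell,t+1}$.

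Next I would set up a coupling. Define the reach set $R_{i}(t)=\{\ell\text{ active}:i\in\mathcal{K}_{\ell,t+1}\}\subseteq\bigl(\bigcup_{k\in\mathcal{N}_{i,t}^{-}}\mathcal{N}_{k,t}^{-}\bigr)\setminus(\mathcal{N}_{i,t}^{-}\cup\{i\})$: each additional follower $k$ of $i$ contributes a fresh batch $\mathcal{N}_{k,t}^{-}$ of potential meeters, so $R_{i}(t)$ is monotone in $\mbox{deg}_{i}^{-}(t)$. Condition on $\mbox{deg}_{i}^{-}(t)\ge\mbox{deg}_{j}^{-}(t)$ and fix an injection $\pi:\mathcal{N}_{j,t}^{-}\hookrightarrow\mathcal{N}_{i,t}^{-}$. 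Since types enter no prior decision when $h_{k}=0$, the follower subtrees hanging below $\mathcal{N}_{j,t}^{-}$ and those hanging below $\pi(\mathcal{N}_{j,t}^{-})$ are exchangeable, so $\pi$ lifts to a measure-preserving injection $\tilde{\pi}:R_{j}(t)\hookrightarrow R_{i}(t)$ with $|\mathcal{K}_{\tilde{\pi}(\ell),t+1}|\stackrel{d}{=}|\mathcal{K}_{\ell,t+1}|$. Coupling $m_{\tilde{\pi}(\ell)}(t+1)$ with $m_{\ell}(t+1)$ through the same uniform random variable then makes every $j$-meeting on the $j$-side coincide with an $i$-meeting on the $i$-side, and the excess elements of $R_{i}(t)\setminus\tilde{\pi}(R_{j}(t))$ contribute non-negative additional Bernoullis; this yields $\Delta\mbox{deg}_{i}^{-}(t+1)\succeq\Delta\mbox{deg}_{j}^{-}(t+1)$, which is exactly the preferential-attachment condition of Definition~1.

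The hard part will be making this coupling honestly distribution-preserving. The Bernoulli parameter $|\mathcal{K}_{\ell,t+1}|^{-1}$ depends on $\ell$'s own neighborhood and not on the target, so mere set inclusion $R_{j}(t)\subseteq R_{i}(t)$ cannot by itself transfer stochastic dominance. What allows the lift $\tilde{\pi}$ to exist is the type-blindness of a tolerant society: with $h_{k}=0$, no past meeting or linking decision depends on a type label, so the joint law of the induced follower subtrees $(G^{t},\mathcal{N}_{k,t}^{-})_{k\in\mathcal{N}_{i,t}^{-}}$ is symmetric in $k$ and agrees in distribution with the analogous object for $j$. Making this exchangeability argument precise—so that the injection $\tilde{\pi}$ really transports one induced Bernoulli law to another while preserving independence across $\ell$—is the technical crux on which the FOSD conclusion rests.
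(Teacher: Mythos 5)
Your first half matches the paper's Appendix G: in a tolerant society every meeting by an agent with fewer than $L^{*}_{\theta}(0)$ followees produces a link, only the $\max_{l}L^{*}_{l}(0)$ most recently born agents are still active, and conditional on the current graph the increment $\Delta\mbox{deg}_{i}^{-}(t)$ is a sum of independent Bernoullis, one per active agent $\ell$ with $i\in\mathcal{K}_{\ell,t}$ and success probability $1/K_{\ell}(t)$ --- the paper phrases this as a Poisson binomial law with parameters $p^{t}_{ik}$, $k\in\Phi^{t}$. Where you diverge is the passage from this structure to $\Delta\mbox{deg}_{i}^{-}(t)\succeq\Delta\mbox{deg}_{j}^{-}(t)$. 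The paper couples nothing: it notes that the Poisson-binomial CDF $\mathbb{P}(\Delta\mbox{deg}_{i}^{-}(t)\leq n)$ is decreasing in each parameter $p^{t}_{iy}$, and then invokes the computation of Appendix F (where the aggregate linking probability appears as $\frac{1}{t-1}+\frac{\mbox{deg}_{i}^{-}(t-1)}{t-2}\sum_{m}p_{m}/L^{*}_{m}(0)+\cdots$, each follower of $i$ supplying its own followers as potential linkers) to conclude that these parameters are larger for the agent with the larger indegree. That route needs only a comparison of marginal per-target probabilities, not a pathwise construction.

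Your route instead hinges on lifting an injection $\pi:\mathcal{N}^{-}_{j,t}\hookrightarrow\mathcal{N}^{-}_{i,t}$ to a measure-preserving injection $\tilde{\pi}:R_{j}(t)\hookrightarrow R_{i}(t)$ with $|\mathcal{K}_{\tilde{\pi}(\ell),t+1}|\stackrel{d}{=}|\mathcal{K}_{\ell,t+1}|$, and you rightly flag that this exchangeability is the unproven crux; as stated it fails. Whether a follower-of-follower $\ell$ is active at all (i.e.\ $\ell\in\Phi^{t}$), and the size $K_{\ell}(t+1)$ that sets its Bernoulli parameter, are determined by birth dates: by $\ell$'s own age and by the ages of the followers of $i$ versus those of $j$. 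These arrival times are not identically distributed across the two follower sets (an agent with higher indegree typically has older followers whose own follower subtrees are already saturated and hence inactive), and conditioning on $\mbox{deg}_{i}^{-}(t)\geq\mbox{deg}_{j}^{-}(t)$ tilts them further. Tolerance ($h_{k}=0$) buys you symmetry over type labels, but not over the temporal structure that actually drives the parameters $1/K_{\ell}(t+1)$ and the activity indicators. To close the argument you would need to average over birth times explicitly --- at which point you have essentially reconstructed the paper's marginal-probability comparison --- or restrict the coupling to the one-step-deep contributions (the newborn and the agents born in the last step), which is in effect the lower bound the paper carries through Appendix F.
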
 
\begin{proof}
See Appendix G. \, \, \IEEEQEDhere
\end{proof}  
In the following Corollary, we show that agents' ages in the network create inequality in the popularity capital. 
\begin{crlry} \textit{(Superiority of older agents in tolerant societies)} 
For a tolerant society, we have that $\mbox{deg}_{i}^{-}(t)\succeq\mbox{deg}_{j}^{-}(t)$ for all $i<j$.
\end{crlry}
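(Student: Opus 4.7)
The statement is a claim about first-order stochastic dominance of indegrees across agents of different ages in a tolerant society. My strategy is a coupling argument, which leans on two simplifications afforded by $h_k=0$ for all $k$: first, by Theorem~1(1), every agent $\ell$ has $T_\ell=L_{\theta_\ell}^*(0)$ almost surely, so $\ell$ is in ``linking mode'' during a deterministic window $W_\ell = [\ell, \ell + L_{\theta_\ell}^*(0)-1]$; second, because the utility is insensitive to the type of a followee (as $\alpha^s_k=\alpha^d_k$ in a tolerant society), $\ell$ links to every new agent it meets during $W_\ell$. Consequently the event ``$\ell$ follows $m$ by time $t$'' reduces to the event ``$m_\ell(\tau)=m$ for some $\tau \in W_\ell \cap [1,t]$.''

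For $t<j$ the claim is immediate since $\mbox{deg}_j^-(t)=0$. For $t\geq j$ I would construct a coupling on a single probability space and prove by induction on $\tau$ that the following invariant holds: at every $\tau \geq j$, $\mbox{deg}_i^-(\tau) \geq \mbox{deg}_j^-(\tau)$, and moreover $\mathcal{N}_{j,\tau}^- \setminus \{i\}$ can be injected into $\mathcal{N}_{i,\tau}^- \setminus \{j\}$. The base case $\tau=j$ holds because $j$ has just been born with no incoming links. For the inductive step I would consider each agent $\ell \neq i,j$ alive at $\tau$ and couple its meeting $m_\ell(\tau)$ via an inverse-CDF construction in which $i$ is enumerated before $j$ within $\ell$'s choice set. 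The key inequality $\mathbb{P}(m_\ell(\tau)=i \mid \mathcal{F}_{\tau-1}) \geq \mathbb{P}(m_\ell(\tau)=j \mid \mathcal{F}_{\tau-1})$ then follows from the invariant by case-analysing the placement of $i$ and $j$ in $\mathcal{N}_{\ell,\tau-1}^+$, $\mathcal{K}_{\ell,\tau}$, and $\bar{\mathcal{K}}_{\ell,\tau}$: the uniform-stranger channel is symmetric in $i$ and $j$, while the followee-of-followee channel favors $i$ because $i$'s follower set dominates $j$'s by the invariant, so $i$ is a followee of followee of at least as many agents as $j$.

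The main obstacle is handling the ``cross-terms'' in which $i$ and $j$ themselves follow each other: a priori $i$ could follow $j$ and thereby contribute to $\mbox{deg}_j^-$ without a parallel contribution to $\mbox{deg}_i^-$. I would resolve this by noting that during the interval $[i,j-1]$ (before $j$ is born) agent $i$ has a positive probability of being followed by other agents while $j$ cannot be followed at all, producing a non-negative indegree differential at time $j$ that the coupled evolution only preserves. A further symmetric coupling of the two events ``$i$ meets $j$ during $W_i$'' and ``$j$ meets $i$ during $W_j$'' absorbs the residual cross-term asymmetry, since once $j$ is born both agents face the network from essentially symmetric vantage points modulo $i$'s accumulated head start. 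Summing the coupled indegrees at time $t$ then yields $\mbox{deg}_i^-(t) \geq \mbox{deg}_j^-(t)$ on every sample path of the coupled process, which establishes $\mbox{deg}_i^-(t) \succeq \mbox{deg}_j^-(t)$.
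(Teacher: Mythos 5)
Your overall intuition (agent $i$ enjoys a nonnegative ``head start'' accumulated during $[i,j-1]$, and after $j$'s birth the two agents are treated symmetrically) is the same one the paper exploits, but the paper implements it distributionally rather than pathwise: it writes $\mbox{deg}_{i}^{-}(t)=\sum_{m=i}^{j-1}\Delta\mbox{deg}_{i}^{-}(m)+\sum_{v=j}^{t}\Delta\mbox{deg}_{i}^{-}(v)$, asserts that $\sum_{v=j}^{t}\Delta\mbox{deg}_{i}^{-}(v)$ and $\sum_{l=j}^{t}\Delta\mbox{deg}_{j}^{-}(l)$ have the same law, and concludes from monotonicity of CDFs and nonnegativity of the head start. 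Your proposal instead tries to maintain $\mbox{deg}_{i}^{-}(\tau)\geq\mbox{deg}_{j}^{-}(\tau)$ on every sample path of a coupled process, and that is where it breaks.

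The gap is in the inductive step. Your invariant is a cardinality injection of $\mathcal{N}_{j,\tau}^{-}\setminus\{i\}$ into $\mathcal{N}_{i,\tau}^{-}\setminus\{j\}$, but the key inequality $\mathbb{P}(m_{\ell}(\tau)=i\mid\mathcal{F}_{\tau-1})\geq\mathbb{P}(m_{\ell}(\tau)=j\mid\mathcal{F}_{\tau-1})$ requires, for the followee-of-followee channel, that $i\in\mathcal{K}_{\ell,\tau}$ whenever $j\in\mathcal{K}_{\ell,\tau}$. Membership in $\mathcal{K}_{\ell,\tau}$ is determined by \emph{which} agents follow $j$ (namely whether some followee of $\ell$ does), not by \emph{how many} do, so the injection gives you nothing here. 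Concretely, take a history in which some followee $m$ of $\ell$ followed $j$ but no followee of $\ell$ follows $i$; then $j\in\mathcal{K}_{\ell,\tau}$, $i\in\bar{\mathcal{K}}_{\ell,\tau}$, and for $\gamma_{\theta_{\ell}}>0$ the conditional probability of $\ell$ meeting $j$ strictly exceeds that of meeting $i$ even though $\mbox{deg}_{i}^{-}(\tau-1)\geq\mbox{deg}_{j}^{-}(\tau-1)$. Such histories occur with positive probability, so the per-step conditional comparison you rely on is false and the pathwise invariant cannot be propagated. Upgrading the invariant to the set inclusion $\mathcal{N}_{j,\tau}^{-}\subseteq\mathcal{N}_{i,\tau}^{-}$ would fix the channel comparison but cannot be maintained by your coupling, since each $\ell$ meets only one agent per period and therefore cannot be made to follow $i$ whenever it follows $j$. (Your plan does go through in the special case $\gamma_{k}=0$ for all $k$, where the meeting is purely uniform and symmetric in $i$ and $j$; the corollary, however, is stated for tolerant societies with arbitrary opportunism.) The repair is essentially to retreat to the paper's argument: compare the post-$j$ increment processes in distribution, averaged over histories, rather than conditionally on a common filtration.
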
 
\begin{proof} 
See Appendix H. \, \, \IEEEQEDhere
\end{proof} 
Thus in the setting of Corollary 4, age is the only factor that creates inequality in popularity capital. In the following Corollary, we show that opportunism creates long term popularity advantages for agents forming the network. 
\begin{crlry}
{\it (Opportunism is good in the long-run)} 
If $d_{i}^{1}(t)$ is the popularity of agent $i$ at time $t$ in a tolerant society with $\gamma_{k}=0,\forall k \in \Theta,$ and $d_{i}^{2}(t)$ is the popularity of agent $i$ at time $t$ in a tolerant society with $\gamma_{k}=1,\forall k \in \Theta,$ then we have that $\mathbb{E}\left\{d_{i}^{2}(t)\right\} \leq \mathbb{E}\left\{d_{i}^{1}(t)\right\}$ for all $t\leq T^{*},$ and $\mathbb{E}\left\{d_{i}^{2}(t)\right\} \geq \mathbb{E}\left\{d_{i}^{1}(t)\right\}$ for all $t>T^{*},$ where $T^{*} \leq i\times\left(-\bar{L}\,\,\mathcal{W}_{-1}\left(\frac{-1}{\bar{L}}e^{\frac{-1}{\bar{L}}}\right)\right)^{\frac{1}{b}},$ $b = \sum_{m \in \Theta} p_{m} L^{*}_{m}(0)$, and $\mathcal{W}_{-1}(.)$ is the lower branch of the Lambert W function \cite{ref372}. 
\end{crlry}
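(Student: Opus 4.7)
The plan is to invoke the explicit asymptotic growth rates established in Theorem 3 and locate the time at which the two expected in-degree trajectories cross. In a tolerant society, Theorem 3 gives that $\mathbb{E}\{d^{1}_{i}(t)\}$ grows only logarithmically in $t/i$ when $\gamma_k = 0$ for all $k$, while $\mathbb{E}\{d^{2}_{i}(t)\}$ grows at least as fast as $(t/i)^{b}$ when $\gamma_k = 1$ for all $k$. Both expected trajectories vanish at $t = i$ and are monotone non-decreasing in $t$ (an in-degree cannot decrease along any sample path), while their asymptotic behaviours differ sharply: the polynomial $(t/i)^{b}$ eventually dwarfs $\bar{L}\log(t/i)$ for every $b>0$.

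First I would establish the direction of the inequality in the early regime, i.e.\ $\mathbb{E}\{d_i^{2}(t)\} \leq \mathbb{E}\{d_i^{1}(t)\}$ just after $t=i$. The intuition is that a freshly born agent in a fully opportunistic society has no followers, therefore lies outside every agent's followee-of-followee set, and can only be discovered through the rare stranger-meeting events that occur when some agent's closed neighbourhood is empty. In the non-opportunistic society, by contrast, every alive and unsatisfied agent samples uniformly over the entire population at every time step, endowing the newborn $i$ with a discovery rate proportional to $\bar{L}/t$. Combining this with the eventual polynomial-over-logarithm dominance, together with monotonicity, continuity and the intermediate value theorem, yields a unique crossover time.

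To quantify the crossover I would equate the two asymptotic trajectories and solve explicitly. Writing $z = (T^{*}/i)^{b}$, the resulting equation reduces, after substitution of the coefficients from Theorem 3, to the transcendental $\bar{L}\log z = z - 1$. Rearranging as $(-z/\bar{L})\,e^{-z/\bar{L}} = -\tfrac{1}{\bar{L}}\,e^{-1/\bar{L}}$ puts it in canonical Lambert form. The lower real branch $\mathcal{W}_{-1}$ is the correct inverse, because the non-trivial crossover corresponds to the larger of the two real solutions of $xe^{x}=y$ for $y \in [-1/e,0)$ (the principal branch merely returns the spurious solution $z=1$, i.e.\ $T^{*}=i$). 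Inverting gives $z = -\bar{L}\,\mathcal{W}_{-1}(-\tfrac{1}{\bar{L}}\,e^{-1/\bar{L}})$, and therefore $T^{*} = i\,z^{1/b}$, precisely the upper bound claimed in the statement.

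The main obstacle is bridging the gap between Theorem 3's purely asymptotic estimates and the finite-time comparison Corollary 5 asserts. One must (i) check that the sub-leading corrections hidden inside the $O(\cdot)$ notation do not shift the intersection beyond the Lambert expression derived above, and (ii) justify that using Theorem 3's lower bound on $\mathbb{E}\{d_i^{2}(t)\}$ (rather than its unknown exact value) in fact produces an upper bound on $T^{*}$. Point (ii) reduces to the monotone-coupling observation that enlarging the opportunistic trajectory can only pull its intersection with the fixed logarithmic curve to an earlier time, so bounding $\mathbb{E}\{d_i^{2}(t)\}$ from below bounds $T^{*}$ from above. Uniqueness of the crossover itself follows from the strict convexity of $t \mapsto (t/i)^{b}$ versus the strict concavity of $t \mapsto \bar{L}\log(t/i)$, which rules out multiple intersections and guarantees a single sign-change at $T^{*}$.
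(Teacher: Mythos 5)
Your proposal follows essentially the same route as the paper's Appendix I: take $\mathbb{E}\{d_i^{1}(t)\}=\bar{L}\log\left(\tfrac{t}{i-1}\right)$ and the lower bound $\mathbb{E}\{d_i^{2}(t)\}\geq \tfrac{1}{b}\left(\left(\tfrac{t}{i}\right)^{b}-1\right)$ from Theorem 3, equate them, reduce to the transcendental equation solved by the lower branch of the Lambert W function, and observe that substituting the lower bound for the opportunistic trajectory can only move the crossing earlier, hence yields an upper bound on $T^{*}$. You are in fact somewhat more careful than the paper, which does not explicitly address uniqueness of the crossing, the early-regime direction of the inequality, or why $\mathcal{W}_{-1}$ rather than the principal branch (whose root $z=1$, i.e.\ $t=i$, is spurious) is the correct choice.
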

\begin{proof} 
See Appendix I. \, \, \IEEEQEDhere 
\end{proof}
Thus, in societies where individuals are opportunistic, the long-term popularity capital is harvested after a certain time threshold as shown in Fig. 5. Such threshold is increasing in the agents' average gregariousness. Thus, younger agents or agents in a society with large average gregariousness need to wait longer to harvest the popularity gains attained by opportunism. To sum up, in tolerant societies, only age creates popularity capital inequality, and the growth of the popularity capital is governed by both the level of opportunism and the average gregariousness of the agents' types. However, there is no inequality in the acquisition of popularity capital across different social groups.   

\begin{figure}[t!]
    \centering
    \includegraphics[width=3.5 in]{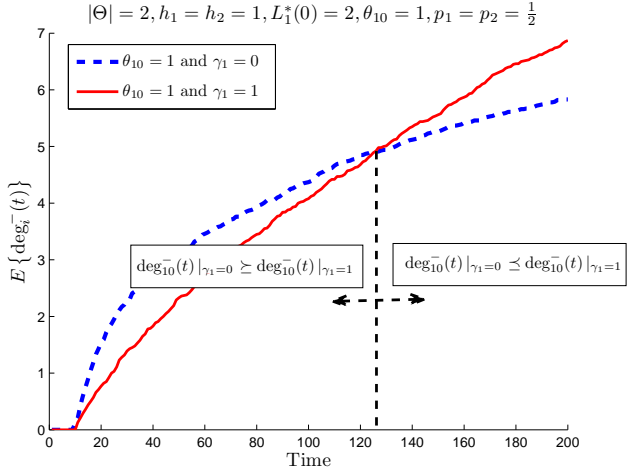}     
    \caption{Expected popularity over time for an agent born at $t=10$ for different levels of opportunism.}
\end{figure}

\begin{figure}[t!]
    \centering
    \includegraphics[width=3.5 in]{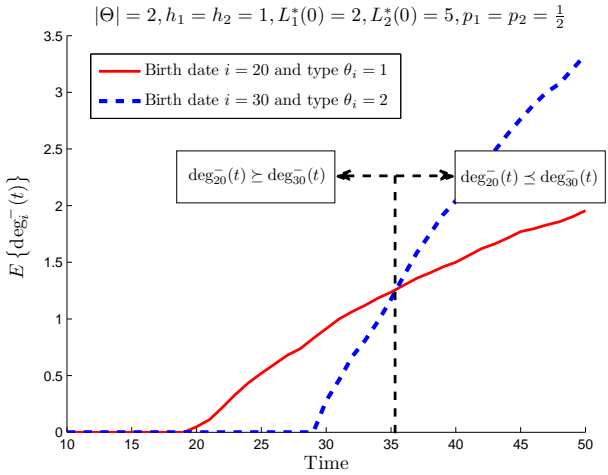}     
    \caption{Expected popularity over time for two agents belonging to 2 social groups with different levels of gregariousness.}
\end{figure}

\subsection{Popularity capital in intolerant societies}

We now study popularity capital in societies for which $h_k = 1$ for all $k$; it seems natural to refer to such societies as intolerant (rather than totally homophilic). In the following Theorem, we study the popularity growth rates for different types of agents in the network. 
\begin{thm}
(Popularity growth in intolerant societies) For an intolerant society, the popularity growth rates are given as follows: 
\begin{itemize}
\item For $\gamma_{k}=0,\forall k\in\Theta$, the mean-field approximation for the popularity of every agent $i$ grows logarithmically with time, i.e. $\mathbb{E}\left\{\mbox{deg}_{i}^{-}(t)\right\}$ is $O\left(L_{\theta_{i}}^{*}(0)\,\log(t)\right)$. 
\item For $\gamma_{k}=1,\forall k\in\Theta$, the mean-field approximation for the popularity of every agent $i$ grows at least sublinearly with time, i.e. $\mathbb{E}\left\{\mbox{deg}_{i}^{-}(t)\right\}$ is at least as fast as $O\left(t^{b_{\theta_{i}}}\right)$, where $b_{k} > b_{m}$ if $L_{k}^{*}(0) > L_{m}^{*}(0), \forall k, m \in \Theta$. 
\end{itemize}
\end{thm}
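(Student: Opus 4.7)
The plan is to exploit the structural decomposition forced by extreme intolerance: since $h_k = 1$ for all $k$, Theorem 2 guarantees that the limiting network splits (almost surely) into at least $|\Theta|$ non-singleton components, with edges flowing only between same-type agents. Consequently, the popularity of an agent $i$ of type $k = \theta_i$ is driven exclusively by the type-$k$ subpopulation, and the argument parallels that of Theorem 3, but with the type-averaged gregariousness $\bar{L} = \sum_m p_m L_m^*(0)$ replaced by the type-specific $L_k^*(0)$ and the full-population arrival rate replaced by the type-$k$ rate $p_k$.

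For the first part ($\gamma_k = 0$), I would first compute the expected number of \emph{active} type-$k$ agents at a generic time $t$, where ``active'' denotes an agent still below its gregariousness threshold $L_k^*(0)$. Using Theorem 1 (which gives EEFT of order $L_k^*(0)/p_k$ in the $h_k = 1$, $\gamma_k = 0$ regime, since the first-order condition on the concave $v_k$ yields $L_k^*(\alpha_k^s) = L_k^*(0) - 1$) together with a Little's-law balance between the type-$k$ birth rate $p_k$ and the expected active lifetime, the steady-state number of active type-$k$ agents is of order $L_k^*(0)$. Each active agent meets a uniformly random alive agent and, by $h_k = 1$, links to $i$ only when their types agree, contributing a term $\approx 1/(t-1)$. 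Summing over active type-$k$ agents and passing to the mean-field limit yields
\[
\frac{d \mathbb{E}[\mbox{deg}_i^-(t)]}{dt} \approx \frac{L_{\theta_i}^*(0)}{t},
\]
whose integration delivers $\mathbb{E}[\mbox{deg}_i^-(t)] = O(L_{\theta_i}^*(0) \log t)$, the direct analogue of Theorem 3's first part restricted to type $k$.

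For the second part ($\gamma_k = 1$), once an active type-$k$ agent has acquired its first same-type followee, its subsequent meetings are sampled from followees of followees, which (by $h_k = 1$) lie entirely inside the type-$k$ subnetwork. The dynamics within this subnetwork therefore acquire a preferential-attachment character: in a mean-field sense, the chance that an active agent lands on $i$ along a length-two path scales with $\mathbb{E}[\mbox{deg}_i^-(t)]$ normalized by the total type-$k$ in-degree $\sum_{\ell \in \mathcal{V}_k^t} \mathbb{E}[\mbox{deg}_\ell^-(t)]$, which grows linearly in $t$ at rate $p_k L_k^*(0)$. Combining this attachment rate with the steady-state count of active type-$k$ agents produces a mean-field ODE of the form
\[
\frac{d \mathbb{E}[\mbox{deg}_i^-(t)]}{dt} \gtrsim \frac{b_{\theta_i}}{t}\,\mathbb{E}[\mbox{deg}_i^-(t)],
\]
whose solution is the claimed sublinear power law $\mathbb{E}[\mbox{deg}_i^-(t)] = \Omega(t^{b_{\theta_i}})$. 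The ordering $b_k > b_m$ whenever $L_k^*(0) > L_m^*(0)$ follows because larger gregariousness simultaneously increases the size of the active pool and the number of length-two paths reaching $i$ within the type-$k$ subnetwork, driving up the exponent monotonically in $L_k^*(0)$.

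The hard part will be the opportunistic case, where the followees-of-followees rule does not produce attachment probabilities that are exactly proportional to in-degree: the length-two path count depends on local clustering of the type-$k$ subnetwork. A clean proof will require (i) a stochastic coupling dominating the true process by a simpler preferential-attachment scheme whose exponent can be computed in closed form, so that the ``at least as fast as'' conclusion is rigorous, and (ii) careful handling of the transient phase in which an active type-$k$ agent still samples strangers uniformly before acquiring its first same-type followee. The monotonicity of $b_k$ in $L_k^*(0)$ will then follow directly from how $L_k^*(0)$ enters both the new-edge arrival rate within the type-$k$ subnetwork and the size of the active pool driving the attachment dynamics.
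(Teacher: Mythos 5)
Your first bullet is handled essentially the way the paper itself does it: the paper's Appendix J grinds through the law of total expectation but then remarks that step (l) of its derivation ``can be obtained by a simple Mean-field approximation: the expected EFT of a type-$m$ agent is $L_m^*(0)/p_m$, and thus the expected number of [active] type-$m$ agents is $L_m^*(0)$,'' which is exactly your Little's-law balance; integrating $L_{\theta_i}^*(0)/t$ then gives the harmonic-sum/logarithmic growth. That part is fine and matches.

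The gap is in the second bullet, which is where the actual content of Theorem~4 (as distinct from Theorem~3) lives. Your proposed attachment rate --- $\mathbb{E}[\mbox{deg}_i^-(t)]$ normalized by the total type-$k$ in-degree $p_k L_k^*(0)\,t$, multiplied by the active-pool size --- produces an exponent that depends on $p_k$, whereas the theorem (and the paper's computation) asserts $b_{\theta_i}$ depends only on $L_{\theta_i}^*(0)$; the type distribution must cancel, and your sketch does not show that it does. More seriously, the monotonicity $b_k > b_m$ when $L_k^*(0) > L_m^*(0)$ is asserted via the heuristic that gregariousness ``increases the number of length-two paths reaching $i$,'' but gregariousness also inflates the denominator (an active agent choosing uniformly among its followees of followees has \emph{more} alternatives to $i$ when everyone is more gregarious) --- indeed, in the paper's own tolerant-society analysis the analogous exponent is $b=\sum_w p_w/L_w^*(0)$, which is \emph{decreasing} in gregariousness, so the sign of the effect genuinely flips between the two regimes and cannot be waved through. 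The paper resolves this by explicitly lower-bounding, for each recently-acquired follower-of-a-follower relationship, the probability $1/K_k(t)$ of selecting $i$ (using $L_{\theta_i}^*(0)\le K_k(t)\le (L_{\theta_i}^*(0))^2$-type control) and summing over the age $m$ of the intermediate follower, which yields the explicit coefficient $\sum_{m=0}^{L_{\theta_i}^*(0)-1}\frac{L_{\theta_i}^*(0)}{(m+1)L_{\theta_i}^*(0)-m}\prod_{v=1}^{m}\bigl(1-\frac{1}{vL_{\theta_i}^*(0)-(v-1)}\bigr)$ multiplying $\mbox{deg}_i^-(t)/t$; the claimed ordering of the $b$'s is read off from this expression, not from a qualitative argument. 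You correctly flag that the opportunistic case is the hard part and would need a coupling, but as written the proposal neither derives $b_{\theta_i}$ nor establishes its monotonicity, so the second bullet is not proved.
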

\begin{proof}
See Appendix J. \, \, \IEEEQEDhere 
\end{proof}
Thus, for tolerant and intolerant societies, the popularity growth rates are qualititatively similar -- but the sublinear growth obeys a different exponent. In the following Corollary, we show that gregariousness and opportunism create inequality in the popularity capital.
\begin{crlry} \textit{(Gregariousness and opportunism create inequality in the popularity capital)} For an intolerant society, and for the two agent types $k,m\in\Theta$ in the network with arbitrary $p_{k}$ and $p_{m}$, the following is satisfied: 
\begin{itemize}
\item If $\gamma_{k}=\gamma_{m}$, $\gamma_{k} \in \{0,1\}$, and $L_{k}^{*}(0)>L_{m}^{*}(0)$, then there exists a time $T^{*}<\infty$ where $\mathbb{E}\left\{\mbox{deg}_{i}^{-}(t)\right\}\geq \mathbb{E}\left\{\mbox{deg}_{j}^{-}(t)\right\},$ for all $t>T^{*}$, where $\theta_{i}=k$ and $\theta_{j}=m$. 
\item If $\gamma_{k}=1$ and $\gamma_{m}=0$, and $L_{k}^{*}(0)=L_{m}^{*}(0)$, then there exists a time $T^{*}<\infty$ where $\mathbb{E}\left\{\mbox{deg}_{i}^{-}(t)\right\}\geq\mathbb{E}\left\{\mbox{deg}_{j}^{-}(t)\right\},$ for all $t>T^{*}$, where $\theta_{i}=k$ and $\theta_{j}=m$. 
\end{itemize}
\end{crlry}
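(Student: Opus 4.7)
The plan is to derive Corollary 6 as a direct consequence of Theorem 4 combined with a comparison of the asymptotic growth rates of the expected in-degrees of agents $i$ and $j$. Since Theorem 4 provides leading-order expressions for $\mathbb{E}\{\mbox{deg}_i^-(t)\}$ in the two opportunism regimes relevant to this corollary, the problem reduces to showing that whenever the leading-order term for type-$k$ agents dominates that for type-$m$ agents, the difference $\mathbb{E}\{\mbox{deg}_i^-(t)\} - \mathbb{E}\{\mbox{deg}_j^-(t)\}$ becomes positive for all sufficiently large $t$, which then certifies existence of a finite crossover time $T^*$.

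For Part 1, I would split into two sub-cases. If $\gamma_k=\gamma_m=0$, Theorem 4 yields $\mathbb{E}\{\mbox{deg}_i^-(t)\} = L_k^*(0)\log t + o(\log t)$ and $\mathbb{E}\{\mbox{deg}_j^-(t)\} = L_m^*(0)\log t + o(\log t)$. Subtracting and using $L_k^*(0) > L_m^*(0)$, the difference equals $(L_k^*(0)-L_m^*(0))\log t + o(\log t)$, which is strictly positive for all $t$ exceeding some finite $T^*$; solving the explicit inequality against the lower-order corrections yields an upper bound on $T^*$. If $\gamma_k=\gamma_m=1$, Theorem 4 gives $\mathbb{E}\{\mbox{deg}_i^-(t)\}$ growing at rate $t^{b_k}$ and $\mathbb{E}\{\mbox{deg}_j^-(t)\}$ at rate $t^{b_m}$, with $b_k > b_m$ following from the strict monotonicity $L_k^*(0) > L_m^*(0) \Rightarrow b_k > b_m$ supplied by Theorem 4. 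Since $t^{b_k}/t^{b_m} = t^{b_k-b_m} \to \infty$, the difference eventually becomes (and stays) positive, which defines $T^*$.

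For Part 2, agent $i$ has $\gamma_k=1$ and type $k$, so by Theorem 4 its expected popularity grows sublinearly at rate $t^{b_k}$ with $b_k>0$, whereas agent $j$ with $\gamma_m=0$ only attains the logarithmic rate $L_m^*(0)\log t$. Since any positive power of $t$ eventually dominates a logarithm, i.e.\ $\lim_{t\to\infty} t^{b_k}/\log t = \infty$, the gap $\mathbb{E}\{\mbox{deg}_i^-(t)\} - \mathbb{E}\{\mbox{deg}_j^-(t)\}$ is positive for all $t$ beyond the smallest $T^*$ at which $t^{b_k}$ majorizes $L_m^*(0)\log t$ plus the implicit lower-order corrections. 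Existence of such a finite $T^*$ is immediate from the polynomial-beats-logarithmic asymptotic.

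The main obstacle is ensuring that the asymptotic characterizations of Theorem 4 are tight enough at the leading order to license these crossover arguments. Theorem 4 is phrased with big-$O$ notation, which nominally provides only upper bounds, but inspection of its proof (the mean-field recursion in Appendix J) shows that the leading coefficient $L_{\theta_i}^*(0)$ in the $\gamma=0$ case and the leading exponent $b_{\theta_i}$ in the $\gamma=1$ case are sharp, so we are genuinely comparing asymptotically dominant terms rather than matching one-sided bounds. A secondary point worth flagging is that the corollary allows \emph{arbitrary} $p_k$ and $p_m$; this is admissible precisely because, in the intolerant regime, Theorem 4's leading-order growth depends only on the agent's own type parameters $L_{\theta_i}^*(0)$, so the comparison is insensitive to the population shares, and the rest of the argument goes through uniformly in $(p_k,p_m)$.
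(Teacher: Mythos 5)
Your proposal is correct and follows essentially the same route as the paper's Appendix K: both arguments take the growth-rate expressions from Theorem 4 ($L_{\theta_i}^{*}(0)\log(t/(i-1))$ for $\gamma=0$ and the sublinear $t^{b_{\theta_i}}$ bound for $\gamma=1$) and locate a finite crossover time, with Part 2 resting on the fact that a positive power of $t$ eventually dominates a logarithm. The only differences are cosmetic: the paper writes out an explicit $T^{*}$ in terms of the birth dates $i$ and $j$ for the non-opportunistic sub-case, while you argue existence from the sign of $(L_{k}^{*}(0)-L_{m}^{*}(0))\log t$ plus bounded corrections, and you explicitly flag the one-sidedness of the $\gamma=1$ bounds, which the paper relies on silently.
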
 
\begin{proof} 
See Appendix K. \, \, \IEEEQEDhere
\end{proof} 
This agent-level characterization can be further generalized to the collective popularity of social groups in the following Theorem; we show that the popularity distribution of a more gregarious (or opportunistic) social group stochastically dominates that of a less gregarious (or opportunistic) group. 
\begin{thm}
(Popularity capital inequality across social groups) For an intolerant society, the following is satisfied: 
\begin{itemize}
\item If $\gamma_{k}=\gamma_{m}$, $\gamma_{k} \in \{0,1\}$, and $L_{k}^{*}(0)>L_{m}^{*}(0)$, then $f_{d}^{t,k}(d)$ first order stochastically dominates $f_{d}^{t,m}(d)$ assuming a mean-field approximation for the popularity acquisition process. 
\item If $\gamma_{k}=1$ and $\gamma_{m}=0$ and $L_{k}^{*}(0)=L_{m}^{*}(0)$, then $f_{d}^{t,k}(d)$ first order stochastically dominates $f_{d}^{t,m}(d)$ assuming a mean-field approximation for the popularity acquisition process. 
\end{itemize}
\end{thm}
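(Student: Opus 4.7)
The plan is to lift the agent-level popularity comparison given by Corollary~6 to a distributional statement via the mean-field reduction. Under the mean-field approximation, the random popularity $\mbox{deg}_i^-(t)$ of an agent $i$ is identified with its conditional expectation $\bar{d}_{\theta_i}(i,t) \defeq \mathbb{E}\left\{\mbox{deg}_i^-(t)\left|\theta_i\right.\right\}$. Consequently, the popularity distribution $f_d^{t,k}(d)$ becomes the empirical distribution of the deterministic values $\{\bar{d}_k(i,t): i\in\mathcal{V}_k^t\}$; because births are i.i.d.\ Bernoulli and independent of $i$, for large $t$ this empirical distribution converges (by the law of large numbers) to the distribution of $\bar{d}_k(I,t)$ where $I$ is uniform on $\{1,\ldots,t\}$. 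Establishing FOSD between $f_d^{t,k}$ and $f_d^{t,m}$ then reduces to a pointwise comparison of the two deterministic rate functions $i\mapsto\bar{d}_k(i,t)$ and $i\mapsto\bar{d}_m(i,t)$ on a common birth-time domain.

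The next step is to invoke the growth rates established in Theorem~4 for intolerant societies. For $\gamma_\ell=0$ the mean-field recursion (see Appendix~J) yields the logarithmic law $\bar{d}_\ell(i,t) \approx L_\ell^*(0)\log(t/i)$, and for $\gamma_\ell=1$ it yields the power law $\bar{d}_\ell(i,t) \approx C_\ell(t/i)^{b_\ell}$; in both cases the rate function is strictly decreasing in $i$. Three subcases follow from these explicit forms. In Case~1 with $\gamma_k=\gamma_m=0$ and $L_k^*(0) > L_m^*(0)$, one gets $\bar{d}_k(i,t) > \bar{d}_m(i,t)$ for every $i<t$. In Case~1 with $\gamma_k=\gamma_m=1$, Theorem~4 guarantees $b_k>b_m$ whenever $L_k^*(0) > L_m^*(0)$, and hence $C_k(t/i)^{b_k}$ eventually dominates $C_m(t/i)^{b_m}$. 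In Case~2 ($\gamma_k=1$, $\gamma_m=0$) the power-law expression $C_k(t/i)^{b_k}$ eventually dominates the logarithm $L_m^*(0)\log(t/i)$. In each situation we obtain $\bar{d}_k(i,t) \geq \bar{d}_m(i,t)$ for every $i$ outside an initial transient whose relative width $T^*/t$ vanishes as the network grows, i.e., a birth-time-indexed strengthening of the agent-level inequality in Corollary~6.

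Because each rate function is strictly decreasing in $i$, the mean-field CDF reads
\[
F_d^{t,\ell}(d) = \mathbb{P}\left(\bar{d}_\ell(I,t) \leq d\right) = \mathbb{P}\left(I \geq \bar{d}_\ell^{-1}(t,d)\right) = \frac{t-\bar{d}_\ell^{-1}(t,d)}{t},
\]
so the pointwise inequality $\bar{d}_k(i,t) \geq \bar{d}_m(i,t)$ translates, via monotone inversion, into $\bar{d}_k^{-1}(t,d) \geq \bar{d}_m^{-1}(t,d)$, and therefore $F_d^{t,k}(d) \leq F_d^{t,m}(d)$ for every $d$, which is precisely the FOSD relation $f_d^{t,k} \succeq f_d^{t,m}$. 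The main obstacle is controlling the initial transient in which the agent-level comparison of Corollary~6 has not yet kicked in: for $\gamma=1$ power-law subcases and for Case~2, young agents with $i$ close to $t$ may fail the pointwise inequality. The resolution is that the fraction of type-$k$ and type-$m$ agents born after the crossover time $T^*$ is $O(T^*/t)$, which vanishes in the mean-field limit $t\rightarrow\infty$ and therefore contributes negligible mass to the empirical CDFs, leaving the FOSD conclusion intact.
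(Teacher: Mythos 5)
Your proposal is correct and follows essentially the same route as the paper's Appendix L: under the mean-field approximation popularity is a deterministic, decreasing function of birth date, the type-$k$ cdf is obtained by inverting that function against the uniform distribution of birth dates (the paper's $F^{t,k}_{d}(d)=1-i^{*}(d)/t$), and the FOSD comparison reduces to a pointwise comparison of the growth laws from Theorem~4. Your explicit treatment of the $\gamma=1$ and mixed subcases, including the vanishing-mass argument for the initial transient, is in fact more careful than the paper, which works out only the $\gamma_{k}=\gamma_{m}=0$ case and asserts the rest follow by the same approach.
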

\begin{proof} 
See Appendix L. \, \, \IEEEQEDhere 
\end{proof}
Thus, for intolerant socieites, popularity is influenced by gregariousness and structural opportunism rather than by population share. See Fig. 6. In contrast with tolerant societies, a younger agent in an intolerant society can become and remain more popular than an older agent if the younger agent belongs to a more gregarious or more opportunistic social group. 

Theorem 5 can also be understood in the context of citation networks \cite{ref373}. In the context of citation networks, intolerance means simply that papers only cite papers that are really related -- which is of course very common and not at all unusual. \cite{ref373} shows, in many scientific fields, that there is a strong positive correlation between the number of references per paper and the total number of citations. We quote the following conclusion from the report in \cite{ref577}, which is based on a statistical analysis of \textit{Thomson Reuters' Essential Science Indicators} database: ``\textit{One might think that the number of papers published or the population of researchers in a field are the predominant factors that influence the average rate of citation, but it is mostly the average number of references presented in papers of the field that determines the average citation rate.}" This conclusion is in perfect agreement with Theorem 5 (and Corollary 6), which predict that for the inherently intolerant citation networks, the popularity of researchers in different fields (total citation rate) is governed by their ``gregariousness" (number of references per paper), and not by the type distribution (number of papers/researchers). We know from \cite{ref577} that papers in mathematics typically list few references, while those in molecular biology typically list many. Thus, molecular biologists are more ``gregarious" than mathematicians -- and one would expect that younger molecular biologists can, on average, become more ``popular" -- have higher citation indices -- than mathematicians, solely because of the differences between disciplines and entirely unrelated to ``quality" or ``real impact". This would seem to provide caution for University review committees.

Of course, other dimensions in addition to popularity/citation counts, express the value of scholarly work. One of these dimensions is  \textit{interdisciplinarity} \cite{refcent7}, which is a form of bridging capital rather than a popularity capital as we show in the next Section.

\section{Bridging capital, connectedness and the strength of weak ties}
%\begin{figure*}[t!]
%    \centering
%    \begin{subfigure}[b]{0.325\textwidth}
%        \centering
%        \includegraphics[width=2.25 in]{bet1.png}
%        \caption{Bridging via information dissemination.}
%    \end{subfigure}%
%    ~ 
%    \begin{subfigure}[b]{0.325\textwidth}
%        \centering
%        \includegraphics[width=2.25 in]{bet3.png}
%        \caption{Bridging via information transfer.}
%    \end{subfigure}
%		~
%		\begin{subfigure}[b]{0.325\textwidth}
%        \centering
%        \includegraphics[width=2.25 in]{bet2.png}
%        \caption{Bridging via information fusion.}
%    \end{subfigure}
%   \caption{Different bridging modes with the same betweenness centrality for a central agent in a segregated network.}
%\end{figure*}

\begin{figure*}[t!]
    \centering
    \begin{subfigure}[b]{0.3\textwidth}
        \centering
        \includegraphics[width=2.25 in]{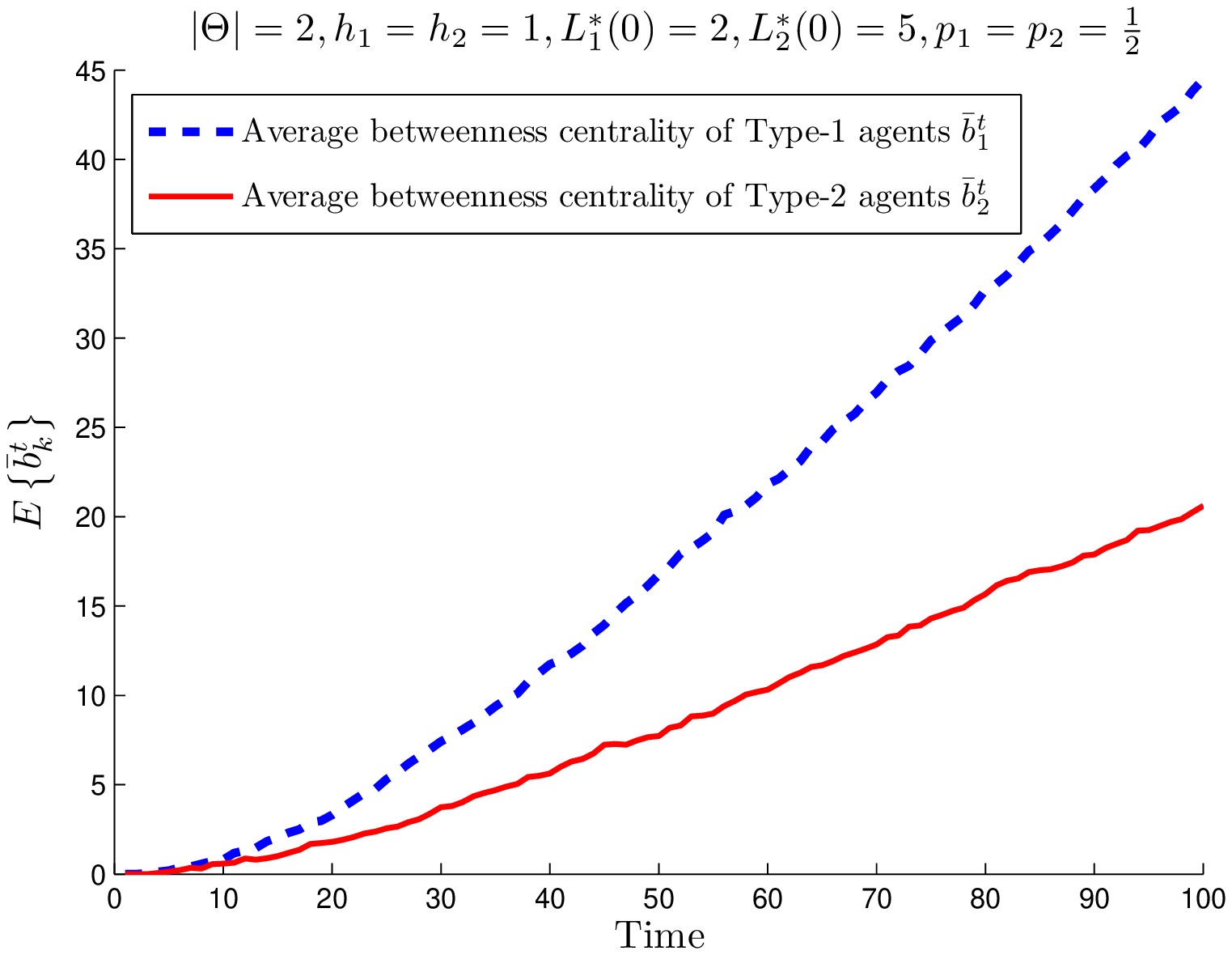}
        \caption{Impact of gergariousness on agents' centrality.}
    \end{subfigure}%
    ~ 
    \begin{subfigure}[b]{0.3\textwidth}
        \centering
        \includegraphics[width=2.25 in]{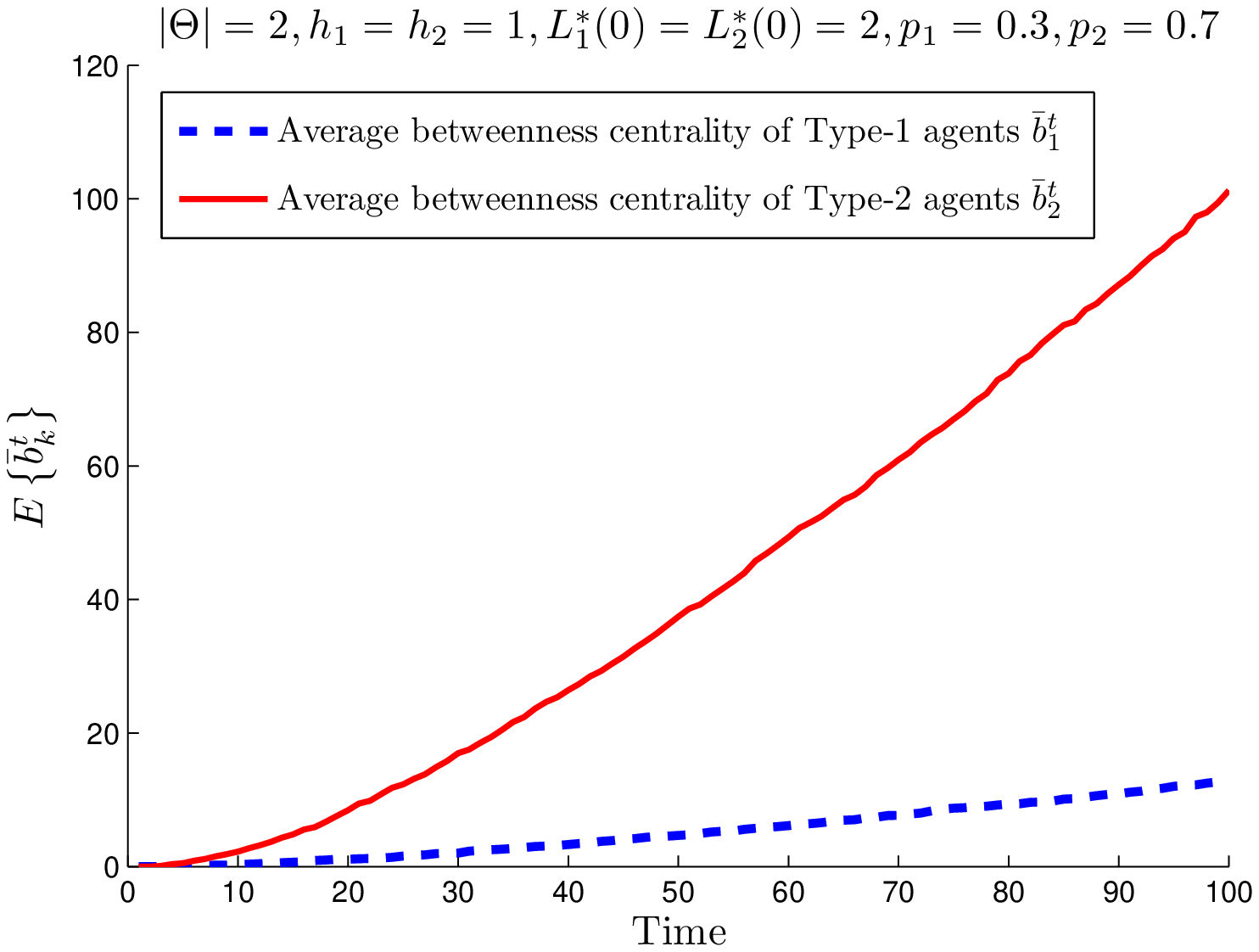}
        \caption{Impact of type distribution on agents' centrality.}
    \end{subfigure}
		~
		\begin{subfigure}[b]{0.3\textwidth}
        \centering
        \includegraphics[width=2.25 in]{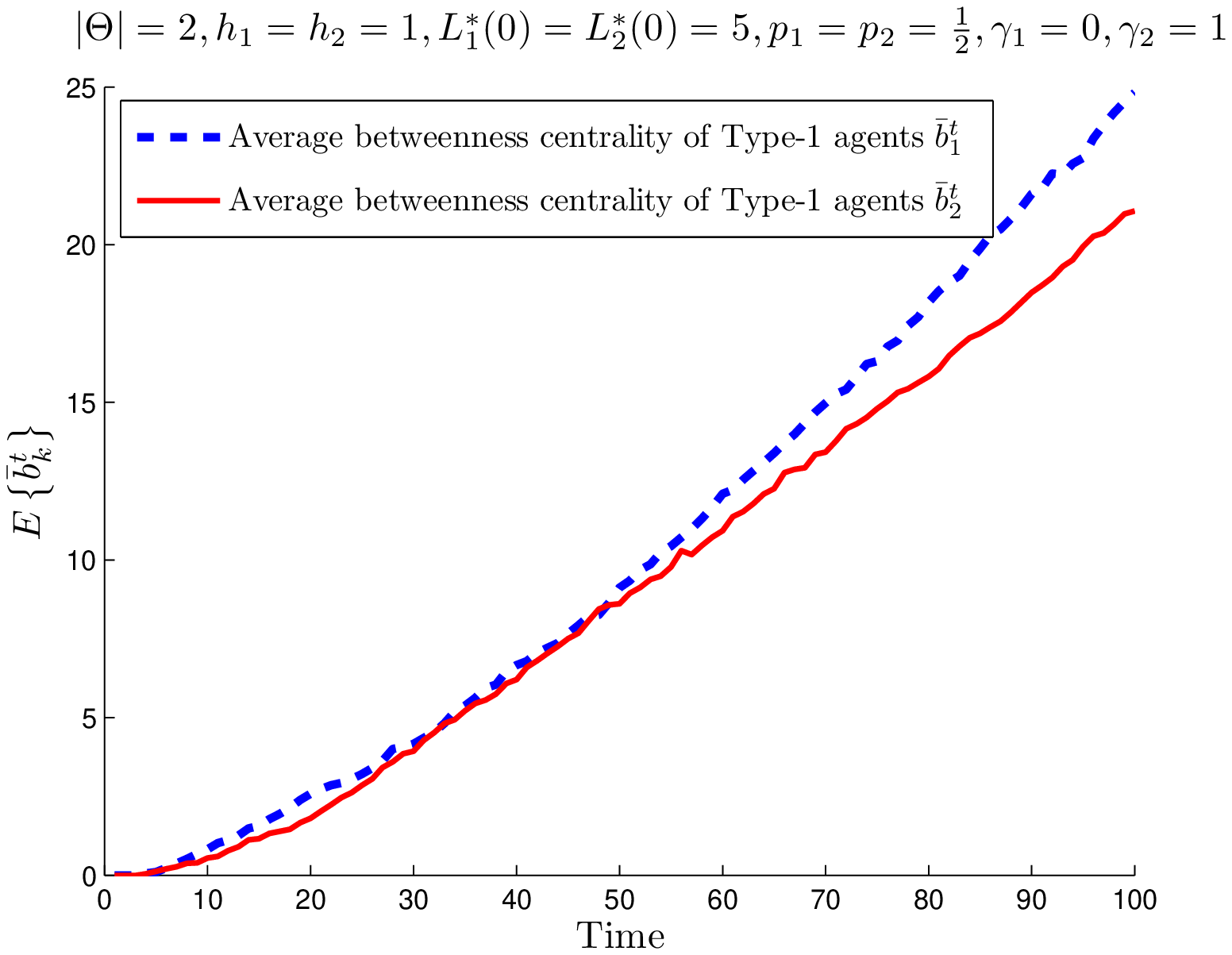}
        \caption{Impact of structural opportunism on agents' centrality.}
    \end{subfigure}
    \caption{Estimates for the average betweenness centrality of two types of agents in an extremely homophilic agents.}
\end{figure*}

\subsection{Betweenness centrality as a measure for bridging capital}

In Sections 3 and 4, we have studied two forms of capital that share two basic features: they are egocentric in the sense that they create value for individuals, and they are only affected by the agents' local network structures. Bonding occurs when an individual socializes with similar individuals driven by homophily, whereas bridging occurs when an individual links multiple segregated communities. While bonding creates egocentric values for individuals, bridging creates shared value for the network, e.g. allows diverse research communities to exchange ideas and innovations. As Burt points out in \cite{ref55sh}, individuals with bridging capital enjoy a \textit{central} position in the network as they act as a gateway for information exchange. Betweenness centrality, a graph-theoretic measure promoted by Freeman in \cite{refcap99}, is a conventional measure for centrality since for a given agent it counts the number of shortest paths between any two agents that involves that agents, and thus reflects the agents' ability to broker interactions at the interface between different groups \cite{refcent2}-\cite{refcent8}. The betweenness centrality of agent $i$ at time $t$, which is denoted by $b_{i}^{t}$, is an indicator of its centrality in the network \cite{refcap99}, and is given by
\begin{equation}
b_{i}^{t} = \sum_{k\neq j \neq i}\frac{\sigma_{kj}(i)}{\sigma_{kj}},
\label{betwenneq}
\end{equation}     
where $\sigma_{kj}$ is the total number of shortest paths between $k$ and $j$ in $G^{t}$ ignoring the edge directions, and $\sigma_{kj}(i)$ is the number of such paths that pass through $i$. In order to characterize the centrality of a certain social group, we define the average betweenness centrality of type-$k$ agents $\bar{b}_{k}^{t}$ as follows 
\[
\bar{b}_{k}^{t}=\frac{1}{\left|\mathcal{V}_{k}^{t}\right|}\sum_{i\in\mathcal{V}_{k}^{t}}b_{i}^{t}.
\]
Betweenness is a relational measure: an agent with a high betweenness centrality score does not belong to one of the dense groups, but relates them. While the evolving network is modelled as a directed graph, we capture the bridging capital by computing the betweenness centrality of agents in the {\it simplified} undirected version of the graph $G^{t}$. This is because bridging capital reflects the {\it structural centrality} of the agent, i.e. to what extent an agent is ``between" segregated groups, whereas the edge directions reflect the directions of information flow. As shown in Fig. 7, a central agent can either disseminate information to segregated groups, transfer information from one group to another, or gather information produced by different groups. In Fig. 7, the central agent has the same betweenness centrality score in the networks (a), (b), and (c), yet the role played by that agent in each network is different. In (a), the central agent gets non-redundant information from community 1 and community 2, which allows that agent to come up with {\it innovations and new ideas}. In (b), the central agent transfers information from community 1 to community 2, which allows that agent to {\it control} the flow of information across groups. In (c), the central agent displays {\it influence} on community 1 and community 2 by disseminating information to those communities. In the three networks, the bridging capital (i.e. extent of the agent's betweenness) is the same, yet the role of the central agent and the nature of its social advantage is different. We are interested in characterizing the extent of structural centrality of the agents in the network rather than the specific roles they play at the interface between groups.      

Characterizing the betweenness centrality for a general network is not mathematically tractable, and only empirical and simulation results are obtained in the literature \cite{refcent2}-\cite{refcent3}. We start by presenting simulation results for the betweenness centrality of agents in a network with 2 types, and show the impact of the exogenous parameters. In Fig. 8, we plot estimates for the expected average betweenness centrality of 2 types of extremely homophilic agents obtained via a Monte Carlo simulation, highlighting the impact of gregariousness, type distribution, and structural opportunism. In Fig. 8(a), we can see that increasing gregariousness decreases centrality, which is intuitive since when each agent forms many links, the number of shortest paths that involve far agents in terms of the geodesic distance will decrease, which leads to a decrease in the average centrality of the whole social group. That is, when all agents are sociable, then all agent are less central (on average). This is in striking contrast with the popularity capital, where gregariousness of agents in a social group was helping them acquiring popularity. Moreover, we can see in Fig. 8(b) that the type distribution plays a role in determining the agents' centrality; majorities are more central than minorities. Such result, which agrees with the qualitative study of Ibarra in \cite{refibarra}, is again in a striking contrast with the popularity capital acquisition where the type distribution had no significant impact on the agents' popularity growth rates. Finally, Fig. 8(c) shows that structural opportunism decreases centrality, which is again in contrast with the popularity acquisition experience where structural opportunism was allowing for the emergence of preferential attachment.
\begin{figure*}[t!]
    \centering
    \begin{subfigure}[b]{0.325\textwidth}
        \centering
        \includegraphics[width=3 in]{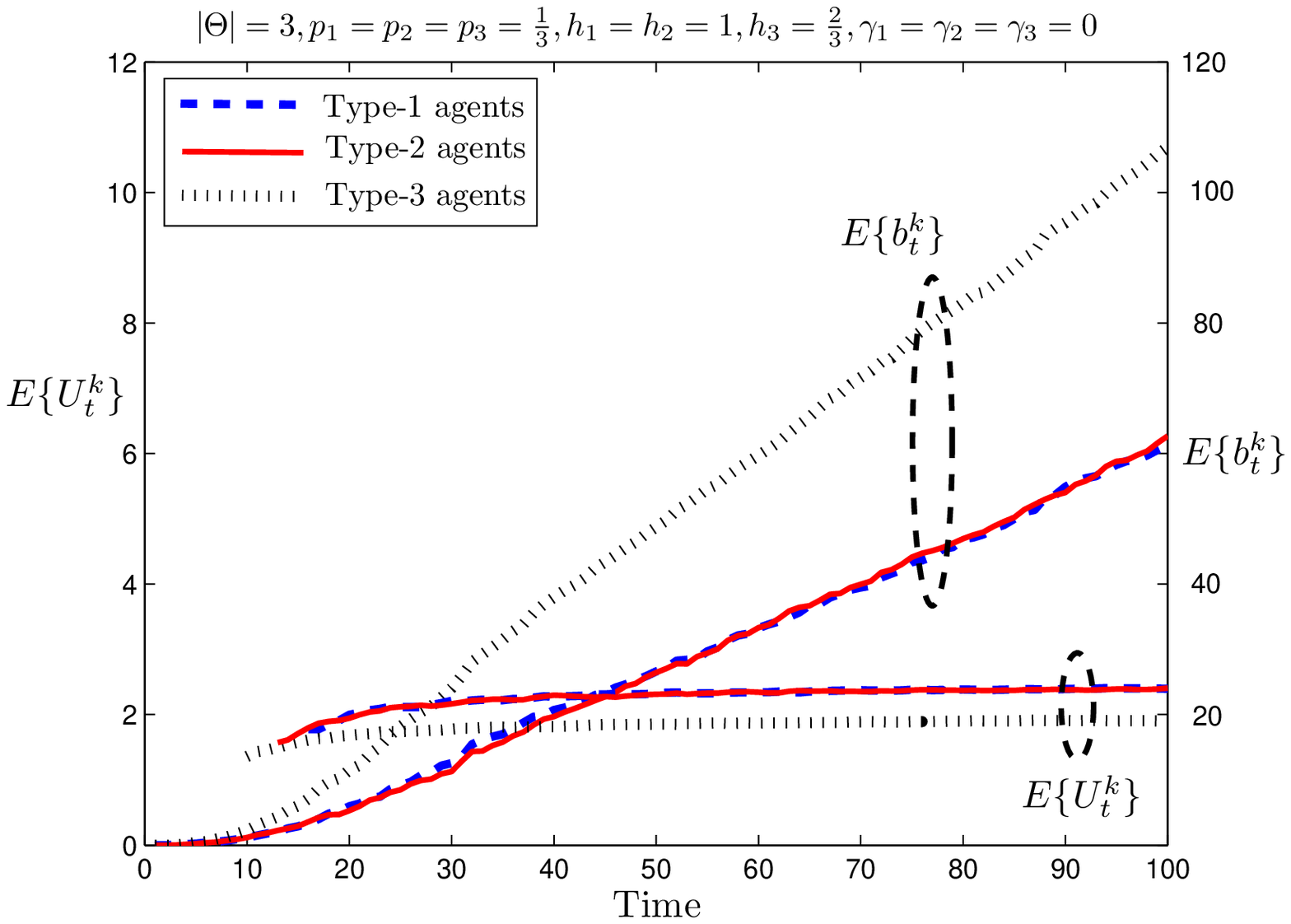}
        \caption{Centrality and utility of non-homophilic and exploring agents in a homophilic society.}
    \end{subfigure}%
    ~ 
    \begin{subfigure}[b]{0.525\textwidth}
        \centering
        \includegraphics[width=3 in]{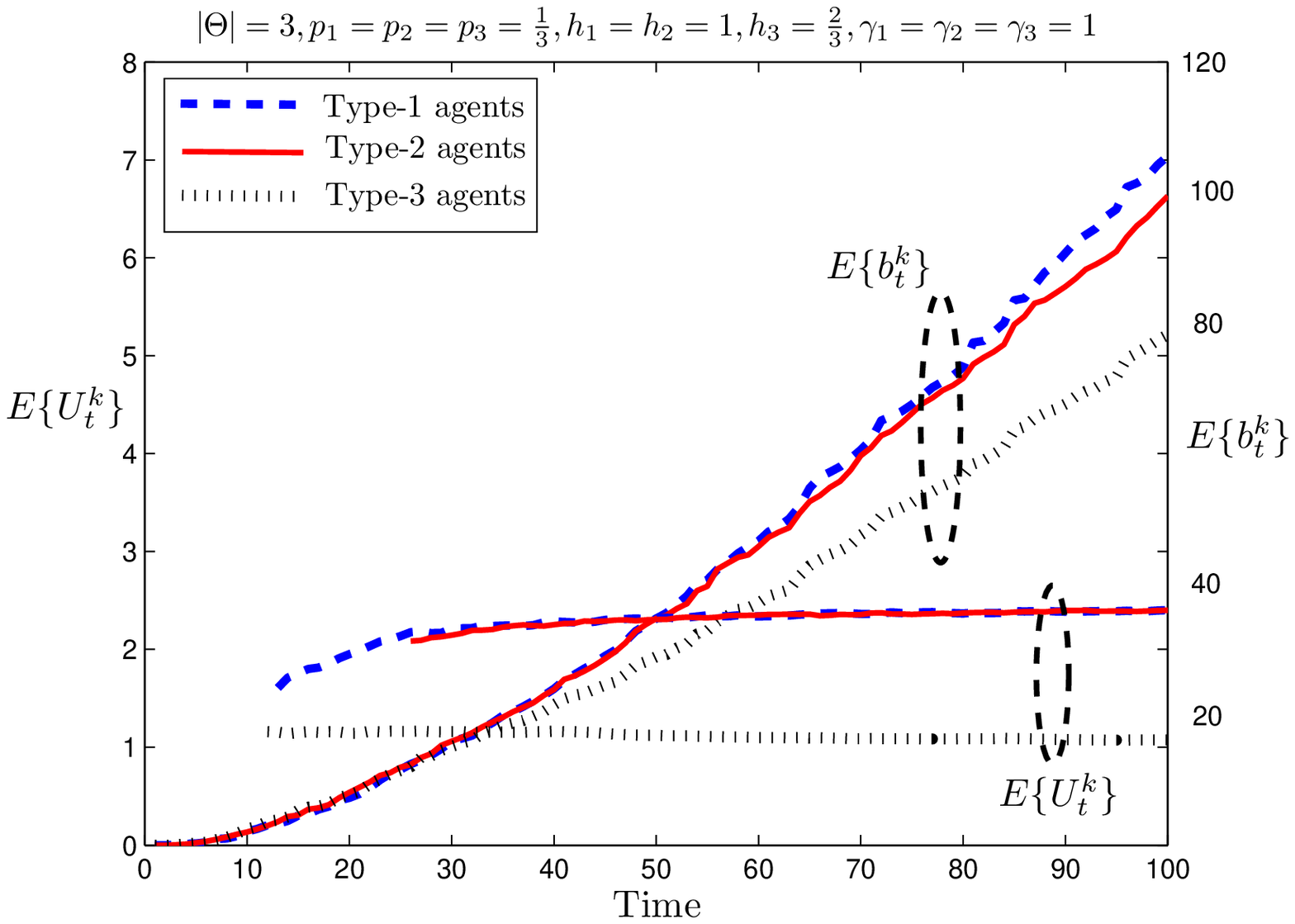}
        \caption{Centrality and utility of non-homophilic, non-exploring agents in a homophilic society.}
    \end{subfigure}
   \caption{Betweenness centrality and average utility of agents in a network that exhibits structural holes.}
\end{figure*}

From the simulation results in Fig. 8, we conclude that {\it homophily creates inequality in the acquisition of bridging capital}, and the different behaviors and norms of different social groups lead to the emergence of different forms of capital. The way that inequality is created in those forms of capital can have very different dependencies on the behaviors of the social groups. When agents in a homophilic group are very sociable, every agent is likely to be popular but not central. That is, socialization increases the bonding capital, but decreases the bridging capital. Moreover, minorities have the same chance as majorities to become popular, yet they have less chances to be central.

The results in Fig. 8 and the discussion above are concerned with the centrality of agents within their social groups. However, a more interesting form of centrality is the one that arises from bridging heterogeneous social groups. In fact, this is the form of social capital that Burt and Putnam have extensively studied in \cite{refcap05} and \cite{ref55sh}. In the following subsection, we introduce a new phenomenon that provides insights into the interplay between centrality and homophily. 

\subsection{Homophily and intergroup bridging}
In this subsection, we study a striking phenomenon that arises from the interplay between homophily and centrality. In particular, we show via simulations that when a social group possesses different homophilic tendencies compared to all other social groups, they end up being the most central group, and thus accrue the largest bridging capital. That is, in an extremely homophilic society, non-homophilic agents bridge segregated social groups, and thus become the most central and gain access to diverse sources of information. On the other hand, a homophilic social group in a non-homophilic society ends up being the most central as they form a highly connected core of the global network structure, which represent an information hub through which all individuals are bridged.

\subsubsection{Filling structural holes: the power of tolerance, open-mindedness, and interdisciplinarity}
As first pointed out by Granovetter in \cite{refcap11}, weak ties (the ties between individuals of different types) have strength as they bridge different segregated social groups. Opinions, beliefs, and ideas are more homogeneous within than between groups, so individuals connected across groups are more exposed to alternative ways of thinking and behaving. In other words, brokerage across the structural holes between homophilic groups provides a vision of new options that are otherwise unseen, which stimulates new ideas and innovation, and also allows agents to control information flow across different groups \cite{ref55sh3}. Thus brokerage creates a social capital, namely a bridging capital, and centrality in such case is gained by agents who link the segregated homophilic groups. 

\begin{figure*}[t!]
    \centering
    \begin{subfigure}[b]{0.3\textwidth}
        \centering
        \includegraphics[width=2.25 in]{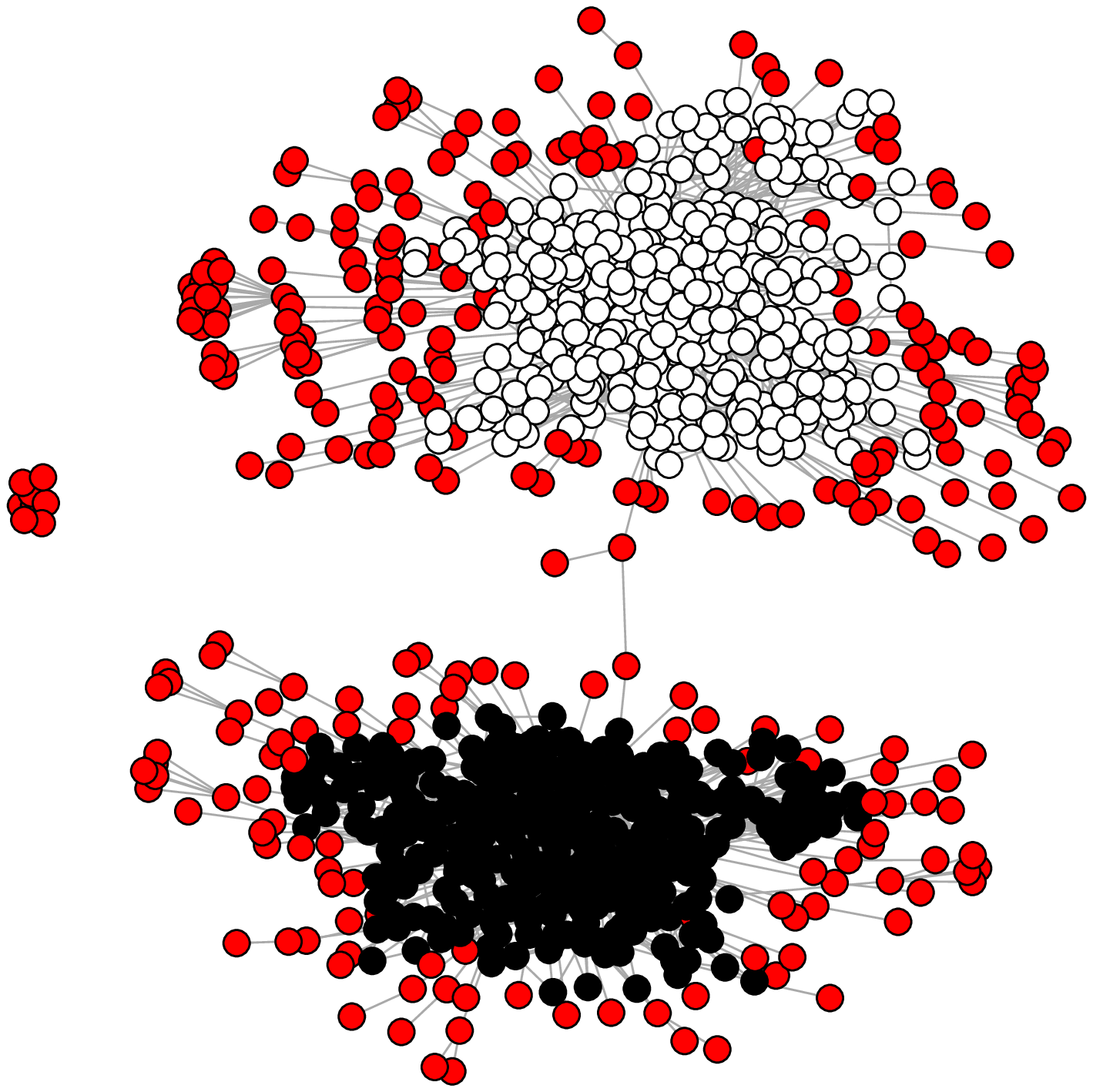}
        \caption{$\gamma_{3} = 1$.}
    \end{subfigure}%
    ~ 
    \begin{subfigure}[b]{0.3\textwidth}
        \centering
        \includegraphics[width=2.25 in]{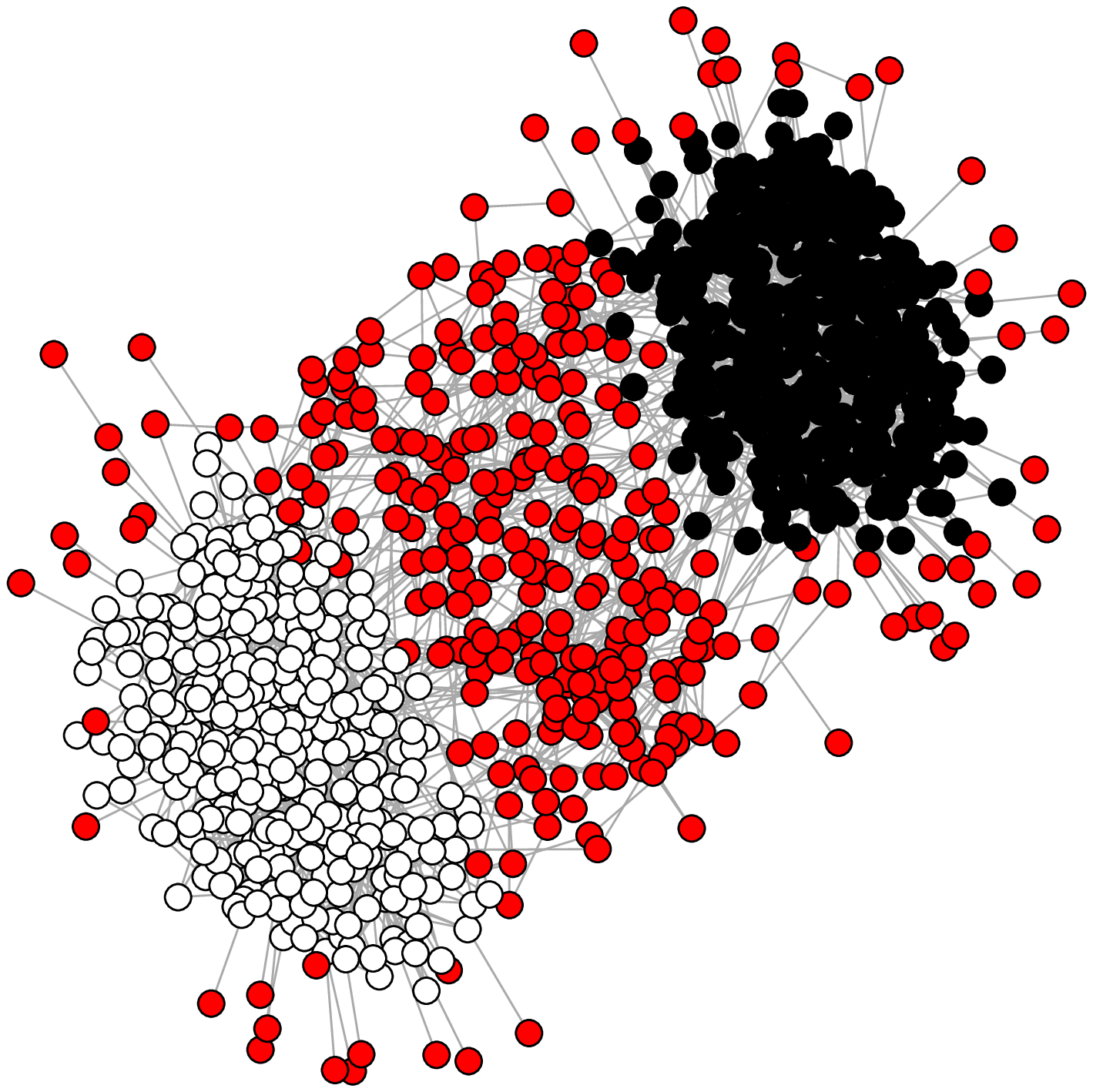}
        \caption{$\gamma_{3} = 0.1$.}
    \end{subfigure}
		~
		\begin{subfigure}[b]{0.3\textwidth}
        \centering
        \includegraphics[width=2.25 in]{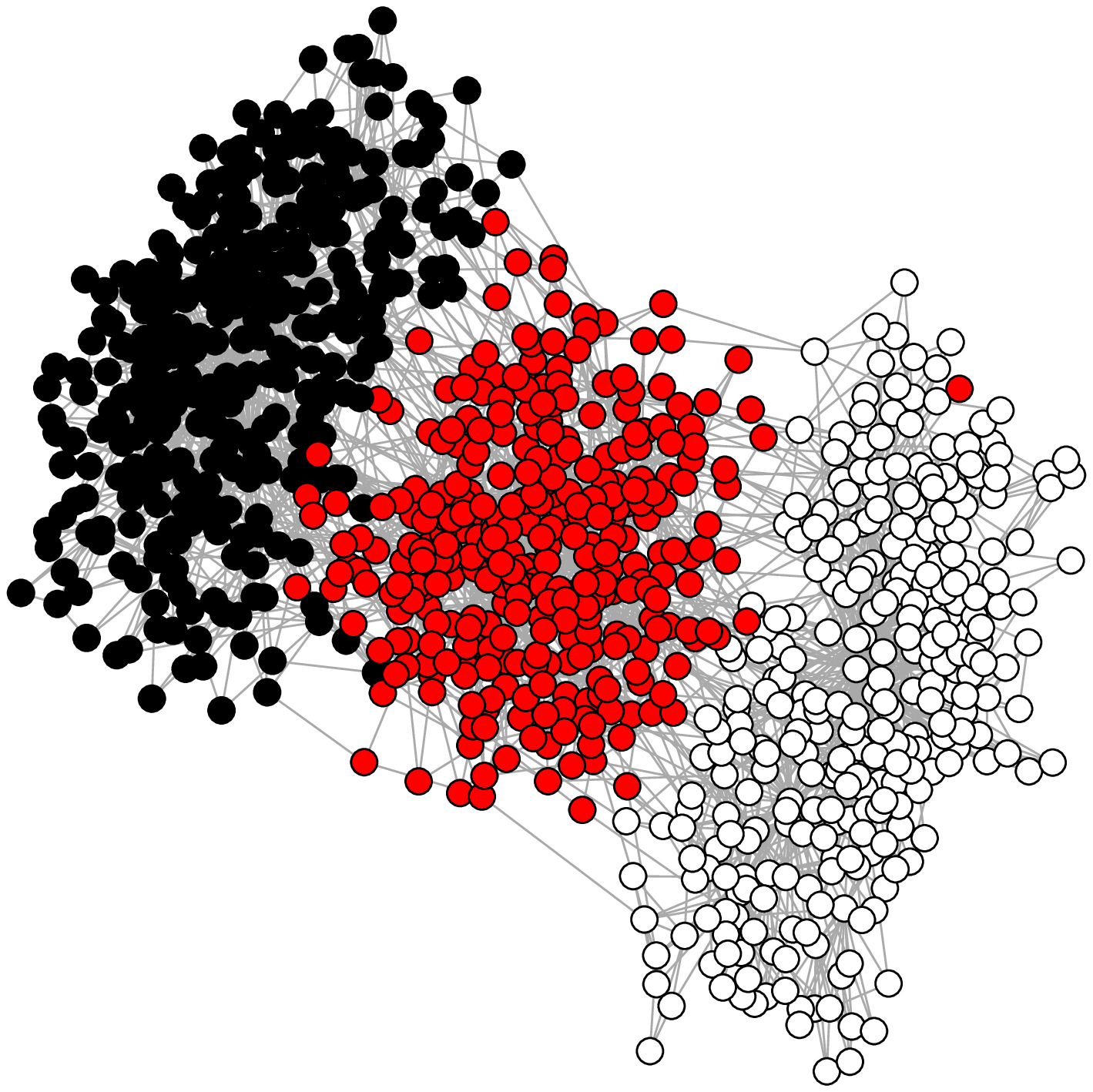}
        \caption{$\gamma_{3} = 0.$.}
    \end{subfigure}
    \caption{A Snapshot for a network with $h_{1}=h_{2}=1$, $\gamma_{1}=\gamma_{2}=1$, and $h_{3}=\frac{1}{3}$. Non-homophilic agents acquire a central position in the network when they are less opportunistic.}
\end{figure*}

In Fig. 9, we demonstrate the interplay between bonding and bridging capital in a network that exhibits structural holes. We carry out a Monte Carlo simulation by simulating 1000 instantiations of the network and plot the average utility and betweenness centrality of each type of agents in the network. We assume that the network has 3 types of agents, where type-1 and type-2 agents are extremely homophilic, whereas type-3 agents have a homophily index of $h_{3}=\frac{1}{3}$. Since type-1 and type-2 agents are extremely homophilic, their bonding capital is maximized, yet both types are disconnected, which creates an opportunity for harvesting bridging capital by type-3 agents since such a type can bridge the two disconnected communities. The ability of type-3 agents to acquire bridging capital depends on their meeting process, i.e. the extent to which they explore the network. If non-homophilic individuals are not exploring the network, then they will end up in a peripheral position in the network, and may not construct their ego networks in finite time (recall Lemma 1). Fig. 9(a) and 9(b) depict the impact of the meeting process on the bridging capital acquired by non-homophilic agents in a homophilic society. It is clear from both figures that there is a tension between the bonding capital (expressed in terms of the average utility), and the bridging capital (expressed in terms of the average centrality). That is to say, homophilic type-1 and type-2 agents acquire higher utility since they enjoy more homogeneous ego networks than type-3 agents. However, when $\gamma_{3}=0$, type-3 agents are more central in the network as they broker the interface between type-1 and type-2 social groups. Contrarily, when $\gamma_{3}=1$, type-3 agents acquire less bonding and bridging capital as they do not explore the network, thus they cannot bridge segregated groups, albeit being non-homophilic.   

Fig. 10 depicts the network structure at $t=1000$ for various meeting processes. In Fig. 10(a), we see that when type-3 agents (red colored) are fully opportunistic, they end up being either marginalized (acquire a peripherial position) or unsatisfied (never forms a satisfactory ego network). When the network exploration rate increases, we see in Fig. 10(b) that only a fraction of non-homophilic agents are prepherial at any time step, yet an intermediate community of such agents emerges and it bridges the otherwise segregated social groups. When $\gamma_{3}=1$, we see in Fig. 10(c) that all non-homophilic agents will reside in the central community and will acquire a central position. Such result provides the following interesting insight: it is not enough for individuals to be non-homophilic, tolerant, or open-minded in order to harvest the bridging capital, but it is essential for them to explore the network structure such that they meet diverse types of agents. Thus, in a society where the meeting process; reflected by policies, norms, regulations, geographical constraints or rules; hinders network exploration, then the existence of non-homophilic individuals does not guarantee that structural holes will be filled. In the following Theorem, we provide the necessary and sufficient conditions for any network to be connected.
\begin{thm}
(Network connectedness) An asymptotically large network is connected almost surely, i.e. $\mathbb{P}\left(\lim_{t\rightarrow\infty}\omega\left(G^{t}\right)=1\right)=1$, if and only if there exists at least one type of agents $k\in\Theta$ with $h_{k}<1$ and $\gamma_{k}<1$. 
\end{thm}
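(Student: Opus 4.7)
The plan is to prove the two directions separately. For sufficiency I show that a single type $k$ with $h_k<1$ and $\gamma_k<1$ supplies a continual stream of ``bridging'' newborns that merge any two pre-existing components almost surely, using Lemma 1 combined with a conditional Borel--Cantelli argument. For necessity I argue the contrapositive, splitting into (i) a sub-case handled directly by Theorem 2, and (ii) a harder mixed sub-case in which some types have $\gamma_k=1$ and $h_k<1$ while the remaining types have $h_k=1$.

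For sufficiency, fix two non-singleton components $C_1, C_2 \subseteq \mathcal{V}^{t_0}$ and show they almost surely merge in finite time. By Lemma 1, $\gamma_k<1$ rules out social unsatisfaction for type-$k$ agents, so every type-$k$ newborn almost surely forms at least $L_k^*(0) \geq 1$ followees in finite time. Three meeting-process facts drive the argument: (i) $\mathcal{K}_{i,i}=\emptyset$ at birth, so the first meeting is a uniform draw from strangers; (ii) at each subsequent link-formation step the probability of meeting a uniform stranger is at least $1-\gamma_k>0$; (iii) $h_k<1$ implies $\bar{L}_k^*(0) \geq 1$, so the marginal utility of the first link, even a cross-type one, is strictly positive and the first meeting deterministically produces a link. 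Hence the probability that a type-$k$ newborn at time $t$ builds an ego network hitting both $C_1$ and $C_2$ is bounded below by a positive quantity that does not decay as $t\to\infty$, since each non-singleton component grows linearly in $t$ on the survival event. A conditional Borel--Cantelli argument applied to the infinite stream of type-$k$ newborns (which arrive at density $p_k>0$) then yields that $C_1$ and $C_2$ a.s.\ merge in finite time, and hence $\omega(G^t)\to 1$ a.s.

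For necessity, suppose every type satisfies $h_k=1$ or $\gamma_k=1$. In Case 1, if $h_\ell=1$ for every $\ell\in\Theta$, Theorem 2 gives $\omega(G^t)\ge|\Theta|\ge 2$ a.s.\ directly. In Case 2, let $A=\{k:h_k=1\}$ and $B=\{k:h_k<1,\gamma_k=1\}$ both be non-empty; type-$A$ agents never initiate cross-type outgoing links, so any bridge between two distinct type-$A_1, A_2$ sub-structures must be formed by a type-$B$ agent. For such an agent $i$ with first followee $j_1$, the constraint $\gamma_k=1$ forces every meeting after $\mathcal{K}_{i,t}$ becomes non-empty to lie inside $j_1$'s component; cross-component linking is therefore confined to the brief window in which $j_1$ still has zero followees. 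I would bound the expected length of this window (using the rapid followee acquisition by the $h=1$ sub-population guaranteed by Theorem 1) and combine it with Lemma 1's positive probability of social unsatisfaction to exhibit a positive-probability event on which no type-$B$ newborn ever links both an $A_1$-agent and an $A_2$-agent, giving $\mathbb{P}(\omega(G^t)\to 1)<1$. The main obstacle is this Case 2 estimate: tightly bounding the cumulative cross-component bridging probability across the infinite stream of type-$B$ newborns and carving out a positive-probability event of persistent disconnection; by comparison the sufficiency direction is more routine, since $1-\gamma_k>0$ yields a uniform lower bound on the per-step stranger-meeting probability that does not decay with $t$.
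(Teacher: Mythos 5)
Your sufficiency direction and Case~1 of necessity line up with the paper's Appendix~M. The paper fixes two disconnected components $\mathcal{C}_1,\mathcal{C}_2$ and lower-bounds, for every $t>\tau$, the probability that a type-$k$ agent attached to $\mathcal{C}_1$ meets (as a stranger, which happens with probability at least $1-\gamma_k$) and links to an agent of $\mathcal{C}_2$ by $p_k(1-\gamma_k)\tfrac{|\mathcal{C}_1|}{|\mathcal{C}_1|+|\mathcal{C}_2|}$, then concludes the components merge almost surely; your newborn-bridging variant with a conditional Borel--Cantelli argument is the same idea, and your explicit appeal to a conditional Borel--Cantelli lemma is actually more careful than the paper's bare assertion that a union of positive-probability events has probability one. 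The all-homophilic necessity case is likewise handled identically (no cross-type link is ever formed, so $\omega(G^t)\geq|\Theta|$).

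The genuine gap is your Case~2, which you diagnose correctly but do not close. The paper disposes of it in one sentence: if $h_k<1$ but $\gamma_k=1$, type-$k$ agents ``never meet strangers'' and so remain inside the component of their first followee, leaving $|\Theta|-1$ components. Your reading of the meeting process is sharper than the paper's: a newborn's first meeting is always a stranger, and stranger meetings continue for as long as $\mathcal{K}_{i,t}=\emptyset$, i.e.\ as long as the first followee has no followees of its own, so bridging is confined to that window rather than impossible. But the estimate you defer is not obviously obtainable. The first followee is drawn uniformly from $t-1$ agents, of whom an expected $O(1)$ still have out-degree zero, so each type-$B$ newborn obtains a second stranger meeting with probability on the order of $1/t$, and $\sum_t 1/t$ diverges; a conditional Borel--Cantelli argument then threatens to produce infinitely many bridging opportunities rather than the summable tail you need to carve out a positive-probability event of persistent disconnection. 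You would need either a genuinely different construction of the disconnection event (for instance, exploiting that when $\bar{L}^{*}_{k}(0)=1$ a second cross-type link has negative marginal utility, though even then a same-type stranger in the other component could be linked), or you must fall back on the paper's blunter claim. As written, your necessity direction is incomplete precisely at the point where the paper's own proof is weakest.
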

\begin{proof} 
See Appendix M. \, \, \IEEEQEDhere 
\end{proof}
 
Theorem 6 says that the existence of non-homophilic type of agents that explore the network with any non-zero rate will guarantee network connectedness. The condition of $\gamma_{k}<1$ follows from our assumption that agents have infinite lifetimes. If agents have finite lifetimes, then a threshold on $\gamma_{k}$ will decide the network connectedness. That is to say, open-minded individuals will have a threshold on the minimum rate of network exploration that is a function of their lifetime, beyond which they will not be able to fill the structural holes and acquire the largest bridging capital. Thus, non-homophilic agents, who can be thought of as being ``tolerant" or ``open-minded" individuals, can bridge segregated social groups and become the most central in the network when their meeting process involves exploring the network. 

The literature argues that the centrality of non-homophilic (or tolerant and open-minded) individuals play an important role in many networks. For instance, in the context of citation networks, Leydesdorff proposes betweenness centrality as a measure of a journal's ``interdisciplinarity". In addition to the \textit{impact factor} which is a measure of a journal's influence, centrality of a journal indicates the role it plays in promoting innovative and interdisciplinary research, which creates a social capital in the research citation and collaboration networks \cite{refcent7}\cite{refcent8}. Moreover, Burt emphasizes the role of centrality in the diffusion of information \cite{ref55sh}, and the creation of new ideas as a result to the exposure to non redundant sources of information \cite{ref55sh3}. It is worth noting that bridging capital not only leads to egocentric returns to individuals, but also creates a shared value for the network: it stimulates innovative and interdisciplinary research ideas, and allows for the diffusion of information along the global network structure.

\subsubsection{Emergence of information hubs: the power of the dominant coalition}

\begin{figure}[t!]
    \centering
    \includegraphics[width=3.5 in]{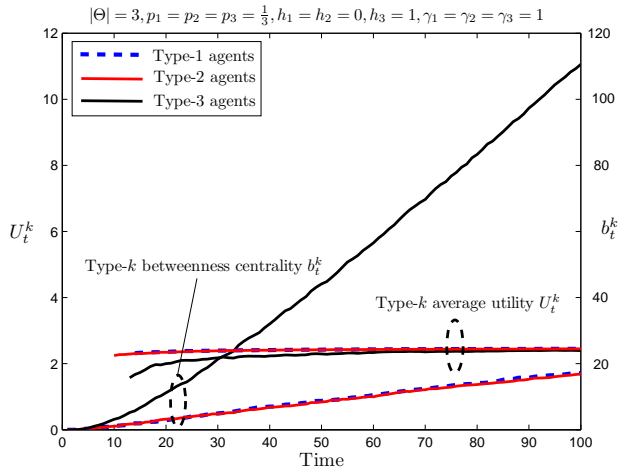}     
    \caption{Betweenness centrality and average utility of agents in a network with a dominant information hub.}
\end{figure}

In the previous subsection, we have shown that non-homophilic agents in a homophilic society acquire the most central network positions and thus attain the highest bridging capital. In this subsection, we show that in the reciprocal scenario where there is one homophilic type of agents in a non-homophilic society, homophilic agents end up being more central than others. In Fig. 11, we plot the average utility and betweenness centrality of 3 types of agents forming a network, where types 1 and 2 agents are extremely non-homophilic, whereas type 3 agents are extremely homophilic. It can be observed that the average centrality of type 3 agents dominates that of types 1 and 2. This is because type 3 agents tend to connect to each other, thus forming a {\it dominant coalition} or an {\it information hub} that resides in the core of the network. The term ``dominant coalition" was coined by Brass in \cite{refbrass} to describe same-gender highly connected influential agents in an organization's interaction network. Unlike the result of the previous subsection, homophilic central agents in a society dominated by non-homophilic types of agents do not bridge structural holes in the network, but rather form a densely connected sub-network through which information is disseminated over the entire network topology. In the context of citation networks, this result predicts that if types corresponds to journals, then a journal that is highly cited and at the same time maintains a self-citation rate that is significantly higher than other journals is likely to form an information hub in a network of papers. Fig. 12 illustrates the formation of an information hub by the extremely homophilic agents in a non-homophilic society, where it can be seen that the type-3 agents form a core sub-network that resides in the center of the global network topology. 

\begin{figure}[t!]
    \centering
    \includegraphics[width=3.5 in]{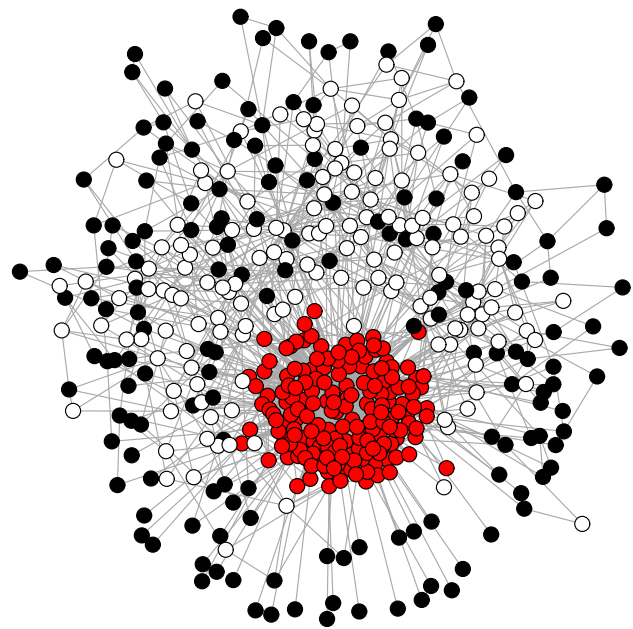}     
    \caption{The formation of an information hub in a network with $h_1 = h_2 = \frac{1}{3},$ and $h_{3} = 1$. The extremely homophilic type-3 agents form a dominant coalition that resides in the core of the network.}
\end{figure}

\section{Conclusions}   
In this paper, we presented a micro-founded mathematical model of the emerging social capital in evolving social networks. In our model, the evolution of the network and of social capital are driven by exogenous and endogenous processes, which are influenced by the extent to which individuals are homophilic, structurally opportunistic, socially gregarious and by the distribution of agents' types in the society. We focused on three different forms of endogenously emerging social capital: bonding, popularity, and bridging capital, and showed how these different forms of capital depend on the exogenous parameters. Bonding capital is maximized in extremely homophilic societies, yet extreme homophily creates structural holes that hinder communications across network components. Popularity capital leads to preferential attachment due to the agents' structural opportunism, which offers agents a cumulative advantage in popularity capital acquisition. Homophily creates inequality in the popularity capital; more gregarious types of agents are more likely to become popular. However, in homophilic societies, individuals who belong to less gregarious, less opportunistic, or major types are likely to be more central in the network and thus acquire a bridging capital. Finally, we studied a striking phenomenon that arises from the interplay between homophily and centrality. In particular, we showed that when a social group that possesses different homophilic tendencies compared to all other social groups, they end up being the most central group, and thus accrue the largest bridging capital. 

\appendices{}
\section{Derivation of the Exogenous Homophily Index}
\renewcommand{\theequation}{\thesection.\arabic{equation}}
\setcounter{mytempeqncnt}{\value{equation}} \setcounter{equation}{0}
From (\ref{eq221}), we know that the exogenous homophily index for type-$k$ agents is given by
\[h_{k} = \lim_{t \rightarrow \infty} \inf_{\mathcal{N}^{+}_{i,t} \in \mathbb{N}^{+}_{i,t}} \frac{N^{s}_{i}(t)}{\mbox{deg}_{i}^{+}(t)}, \forall \theta_{i}=k,\]
which can be rewritten as
\[h_{k} = \lim_{t \rightarrow \infty} \inf_{\mathcal{N}^{+}_{i,t} \in \mathbb{N}^{+}_{i,t}} \frac{N^{s}_{i}(t)}{N^{s}_{i}(t)+N^{d}_{i}(t)}, \forall \theta_{i}=k.\]
The exogenous homophily index can be further rearranged as
\begin{equation}
h_{k} = \lim_{t \rightarrow \infty} \inf_{\mathcal{N}^{+}_{i,t} \in \mathbb{N}^{+}_{i,t}} \frac{1}{1 + \frac{N^{d}_{i}(t)}{N^{s}_{i}(t)}}, \forall \theta_{i}=k.
\label{A1}
\end{equation}
Thus, $h_{k}$ is obtained by finding the set of followees $\mathcal{N}^{+}_{i,t}$ that maximizes $\lim_{t \rightarrow \infty} \frac{N^{d}_{i}(t)}{N^{s}_{i}(t)}$. Note that for an agent $i$, the achieved utility function is given by
\[u^{t}_{i} = v_{\theta_{i}}\left(N^{s}_{i}(t) \alpha^{s}_{\theta_{i}} + N^{d}_{i}(t) \alpha^{d}_{\theta_{i}} \right)-\left(N^{s}_{i}(t)+N^{d}_{i}(t)\right)\, c.\]
Due to the concavity of the utility function, it follows that for any two followee sets $\mathcal{N}^{+}_{i,t}$ and $\bar{\mathcal{N}}^{+}_{i,t} \in \mathbb{N}^{+}_{i,t}$, if $\bar{N}^{d}_{i}(t) > N^{d}_{i}(t),$ then $\bar{N}^{s}_{i}(t) \leq N^{s}_{i}(t)$. This can be easily shown by computing the number of possible same-type followees given a certain number of different-type followees. For instance, assume that $N^{s}_{i}(t)$ and $N^{d}_{i}(t)$ can take non-integer values, and let $g_{\theta_{i}}(x) = \frac{\partial v_{\theta_{i}}(x)}{\partial x}.$ Thus, for any valid value of $N^{d}_{i}(t)$, the value of $N^{s}_{i}(t)$ is given by   
\[N^{s}_{i}(t) = \frac{1}{\alpha^{s}_{\theta_{i}}}\left(g^{-1}_{\theta_{i}}\left(\frac{c}{\alpha^{s}_{\theta_{i}}}\right)-N^{d}_{i}(t)\alpha^{d}_{\theta_{i}}\right).\]
Thus, $N^{s}_{i}(t)$ is a (weakly) decreasing function of $N^{d}_{i}(t)$, and maximizing $\lim_{t \rightarrow \infty} \frac{N^{d}_{i}(t)}{N^{s}_{i}(t)}$ entails maximizing the number of different-type followees $N^{d}_{i}(t)$. Since the utility function is concave, and the marginal benefit of adding a different-type followee is always less than that from adding a similar-type followee (which follows from the assumption of $\alpha^{s}_{\theta_{i}} >  \alpha^{d}_{\theta_{i}}$), then the maximum number of different-type followees in agent $i$'s ego network is equal to the number of different-type followees agent $i$ can link with given that it is not linked to any similar-type followee. Thus, based on the definitions in (\ref{eq301}) and (\ref{eq302}), the maximum number of different-type followees is $\bar{L}_{\theta_{i}}^{*}(0)$, and the corresponding number of similar-type followees is $L_{\theta_{i}}^{*}(\alpha^{d}_{\theta_{i}}\bar{L}_{\theta_{i}}^{*}(0))$. Such followee set materializes for any realization of the meeting process $M_{i}(t) = \{\theta_{1},\theta_{2},.\,.\,.,\theta_{x}, \theta_{x+1},.\,.\,., \theta_{T_{i}}\},$ where $x \geq \bar{L}_{\theta_{i}}^{*}(0), \theta_{j} \neq \theta_{i}, \forall j \leq x$, i.e. agent $i$ meets a ``satisfactory" number of different-type followees first, and then meets the first similar-type followee. The exogenous homophily index for type-$k$ agents is then given by               
\[h_{k} = \frac{L_{k}^{*}(\alpha^{d}_{k}\bar{L}_{k}^{*}(0))}{\bar{L}_{k}^{*}(0)+L_{k}^{*}(\alpha^{d}_{k}\bar{L}_{k}^{*}(0))},\]
where $\bar{L}_{\theta_{i}}^{*}(\alpha)$ can be computed as follows 
\[L_{\theta_{i}}^{*}(\alpha) =   \left\{
     \begin{array}{lr} 
      \left\lfloor L \right\rfloor& : v_{\theta_{i}}\left(L \alpha_{\theta_{i}}^{s}\right)-v_{\theta_{i}}\left((L-1) \alpha_{\theta_{i}}^{s}\right) < c\\
      \left\lceil L \right\rceil& : v_{\theta_{i}}\left(L \alpha_{\theta_{i}}^{s}\right)-v_{\theta_{i}}\left((L-1) \alpha_{\theta_{i}}^{s}\right) > c
     \end{array}
   \right.\]
with $L =  \frac{1}{\alpha_{\theta_{i}}^{s}}\left(g_{\theta_{i}}^{-1}\left(\frac{c}{\alpha_{\theta_{i}}^{s}}\right)-\alpha\right),$ and $g_{\theta_{i}}(x) = \frac{\partial v_{\theta_{i}}(x)}{\partial x}.$ Similarly, $\bar{L}_{\theta_{i}}^{*}(\alpha)$ can be obtained as follows 
\[\bar{L}_{\theta_{i}}^{*}(\alpha) =   \left\{
     \begin{array}{lr} 
      \left\lfloor \bar{L} \right\rfloor& : v_{\theta_{i}}\left(\bar{L} \alpha_{\theta_{i}}^{d}\right)-v_{\theta_{i}}\left((\bar{L}-1) \alpha_{\theta_{i}}^{d}\right) < c\\
      \left\lceil \bar{L} \right\rceil& : v_{\theta_{i}}\left(\bar{L} \alpha_{\theta_{i}}^{d}\right)-v_{\theta_{i}}\left((\bar{L}-1) \alpha_{\theta_{i}}^{d}\right) > c
     \end{array}
   \right.\]
where $\bar{L} = \frac{1}{\alpha_{\theta_{i}}^{d}}\left(g_{\theta_{i}}^{-1}\left(\frac{c}{\alpha_{\theta_{i}}^{d}}\right)-\alpha\right).$

%\begin{figure}[t!]
%    \centering
%    \includegraphics[width=2.5 in]{Utility_concave.png}     
%    \caption{An exemplary utility function with a corresponding exogenous homophily index of $\frac{2}{5}$.}
%\end{figure}

Fig. 11 shows an exemplary utility function with a corresponding exogenous homophily index of $\frac{2}{5}$, which is attained if the agent meets $\bar{L}_{\theta_{i}}^{*}(0) = 2$ different-type agents first, and then starts linking only with same-type agents.

\section{Proof of Lemma 1}
\renewcommand{\theequation}{\thesection.\arabic{equation}}
\setcounter{mytempeqncnt}{\value{equation}} \setcounter{equation}{0}
We start by showing that for any agent $i$, if $\gamma_{\theta_{i}} = 1$ and $0 < h_{\theta_{i}} < 1$, then $\mathbb{P}\left(T_{i}=\infty\left|\theta_{i}\right.\right)>0$. First, since $h_{\theta_{i}} < 1,$ then agent $i$ forms a link with the first agent it meets regardless of its type, i.e. $a_{i}^{i} = 1$ for any $\theta_{m_{i}(i)} \in \Theta$. After that, since $\gamma_{\theta_{i}} = 1$, then agent $i$ keeps meeting agents in the followees of followees choice set $\mathcal{K}_{i,t}$ after it forms the first link. Let $\mathcal{K}^{\theta_{i}}_{i,t} \subseteq \mathcal{K}_{i,t}$ be the set of type-$\theta_{i}$ followees of followees. Thus, one possible event that can lead to an agent getting socially unsatisfied is that $\mathcal{K}^{\theta_{i}}_{i,t}$ becomes an empty set in each time step. Therefore, the probability that agent $i$ gets socially unsatisfied can be lower bounded as follows            
\begin{equation}
\mathbb{P}\left(T_{i}=\infty\left|\theta_{i}\right.\right) \geq \mathbb{P}\left(\left.\bigcup_{t=i+1}^{\infty}\mathcal{K}^{\theta_{i}}_{i,t} = \emptyset \, \right| \, \theta_{i} \right).
\label{eqlowbound}
\end{equation}					
The inequality in (\ref{eqlowbound}) follows from the fact that there are other events that can lead to social unsatisfaction. However, it suffices to show that $\mathbb{P}\left(\left.\bigcup_{t=i+1}^{\infty}\mathcal{K}^{\theta_{i}}_{i,t} = \emptyset \right|\theta_{i}\right)>0$ in order to prove that $\mathbb{P}\left(T_{i}=\infty\left|\theta_{i}\right.\right)>0$. Note that $\mathbb{P}\left(\left.\bigcup_{t=i+1}^{\infty}\mathcal{K}^{\theta_{i}}_{i,t} = \emptyset \right|\theta_{i}\right)$ can be lower bounded as follows 
\[\mathbb{P}\left(\left.\bigcup_{t=i+1}^{\infty}\mathcal{K}^{\theta_{i}}_{i,t} = \emptyset \right|\theta_{i}\right)>\mathbb{P}\left(\left\{\theta_{m_{i}(j)}\neq\theta_{i}\right\}_{j=i}^{i+\bar{L}_{\theta_{i}}^{*}(0)-1}\right)\]
\[\mathbb{P}\left(\left.\bigcup_{t=i}^{i+\bar{L}_{\theta_{i}}^{*}(0)-1}\mathcal{N}^{+,\theta_{i}}_{m_{i}(t),t} = \emptyset\right|\theta_{i},\left\{\theta_{m_{i}(j)}\neq\theta_{i}\right\}_{j=i}^{i+\bar{L}_{\theta_{i}}^{*}(0)-1}\right).\]
Each agent of type other than $\theta_{i}$ have a non-zero probability of having no type-$\theta_{i}$ agents in their followees set. To see why this is true, consider any agent $j$ with type $\theta_{j} \neq \theta_{i}$. Such an agent can have a followee set that contains no type-$\theta_{i}$ agents with a non-zero probability, which can happen when agent $j$ meets $L_{\theta_{j}}^{*}(0)$ strangers in sequence, and all such agents turn out to be type-$\theta_{j}$ agents. Such event happens with a probability that is lower bounded as follows
\[\mathbb{P}\left(N^{s}_{j}(t) = L_{\theta_{j}}^{*}(0)\left|\theta_{j}\right.\right) > (1-\gamma_{\theta_{j}})^{L_{\theta_{j}}^{*}(0)-1}\left(\frac{1}{p_{\theta_{j}}}\right)^{L_{\theta_{j}}^{*}(0)},\] 
which is always positive for $\gamma_{\theta_{j}} < 1$. We further show that even for $\gamma_{\theta_{j}} = 1$, any agent $j$ has a positive probability for not connecting to any type-$\theta_{i}$ agent. For instance, agent $j$ can initially connect to a type-$\theta_{j}$ agent, say agent $k$, which happens with a probability of $\frac{1}{p_{\theta_{j}}}.$ Agent $k$ in turn may have connected initially to another type-$\theta_{j}$ agent, and such an agent may also have connected initially to another type-$\theta_{j}$ agent, and so on. Thus, if at each time step agent $j$ meets its type-$\theta_{j}$ followee of followee whom its followee has met initially in the network, then agent $j$ can end up being connected to a set of exclusively same-type agents. It can be easily shown that this happens with a probability that is lower bounded by             
\[\mathbb{P}\left(N^{s}_{j}(t) = L_{\theta_{j}}^{*}(0)\left|\theta_{j}\right.\right) > \frac{p_{\theta_{j}}^{L_{\theta_{j}}^{*}(0)}}{\prod_{m=0}^{L_{\theta_{j}}^{*}(0)-1}\left((m+1) \bar{L}_{\theta_{j}}^{*}(0)- m\right)}.\] 
Since agent $i$ can initially link to a different-type agent, and since any different-type agent can have a set of followees with no type-$\theta_{i}$ agents, it follows that the lower bound on $\mathbb{P}\left(\left.\bigcup_{t=i+1}^{\infty}\mathcal{K}^{\theta_{i}}_{i,t} = \emptyset \right|\theta_{i}\right)$ is greater than zero, thus $\mathbb{P}\left(\left.\bigcup_{t=i+1}^{\infty}\mathcal{K}^{\theta_{i}}_{i,t} = \emptyset \right|\theta_{i}\right) >0$. 

Now we prove the converse, and show that if $\mathbb{P}\left(T_{i}=\infty\left|\theta_{i}\right.\right)>0$, then $0 < h_{\theta_{i}} < 1$ and $\gamma_{\theta_{i}} = 1$. First, since $\mathbb{P}\left(T_{i}=\infty\left|\theta_{i}\right.\right)>0$, then $L_{k}^{*}(\alpha^{d}_{k}\bar{L}_{k}^{*}(0))>0$, i.e. agent $i$ must add at least one similar-type followee in order to get socially unsatisfied, and linking with $\bar{L}_{k}^{*}(0)$ different-type followees does not suffice to saturate the utility function and terminate the meeting process, thus $h_{\theta_{i}}>0$. Now assume that $h_{\theta_{i}} = 1$. In this case, agent $i$ forms its first link only when it meets a similar-type agent, and after that it meets a similar-type agent picked from $\mathcal{K}_{i,t}$, or meets an agent with an uncertain type picked uniformly at random from the network. At each time step after agent $i$ forms its first link, the probability that it meets a similar-type agent is given by
\[\mathbb{P}\left(\theta_{m_{i}(t)} = \theta_{i}\left|\mbox{deg}_{i}^{+}(t)>1\right.\right) = \gamma_{\theta_{i}} \mathbb{I}_{\left\{\mathcal{K}_{i,t} \neq \emptyset \right\}} + \frac{(1-\gamma_{\theta_{i}})}{t}\left|\mathcal{V}^{t}_{\theta_{i}}\right|,\]
which for a large $t$ converges to
\[\lim_{t \rightarrow \infty} \mathbb{P}\left(\theta_{m_{i}(t)} = \theta_{i}\left|\mbox{deg}_{i}^{+}(t)>1\right.\right) = \gamma_{\theta_{i}} + (1-\gamma_{\theta_{i}}) p_{\theta_{i}},\]      
which is always non-zero for any value of $\gamma_{\theta_{i}}$. Thus, an agent with $h_{\theta_{i}} = 1$ has a non-zero probability to meet a similar-type agent at each time step, which implies that $\lim_{t \rightarrow \infty} \mbox{deg}_{i}^{+}(t) = L_{\theta_{i}}^{*}(0),$ and $\mathbb{P}\left(T_{i}=\infty\left|\theta_{i},h_{\theta_{i}}=1\right.\right)=0$. Therefore, $\mathbb{P}\left(T_{i}=\infty\left|\theta_{i}\right.\right)>0$ implies that $0 < h_{\theta_{i}} < 1$. Moreover, we know that any agent with $\gamma_{\theta_{i}} < 1$ will experience the following meeting process in a large enough network
\[\mathbb{P}\left(\theta_{m_{i}(t)} = k\left|\mbox{deg}_{i}^{+}(t)>1\right.\right) = \gamma_{\theta_{i}} \frac{\left|\mathcal{K}^{k}_{i,t}\right|}{\left|\mathcal{K}_{i,t}\right|} + (1-\gamma_{\theta_{i}})p_{k},\]
which is lower bounded by        
\[\mathbb{P}\left(\theta_{m_{i}(t)} = k\left|\mbox{deg}_{i}^{+}(t)>1\right.\right) \geq (1-\gamma_{\theta_{i}})p_{k}.\]
Since $(1-\gamma_{\theta_{i}})p_{k} > 0, \forall k \in \Theta$, any agent with $\gamma_{\theta_{i}} < 1$ has a non-zero probability for meeting a similar-type agent at each time step, which means that such an agent is not socially unsatisfied in the almost sure sense. Thus, $\mathbb{P}\left(T_{i}=\infty\left|\theta_{i}\right.\right)>0$ implies that $\gamma_{\theta_{i}} = 1$.

\section{Proof of Theorem 1}
\renewcommand{\theequation}{\thesection.\arabic{equation}}
\setcounter{mytempeqncnt}{\value{equation}} \setcounter{equation}{0}
We divide this proof into two parts. First, we prove that for non-homophilic societies, the EFT of an agent $i$ is equal to $L_{\theta_{i}}^{*}(0)$. Next, we show that in homophilic societies, the distribution of EFTs for the agents in a large network converges to a steady-state distribution.\\
{\bf EFT for non-homophilic societies:}\\
Recall that the exogenous homophily index of an agent $i$ is given by
\begin{equation}
h_{\theta_{i}} = \frac{L_{\theta_{i}}^{*}\left(\alpha_{\theta_{i}}^{d}\bar{L}_{\theta_{i}}^{*}\left(0\right)\right)}{\bar{L}_{\theta_{i}}^{*}(0)+L_{\theta_{i}}^{*}\left(\alpha_{\theta_{i}}^{d}\bar{L}_{\theta_{i}}^{*}(0)\right)}.
\label{B1}
\end{equation}
We start by studying the case when $h_{\theta_{i}}=0$. From (\ref{B1}), we know that if $h_{\theta_{i}}=0$, then $L_{\theta_{i}}^{*}\left(\alpha_{\theta_{i}}^{d}\bar{L}_{\theta_{i}}^{*}\left(0\right)\right) = 0,$ which means that $L_{\theta_{i}}^{*}\left(0\right) = \bar{L}_{\theta_{i}}^{*}\left(0\right)$, i.e. agent $i$ forms a link with any agent it meets over time as long as its utility function is not yet saturated. Therefore, at any date $t \geq i$ we have
\[\mathbb{P}\left(a^{t}_{i} = 1 \left|\theta_{m_{i}(t)}\right.\right) = \left\{
     \begin{array}{lr} 
      1 &: t \leq L_{\theta_{i}}^{*}\left(0\right)\\ 
      0 &: t > L_{\theta_{i}}^{*}\left(0\right). 
     \end{array}
   \right.\]
Thus, agent $i$ forms a link with all the agents it meets until its utility function is saturated, which happens after $L_{\theta_{i}}^{*}(0)$ time steps almost surely, i.e. 
\[\mathbb{P}\left(T_{i} = L_{\theta_{i}}^{*}(0)\right) = 1.\]
The EFT of a non-homophilic agent $i$ is independent of the network structure and the types of agents it meets, i.e. $\mathbb{P}\left(a^{t}_{i} = 1 \left|\theta_{m_{i}(t)}, G^{t}\right.\right) = \mathbb{I}_{\left\{t \leq L_{\theta_{i}}^{*}\left(0\right)\right\}}.$ The EFT of agent $i$ is equal to $L_{\theta_{i}}^{*}(0)$ almost surely since agent $i$ meets other agents at a constant rate and forms links with them regardless of their types.\\ 
{\bf EFT for homophilic societies:}\\
Now we focus on the case when $h_{\theta_{i}}=1$. In this case, we have $\bar{L}_{\theta_{i}}^{*}(0) = 0, \forall \theta_{i} \in \Theta$, i.e. agents form links with same-type agents only. At any time step, agent $i$ forms a link with the agent it meets if and only if the agent it meets is a same-type agent and agent $i$'s utility function is not saturated, thus each agent forms exactly $L_{\theta_{i}}^{*}(0)$ links. The rest of the proof is organized as follows: we first derive the probability that an agent forms a link in a given time step, and then we show that for a large network, this probability becomes independent of the network topology, which implies that the ego network formation process converges to a stationary process.  

In the following, we derive the probability that an agent forms a link in a given time step. Note that the probability that an agent $i$ forms a link at a given time step conditioned on the current step graph (current network topology) and the agent it meets can be written as  
\begin{equation}
\mathbb{P}\left(a^{t}_{i} = 1 \left|\theta_{m_{i}(t)},G^{t}\right.\right) = \mathbb{I}_{\left\{\theta_{m_{i}(t)} = \theta_{i}, \, \mbox{deg}_{i}^{+}(t) \leq L_{\theta_{i}}^{*}(0)\right\}}.
\label{thm1eq1}
\end{equation}
From (\ref{thm1eq1}), and using Bayes rule, the probability that agent $i$ forms a link at time $t$ conditioned on the current step graph is given by
\begin{equation}
\mathbb{P}\left(a^{t}_{i} = 1 \left|G^{t}\right.\right) = \mathbb{P}\left(\theta_{m_{i}(t)} = \theta_{i}\left|G^{t}\right.\right) \, \mathbb{I}_{\left\{\mbox{deg}_{i}^{+}(t) \leq L_{\theta_{i}}^{*}(0)\right\}}.
\label{thm1eq2}
\end{equation} 
Note that the meeting process of agent $i$ goes throught two stages (see Section 2). Upon its arrival, and until it becomes attached to the network by forming its first link, agent $i$ meets agents picked uniformly at random from the network, and it forms its first link only if it meets a type-$\theta_{i}$ agent, therefore, for $\mbox{deg}_{i}^{+}(t) = 0$, (\ref{thm1eq2}) can be written as
\begin{equation}
\mathbb{P}\left(a^{t}_{i} = 1 \left|\mbox{deg}_{i}^{+}(t) = 0, G^{t}\right.\right) = \frac{\left|\mathcal{V}_{\theta_{i}}^{t}\right|-1}{\left|\mathcal{V}^{t}\right|-1}.
\label{thm1eq3}
\end{equation}
After agent $i$ becomes attached to the network (i.e. $\mbox{deg}_{i}^{+}(t) > 0$), it starts meeting other agents picked from two choice sets: the set of strangers $\bar{\mathcal{K}}_{i,t}$, and the set of followees of followees $\mathcal{K}_{i,t}$. We know from the definition of the meeting process in Section 2 that the probability that agent $i$ meets a type-$\theta_{i}$ agent is given by 
\[\mathbb{P}\left(\theta_{m_{i}(t)}=\theta_{i} \left| \mbox{deg}_{i}^{+}(t) > 0, G^{t}\right.\right) =\]
\[\left((1-\gamma_{\theta_{i}})(1-\mathbb{P}(\mathcal{K}_{i,t} = \emptyset))+\mathbb{P}(\mathcal{K}_{i,t} = \emptyset)\right) \times \]
\[\mathbb{P}\left(\theta_{m_{i}(t)}=\theta_{i}\left|m_{i}(t) \in \bar{\mathcal{K}}_{i,t} \cup \mathcal{K}_{i,t}, G^{t}\right.\right) +\]
\begin{equation}
\gamma_{\theta_{i}} (1-\mathbb{P}(\mathcal{K}_{i,t} = \emptyset)) \, \mathbb{P}\left(\theta_{m_{i}(t)}=\theta_{i}\left|m_{i}(t) \in \mathcal{K}_{i,t}, G^{t}\right.\right),
\label{E1}
\end{equation}
which can be simplified as follows
\[\mathbb{P}\left(\theta_{m_{i}(t)}=\theta_{i} \left| \mbox{deg}_{i}^{+}(t) > 0, G^{t}\right.\right) = \]
\[\frac{\gamma_{\theta_{i}}K^{s}_{i}(t)}{K_{i}(t)} \, (1-\mathbb{P}(K_{i}(t) = 0)) + \]
\begin{equation}
\left(1-\gamma_{\theta_{i}}+\gamma_{\theta_{i}}\mathbb{P}(K_{i}(t) = 0)\right) \frac{\left|\mathcal{V}_{\theta_{i}}^{t}\right|-N^{s}_{i}(t)-1}{\left|\mathcal{V}^{t}\right|-\mbox{deg}_{i}^{+}(t)-1}.
\label{thm1eq5}
\end{equation}
The probability that the choice set $\mathcal{K}_{i,t}$ becomes empty at any time step can be expressed as follows. First, note that if the event $K_{i}(t) = 0$ happens, then agent $i$ should have been connected initially to an agent that has not yet constructed its ego network, i.e. $N_{m_{i}(\tau)}(\tau) < L_{\theta_{i}}^{*}(0)$ if $i$ has formed its first link at time $\tau$. This is because otherwise we will have $K_{i}(t) \geq L_{\theta_{i}}^{*}(0), \forall t \geq \tau$, which implies that $\mathbb{P}(K_{i}(t) = 0) = 1$ at every time step. At any point of time $t>i$, the probability that the choice set $\mathcal{K}_{i,t}$ becomes empty is equal to the probability that $N_{m_{i}(\tau)}(\tau) < L_{\theta_{i}}^{*}(0)$ and the probability that the new linking actions of agents $i$ and $m_{i}(\tau)$ has not led to the emergence of new followees of followees. Therefore, we can bound $\mathbb{P}(K_{i}(t) = 0)$ as follows 
\[\mathbb{P}(K_{i}(t) = 0) \leq \frac{1}{\left|\mathcal{V}^{t}_{\theta_{i}}\right|}\sum_{m \in \mathcal{V}^{t}_{\theta_{i}}}\mathbb{I}_{\left\{t<T_{m}\right\}}.\]
That is, the probability that the followees of followees choice set of agent $i$ becomes empty is always less than the probability that agent $i$ initially links to an agent with an unsatisfied ego network. Therefore, the probability of forming a link at any time step conditioned on the current network topology is given by (\ref{eqlong}).  

\begin{figure*}[!t]
\setcounter{mytempeqncnt}{\value{equation}} \setcounter{equation}{6}
\begin{equation}
\mathbb{P}\left(a^{t}_{i} = 1 \left|G^{t}\right.\right) = \left\{
     \begin{array}{lr} 
      \frac{\left|\mathcal{V}_{\theta_{i}}^{t}\right|-N^{s}_{i}(t)-1}{\left|\mathcal{V}^{t}\right|-\mbox{deg}_{i}^{+}(t)-1} &: \mbox{deg}_{i}^{+}(t)=0\\ 
      \frac{\gamma_{\theta_{i}}K^{s}_{i}(t)}{K_{i}(t)} \, (1-\mathbb{P}(K_{i}(t) = 0)) +  \left(1-\gamma_{\theta_{i}}+\gamma_{\theta_{i}}\mathbb{P}(K_{i}(t) = 0)\right) \frac{\left|\mathcal{V}_{\theta_{i}}^{t}\right|-N^{s}_{i}(t)-1}{\left|\mathcal{V}^{t}\right|-\mbox{deg}_{i}^{+}(t)-1}&: 0<\mbox{deg}_{i}^{+}(t)\leq L_{\theta_{i}}^{*}(0)\\
			0 &: \mbox{deg}_{i}^{+}(t) > L_{\theta_{i}}^{*}(0)
     \end{array}
   \right.
\label{eqlong}
\end{equation}
\setcounter{equation}{\value{mytempeqncnt}+1} \hrulefill{}\vspace*{4pt}
\end{figure*}
\begin{figure*}[!t]
\setcounter{mytempeqncnt}{\value{equation}} \setcounter{equation}{7}
\begin{equation}
\lim_{t \rightarrow \infty} \mathbb{P}\left(a^{t}_{i} = 1 \left|G^{t}\right.\right) = \left\{
     \begin{array}{lr} 
      p_{\theta_{i}} &: \mbox{deg}_{i}^{+}(t)=0\\ 
      \gamma_{\theta_{i}}  +  \left(1-\gamma_{\theta_{i}}\right) p_{\theta_{i}}&: 0<\mbox{deg}_{i}^{+}(t)\leq L_{\theta_{i}}^{*}(0)\\
			0 &: \mbox{deg}_{i}^{+}(t) > L_{\theta_{i}}^{*}(0)
     \end{array}
   \right.
\label{eqlong3}
\end{equation}
\setcounter{equation}{\value{mytempeqncnt}+1} \hrulefill{}\vspace*{4pt}
\end{figure*}
Note that the expressions in (\ref{eqlong}) depend on the actual realization of the graph process at time $t$, i.e. the step graph $G^{t}$. In the following, we show that this dependency vanishes when the network is asymptotically large. First, since any type-$k$ agent has $h_{k}=1$, then all the followees of followees for an agent $i$ has a type $\theta_{i}$, i.e. $\mathbb{P}\left(\left. \frac{K^{s}_{i}(t)}{K_{i}(t)} = 1 \right|\mathcal{K}_{i,t} \neq \emptyset\right) = 1$. Moreover, in a large network we have $\mathbb{P}(K_{i}(t) = 0) \leq  \lim_{t \rightarrow \infty} \frac{1}{\left|\mathcal{V}^{t}_{\theta_{i}}\right|}\sum_{m \in \mathcal{V}^{t}_{\theta_{i}}}\mathbb{I}_{\left\{t<T_{m}\right\}}$, which is equivalent to $\mathbb{P}(K_{i}(t) = 0) \leq  \lim_{t \rightarrow \infty} \frac{1}{p_{\theta_{i}}t}\sum_{m \in \mathcal{V}^{t}_{\theta_{i}}}\mathbb{I}_{\left\{t<T_{m}\right\}}$. Since all agents are extremely homophilic, they have finite EFTs (recall Lemma 1), which means that $\lim_{t \rightarrow \infty} \frac{1}{p_{\theta_{i}}t}\sum_{m \in \mathcal{V}^{t}_{\theta_{i}}}\mathbb{I}_{\left\{t<T_{m}\right\}} = 0$. Thus, in a large network $\mathbb{P}(K_{i}(t) = 0) = 0,$ and $K_{i}(t)$ is bounded by $L_{\theta_{i}}^{*}(0) \leq K_{i}(t) \leq \left(L_{\theta_{i}}^{*}(0)\right)^{2}$. Furthermore, we have that
\begin{align}
\lim_{t \rightarrow \infty} \frac{\left|\mathcal{V}_{\theta_{i}}^{t}\right|-N^{s}_{i}(t)-1}{\left|\mathcal{V}^{t}\right|-\mbox{deg}_{i}^{+}(t)-1} &= \lim_{t \rightarrow \infty} \frac{p_{\theta_{i}}t-L_{\theta_{i}}^{*}(0)-1}{t-L_{\theta_{i}}^{*}(0)-1} \nonumber\\
&= p_{\theta_{i}}. 
\end{align}
This leads to the expressions in (\ref{eqlong3}). It is clear from (\ref{eqlong3}) that for a large network, the probability of taking a link formation decision at any time step depends only on the current number of followees of agent $i$. Thus, linking decisions depend only on agent $i$'s ego network, and are independent on the global network structure.               

Let $N^{j}_{i}$ for $j>1$, be the waiting time between forming link $j-1$ and link $j$ by agent $i$, and $N^{1}_{i}$ be the waiting time between forming the first link and agent $i$'s birth date. Thus, the EFT is given by $T_{i} = \sum_{j=1}^{L_{\theta_{i}}^{*}(0)} N^{j}_{i}$. Note that $N^{1}_{i}$ is the number of agents met by agent $i$ before it forms its first link. Therefore, when agent $i$ is singleton, every meeting will result in the formation of the first link with a probability of $p_{\theta_{i}}$ independent on the previous meetings, which means that $N^{1}_{i}$ is a {\it geometric random variable} with a success probability of $p_{\theta_{i}}$, and $\mathbb{E}\left\{N^{1}_{i}\right\} = \sum_{m=1}^{\infty} m p_{\theta_{i}}\left(1-p_{\theta_{i}}\right)^{m-1} = \frac{1}{p_{\theta_{i}}}$. Moreover, for a large network, the probability of forming a link at time step $t$ is 
\[
\mathbb{P}\left(\theta_{m_{i}(t)}=\theta_{i}\left|\mbox{deg}_{i}^{+}(t) > 0, G^{t}\right.\right) = (1-\gamma_{\theta_{i}}) \, p_{\theta_{i}} + \gamma_{\theta_{i}}. 
\]
Thus, after it forms its first link, agent $i$ needs to form $L_{\theta_{i}}^{*}(0)-1=L_{\theta_{i}}^{*}(\alpha^{s}_{\theta_{i}})$ links, and the probability of forming a link at any time step is $(1-\gamma_{\theta_{i}}) \, p_{\theta_{i}} + \gamma_{\theta_{i}}$, which is independent of the network topology and the history of actions of agent $i$. Thus, $N_{i}^{j}$ is a geometric random variable with a success probability of $(1-\gamma_{\theta_{i}}) \, p_{\theta_{i}} + \gamma_{\theta_{i}}$, where $N_{i}^{j}$ and $N_{i}^{m}$ are i.i.d (this means that for a large network and extremely homophilic agents, the actions of an agent do not affect the meeting process). Therefore, the distribution of EFT for any agent $i$ in an asymptotically large network follows a fixed distribution, which follows from obtaining the distribution of $\sum_{j=1}^{L_{\theta_{i}}^{*}(0)}N_{i}^{j}$, where $N_{i}^{1}$ is a geometric random variable with a success probability of $p_{\theta_{i}}$, and for $j>1$, $N_{i}^{j}$ is a geometric random variable with a success probability of $(1-\gamma_{\theta_{i}}) \, p_{\theta_{i}} + \gamma_{\theta_{i}}$. Given that the random variables $\left\{N_{i}^{2}, N_{i}^{3},.\,.\,.,N_{i}^{L_{\theta_{i}}^{*}(0)}\right\}$ are i.i.d, the pmf of the sum $\sum_{j=2}^{L_{\theta_{i}}^{*}(0)} N_{i}^{j}$ can be easily evaluated by taking the product of the Moment Generating Functions (MGF) of the individual random variables. The MGF of $N_{i}^{j}$ is given by $S_{N_{i}^{j}}(\Omega) = \mathbb{E}\left\{e^{\Omega N_{i}^{j}}\right\}, \Omega \in \mathbb{R},$ which can be obtained as follows
\[S_{N_{i}^{j}}(\Omega) = \frac{((1-\gamma_{\theta_{i}}) p_{\theta_{i}} + \gamma_{\theta_{i}}) e^{\Omega}}{1-(1-(1-\gamma_{\theta_{i}}) p_{\theta_{i}} - \gamma_{\theta_{i}}) e^{\Omega}},\]    
for $\Omega < -\log(1-((1-\gamma_{\theta_{i}}) p_{\theta_{i}} + \gamma_{\theta_{i}}))$. Thus, the MGF of $\sum_{j=2}^{L_{\theta_{i}}^{*}(0)} N_{i}^{j}$ is given by $\prod_{j=2}^{L_{\theta_{i}}^{*}(0)} S_{N_{i}^{j}}(\Omega),$ which can be written as $\left(\frac{((1-\gamma_{\theta_{i}}) p_{\theta_{i}} + \gamma_{\theta_{i}}) e^{\Omega}}{1-(1-(1-\gamma_{\theta_{i}}) p_{\theta_{i}} - \gamma_{\theta_{i}}) e^{\Omega}}\right)^{L_{\theta_{i}}^{*}(0)-1},$ which corresponds to the MGF of a {\it negative binomial random variable}. Let $\bar{N}^{1}_{i} = \sum_{j=2}^{L_{\theta_{i}}^{*}(0)} N_{i}^{j}$. The pmf of $\bar{N}^{1}_{i}$ is given by
\begin{equation}
f_{\bar{N}^{1}_{i}}\left(\bar{N}^{1}_{i}\right) = \binom{\bar{N}^{1}_{i}-1}{L_{\theta_{i}}^{*}(0)-2}\, p^{L_{\theta_{i}}^{*}(0)-1} \, \left(1-p\right)^{\bar{N}^{1}_{i}-L_{\theta_{i}}^{*}(0)+1},
\label{D3}
\end{equation}   
where $p=\left((1-\gamma_{\theta_{i}}) p_{\theta_{i}} + \gamma_{\theta_{i}}\right)$, and the pmf of $N^{1}_{i}$ is given by
\begin{equation}
f_{N^{1}_{i}}\left(N^{1}_{i}\right) = p \, \left(1-p\right)^{N^{1}_{i}-1}.
\label{D5}
\end{equation}
Thus, the pmf of $T_{i}$ is obtained by computing the convolution of $f_{\bar{N}^{1}_{i}}\left(\bar{N}^{1}_{i}\right)$ and $f_{N^{1}_{i}}\left(N^{1}_{i}\right)$ as follows    
\begin{equation}
f_{T_{i}}\left(T_{i}\right) = f_{N^{1}_{i}}\left(N^{1}_{i}\right) \star f_{\bar{N}^{1}_{i}}\left(\bar{N}^{1}_{i}\right),
\label{D7}
\end{equation}
where $\star$ is the convolution operator. Therefore, the distribution of the EFT for an agents of type $k$ converge to a steady-state distribution, i.e. $\lim_{t \rightarrow \infty} f_{T_{i}}\left(T_{i}\left|\theta_{i} = k\right.\right) = f^{k}_{T}\left(T\right),$ where $f^{k}_{T}\left(T\right) = f_{N^{1}_{i}}\left(N^{1}_{i}\left|\theta_{i} = k\right.\right) \star f_{\bar{N}^{1}_{i}}\left(\bar{N}^{1}_{i}\left|\theta_{i} = k\right.\right)$. Note that from Scheffe's lemma, convergence of the probability mass functions implies convergence in distribution, thus the sequence of EFTs converges in distribution for all types of agents.   

Now we compute the EEFT, which is simply given by $\overline{T}_{i} = \mathbb{E}\left[\sum_{j=1}^{L_{\theta_{i}}^{*}(0)}N_{i}^{j}\right]$. Thus, we have
\[\mathbb{E}\left[N^{j}_{i}\right] = \left\{
     \begin{array}{lr} 
      \frac{1}{p_{\theta_{i}}} &: j = 1,\\ 
       \frac{1}{(1-\gamma_{\theta_{i}}) \, p_{\theta_{i}} + \gamma_{\theta_{i}}} &: 2 \leq j \leq L_{\theta_{i}}^{*}\left(0\right). 
     \end{array}
   \right.\]
 Therefore, the EEFT is given by
\begin{align}
\overline{T}_{i} &= \mathbb{E}\left[N^{1}_{i}\right]+\mathbb{E}\left[\sum_{j=2}^{L_{\theta_{i}}^{*}(0)}N^{j}_{i}\right] \\ \nonumber
&= \mathbb{E}\left[N^{1}_{i}\right]+\sum_{j=2}^{L_{\theta_{i}}^{*}(0)} \mathbb{E}\left[N^{j}_{i}\right]\\ \nonumber
&= \frac{1}{p_{\theta_{i}}} + \frac{L_{\theta_{i}}^{*}(\alpha^{s}_{\theta_{i}})}{(1-\gamma_{\theta_{i}})\, p_{\theta_{i}} + \gamma_{\theta_{i}}},\\ \nonumber
\end{align} 
and the result of the Theorem follows.

\section{Proof of Corollary 1}
\renewcommand{\theequation}{\thesection.\arabic{equation}}
\setcounter{mytempeqncnt}{\value{equation}} \setcounter{equation}{0}
We first define the notion of {\it first-order stochastic dominance} as follows. A pdf (or pmf) $f(x)$ first-order stochastically dominates a pdf $g(x)$  if and only if $G(x) \geq F(x), \forall x,$ with strict inequality for some values of $x$, where $F(x)$ and $G(x)$ are the cumulative density functions. In this proof, we will use {\it first-order stochastic dominance} and {\it stochastic dominance} interchangeably. For the two random variables $x$ and $y$, if $f(x)$ stochastically dominates $f(y)$, then we say $y \succeq x$. In the following, we prove a Lemma that will be utilized in proving this Theorem. \\
{\bf Lemma D.1.} Let $X_{1}, X_{2}, Y_{1},$ and $Y_{2}$ be independent random variables, and let $Z_{1} = X_{1} + Y_{1}$ and $Z_{2} = X_{2} + Y_{2}$. If $X_{1} \succeq X_{2}$ and $Y_{1} \succeq Y_{2}$, then $Z_{1} \succeq Z_{2}$.   
\begin{proof}
We prove the Lemma for continuous random variables, and the result can be straightforwardly generalized to discrete random variables. Since $X_{1} \succeq X_{2}$ and $Y_{1} \succeq Y_{2}$, then we have $F_{X_{1}}(x_{1}) \leq F_{X_{2}}(x_{2})$, $F_{Y_{1}}(Y_{1}) \leq F_{Y_{2}}(Y_{2})$, $\int u(x_{1}) f(x_{1}) dx_{1} \geq \int u(x_{2}) f(x_{2}) dx_{2}$, and $\int u(y_{1}) f(y_{1}) dy_{1} \geq \int u(y_{2}) f(y_{2}) dy_{2}$, for any increasing function $u(.)$. Note that since $Z_{1} = X_{1} + Y_{1}$ and $Z_{2} = X_{2} + Y_{2}$, then we have that $F_{Z_{1}}(z_{1}) = \int F_{Y_{1}}(z_{1}-x_{1}) f(x_{1}) dx_{1}$, and $F_{Z_{2}}(z_{2}) = \int F_{Y_{2}}(z_{2}-x_{2}) f(x_{2}) dx_{2}$. Since $F_{Y_{1}}(Y_{1}) \leq F_{Y_{2}}(Y_{2})$ and $X_{1} \succeq X_{2}$, then $F_{Z_{1}}(z_{1}) \leq F_{Z_{2}}(z_{2})$ and it follows that $Z_{1} \succeq Z_{2}$. \, \IEEEQEDhere 
\end{proof} 
{\bf Lemma D.2.} If $Z_{1} = \sum_{i=1}^{N}X_{i}$ and $Z_{2} = \sum_{i=1}^{M}X_{i}$, where $N>M$, and the variables $X_{i}, \forall i \leq N$ are i.i.d non-negative random variables, then $Z_{1} \succeq Z_{2}$.    
\begin{proof}
Let $\tilde{X}_{1} = \sum_{i=1}^{M}X_{i}$, and $\tilde{X}_{2} = \sum_{i=1}^{N-M}X_{i}$. We can write $Z_{1}$ as $Z_{1} = \tilde{X}_{1} + \tilde{X}_{2}$. The cdf of $Z_{1}$ is then given by $F_{Z_{1}}(z_{1}) = \int F_{X_{1}}(z_{1}-\tilde{x}_{2}) f_{\tilde{x}_{2}}(\tilde{x}_{2}) d\tilde{x}_{2}$. Since $\int F_{X_{1}}(z_{1}-\tilde{x}_{2}) f_{\tilde{x}_{2}}(\tilde{x}_{2}) d\tilde{x}_{2} \leq F_{X_{1}}(z_{1}), \forall z_{1}$, and since $F_{X_{1}}(z_{1}) = F_{Z_{2}}(z_{1})$, then $F_{Z_{1}}(z) \leq F_{Z_{2}}(z), \forall z,$ and it follows that $Z_{1} \succeq Z_{2}$. \, \IEEEQEDhere 
\end{proof}

The pmf under study in this Theorem is $f_{T_{i}}(T_{i})$, which is the pmf of the EFT given a birth date and type of an agent, i.e. $f_{T_{i}}(T_{i}) = \sum_{G^{i-1} \in \mathcal{G}^{i-1}} f_{T_{i}}\left(T_{i}\left|G^{i-1}\right.\right) \mathbb{P}\left(G^{i-1}\right)$, which we have shown that it converges to a steady-state distribution in Appendix C. In the following, we apply a comparative statics analysis for the different exogenous parameters assuming a large enough network, and we start by the type distribution. For $\tilde{p}_{\theta_{i}} > p_{\theta_{i}}$, we compare $T_{i}\left(p_{\theta_{i}},h_{\theta_{i}},\gamma_{\theta_{i}}, L_{\theta_{i}}^{*}(0)\right)$ and $T_{i}\left(\tilde{p}_{\theta_{i}},h_{\theta_{i}},\gamma_{\theta_{i}}, L_{\theta_{i}}^{*}(0)\right)$. We first start by showing that for extremely homophilic agents, we have $T_{i}\left(p_{\theta_{i}},h_{\theta_{i}},\gamma_{\theta_{i}}, L_{\theta_{i}}^{*}(0)\right) \succeq T_{i}\left(\tilde{p}_{\theta_{i}},h_{\theta_{i}},\gamma_{\theta_{i}}, L_{\theta_{i}}^{*}(0)\right)$. Recall from Appendix C that for a large network and extremely homophilic agents, the EFT is simply given by $T_{i} = N^{1}_{i} + \bar{N}^{1}_{i}$, where the cdf of the two random variables $N^{1}_{i}$ and $\bar{N}^{1}_{i}$ are given by     
\[F(N^{1}_{i}) = 1-(1-p)^{N^{1}_{i}},\]
and
\[F(\bar{N}^{1}_{i}) = 1-I_{1-p}\left(L_{\theta_{i}}^{*}(0), \bar{N}^{1}_{i}-L_{\theta_{i}}^{*}(0)+1\right),\]   
where $p=\left((1-\gamma_{\theta_{i}}) p_{\theta_{i}} + \gamma_{\theta_{i}}\right)$, $I_{1-p}(x,y)$ is the {\it regularized incomplete beta function,} which is defined in terms of the incomplete beta function $B(1-p;x,y) = \int_{0}^{1-p} z^{x-1}(1-z)^{y-1} dz$ as $I_{1-p}(x,y) = \frac{B(1-p;x,y)}{B(x,y)}$. The first derivative of $I_{1-p}(x,y)$ with respect to $p$ is given by 
\[\frac{\partial I_{1-p}(x,y)}{\partial p} = \frac{-(1-p)^{x-1} p^{y-1}}{B(x,y)} < 0,\]
thus, $I_{1-p}(x,y)$ is monotonically decreasing in $p$.

Now let $p$ and $\tilde{p}$ be defined as $p=\left((1-\gamma_{\theta_{i}}) p_{\theta_{i}} + \gamma_{\theta_{i}}\right)$ and $\tilde{p}=\left((1-\gamma_{\theta_{i}}) \tilde{p}_{\theta_{i}} + \gamma_{\theta_{i}}\right)$. If $\tilde{p}_{\theta_{i}} > p_{\theta_{i}},$ then $\tilde{p} > p,$ and it follows that both $1-(1-p)^{N^{1}_{i}} < 1-(1-\tilde{p})^{N^{1}_{i}}$, and $1-I_{1-\tilde{p}}\left(L_{\theta_{i}}^{*}(0), \bar{N}^{1}_{i}-L_{\theta_{i}}^{*}(0)+1\right) > 1-I_{1-p}\left(L_{\theta_{i}}^{*}(0), \bar{N}^{1}_{i}-L_{\theta_{i}}^{*}(0)+1\right),$ which from Lemma D.1 implies that the cdf of $T_{i}$ for a type distribution $\tilde{p}_{\theta_{i}}$ is greater than or equal to the the cdf of $T_{i}$ for a type distribution $p_{\theta_{i}}$ for all values of $T_{i}$. Therefore, we have that $T_{i}\left(p_{\theta_{i}},h_{\theta_{i}},\gamma_{\theta_{i}}, L_{\theta_{i}}^{*}(0)\right) \succeq T_{i}\left(\tilde{p}_{\theta_{i}},h_{\theta_{i}},\gamma_{\theta_{i}}, L_{\theta_{i}}^{*}(0)\right)$. The same applies for the structural opportunism parameter $\gamma_{\theta_{i}}$. Let $p$ and $\tilde{p}$ be defined as $p=\left((1-\gamma_{\theta_{i}}) p_{\theta_{i}} + \gamma_{\theta_{i}}\right)$ and $\tilde{p}=\left((1-\tilde{\gamma}_{\theta_{i}}) p_{\theta_{i}} + \tilde{\gamma}_{\theta_{i}}\right)$. If $\tilde{\gamma}_{\theta_{i}} > \gamma_{\theta_{i}},$ then $\tilde{p} > p,$ and it follows that $T_{i}\left(p_{\theta_{i}},h_{\theta_{i}},\gamma_{\theta_{i}}, L_{\theta_{i}}^{*}(0)\right) \succeq T_{i}\left(p_{\theta_{i}},h_{\theta_{i}},\tilde{\gamma}_{\theta_{i}}, L_{\theta_{i}}^{*}(0)\right)$. Finally, since $T_{i} = N^{1}_{i} + \sum_{j=2}^{L_{\theta_{i}}^{*}(0)-1} N^{j}_{i}$, then it follows from Lemma D.2 that if $\tilde{L}_{\theta_{i}}^{*}(0) > L_{\theta_{i}}^{*}(0)$, then $T_{i}\left(p_{\theta_{i}},h_{\theta_{i}},\gamma_{\theta_{i}}, \tilde{L}_{\theta_{i}}^{*}(0)\right) \succeq T_{i}\left(p_{\theta_{i}},h_{\theta_{i}},\gamma_{\theta_{i}}, L_{\theta_{i}}^{*}(0)\right)$.

\section{Proof of Theorem 2}
\renewcommand{\theequation}{\thesection.\arabic{equation}}
\setcounter{mytempeqncnt}{\value{equation}} \setcounter{equation}{0}
We prove the Theorem through the following steps: we show that a steady-state utility function exists for every agent, and then we compute an upper-bound on the achieved utility of an agent. We prove the Theorem statement by showing that such an upper-bound is achieved if and only if agents are extremely homophilic.

First, we show that a steady-state utility function exists for every agent. Since $\gamma_{k}<1, \forall k \in \Theta$, we know from Lemma 1 that the meeting process of each agent has a finite stopping time. That is, for a large network each agent $i$ meets a same-type agent at each time step with a positive probability since $\mathbb{P}\left(\theta_{m_{i}(t)} = \theta_{i}\right) > \left(1-\gamma_{\theta_{i}}\right)p_{\theta_{i}}$. Thus each agent $i$ will eventually saturate its utility function and converge to a steady-state ego network that remains fixed for all $t > T_{i}$. 

Now we upper bound the average utility function of type-$k$ agents. Recall that $U_{k}^{t} = \frac{1}{\left|\mathcal{V}_{k}^{t}\right|}\sum_{j\in \mathcal{V}_{k}^{t}} u_{j}^{t}$. Since for each agent $i$ we have $\alpha^{s}_{\theta_{i}} \geq \alpha^{d}_{\theta_{i}}$, then each individual agent maximizes its utility when linked to similar type agents only, which corresponds to a utility of $u_{i}^{t} = v_{\theta_{i}}\left(\alpha^{s}_{\theta_{i}}L_{\theta_{i}}^{*}(0)\right) - c \, L_{\theta_{i}}^{*}(0)$. Thus, we have
\begin{align}
\overline{U}_{k}^{*} &= \frac{1}{\left|\mathcal{V}_{k}^{t}\right|}\sum_{j\in \mathcal{V}_{k}^{t}} v_{k}\left(\alpha^{s}_{k}L_{k}^{*}(0)\right) - c \, L_{k}^{*}(0) \nonumber \\
&= v_{k}\left(\alpha^{s}_{k}L_{k}^{*}(0)\right) - c \, L_{k}^{*}(0). \nonumber
\end{align}
The average utility of all agents in the network $U^{t}$ is upper bounded by $\overline{U}^{*}_{t} = \frac{1}{t}\sum_{j\in \mathcal{V}^{t}} u_{j}^{t} = \frac{1}{t} \sum_{k \in \Theta} \sum_{j\in \mathcal{V}_{k}^{t}} v_{k}\left(\alpha^{s}_{k}L_{k}^{*}(0)\right) - c \, L_{k}^{*}(0),$ which for a large network converges to $\overline{U}^{*}$, where $\overline{U}^{*} = \lim_{t \rightarrow \infty} \frac{1}{t}\sum_{j\in \mathcal{V}^{t}} u_{j}^{t}$, and
\begin{align}
\overline{U}^{*} &= \lim_{t \rightarrow \infty} \sum_{k \in \Theta} \frac{\left|\mathcal{V}_{k}^{t}\right|}{t} \left(v_{k}\left(\alpha^{s}_{k}L_{k}^{*}(0)\right) - c \, L_{k}^{*}(0)\right) \nonumber \\
&= \sum_{k \in \Theta} p_{k} \left(v_{k}\left(\alpha^{s}_{k}L_{k}^{*}(0)\right) - c \, L_{k}^{*}(0)\right). \nonumber
\end{align}

Now we prove that this upper-bound is achieved if and only if all types of agents are extremely homophilic. We start by showing that if $h_{l} = 1, \forall l \in \Theta,$ then $\lim_{t \rightarrow \infty} U^{t} = \overline{U}^{*}.$ If agents are extremely homophilic, then each agent connects only to similar type agents, i.e. $\mathbb{P}\left(a_{i}^{t}\left|\theta_{m_{i}(t)} = \theta_{i}\right.\right) = \mathbb{I}_{\left\{\theta_{m_{i}(t)} = \theta_{i}, N_{i}(t) < L_{\theta_{i}}^{*}(0)\right\}}$. Since each agent meets a same-type agent with a non-zero probability in every time step, and will always form its ego network in a finite time (recall Lemma 1), the utility achieved by each agent $i$ is then given by $v_{\theta_{i}}\left(\alpha^{s}_{\theta_{i}}L_{\theta_{i}}^{*}(0)\right) - c \, L_{\theta_{i}}^{*}(0)$, and $\lim_{t \rightarrow \infty} U^{t} = \overline{U}^{*}.$   

Now we prove the converse, and show that if $\lim_{t \rightarrow \infty} U^{t} = \overline{U}^{*},$ then $h_{l} = 1, \forall l \in \Theta$. If $h_{l} < 1$ for exactly one type of agents $l \in \Theta$, then there is a fraction of type-$l$ agents that form at least one link with a different-type agent, i.e. $\lim_{t \rightarrow \infty} \frac{1}{t}\left|\left\{j \in \mathcal{V}_{l}^{t}\left|N_{j}^{d}(t) > 0\right.\right\}\right|> (1-p_{l}),$ thus $\lim_{t \rightarrow \infty} U_{l}^{t} < \overline{U}_{l}^{*},$ and thus $\lim_{t \rightarrow \infty} U^{t} < \overline{U}^{*}.$ Therefore, all agents must be extremely homophilic for the optimal utility to be achieved.  

Finally, since when $h_{l} = 1, \forall l \in \Theta,$ agents restrict their links to same-type agents only, then there is no links between different groups and the network will be disconnected with the number of components being at least equal to the number of types, i.e. $\omega\left(G^{t}\right) \geq |\Theta|$. 

\section{Proof of Theorem 3}
\renewcommand{\theequation}{\thesection.\arabic{equation}}
\setcounter{mytempeqncnt}{\value{equation}} \setcounter{equation}{0}

We start by evaluating the popularity growth rate in a tolerant society with fully non-opportunistic agents, i.e. a society with $h_{k} = 0, \gamma_{k} = 0, \forall k \in \Theta$. Note that the expected popularity of any agent $i$ is given by 
\begin{align}  
\mathbb{E}\left\{\Delta \mbox{deg}_{i}^{-}(t)\right\} &= \mathbb{E}\left\{\sum_{j=i}^{t}\Delta \mbox{deg}_{i}^{-}(j)\right\} \nonumber \\
&= \sum_{j=i}^{t}\mathbb{E}\left\{\Delta \mbox{deg}_{i}^{-}(j)\right\},
\label{pop1}
\end{align}
where the expectation is taken over all the realizations of the graph process $\left\{G^{t}\right\}_{t=1}^{\infty}$, thus using the {\it the law of total expectation}, (\ref{pop1}) can be written as
\begin{equation}  
\mathbb{E}\left\{\Delta \mbox{deg}_{i}^{-}(j)\right\} = \sum_{G^{j} \in \mathcal{G}^{j}}\mathbb{E}\left\{\left.\Delta \mbox{deg}_{i}^{-}(j)\right|G^{j}\right\} \mathbb{P}\left(G^{j}\right).
\label{pop2}
\end{equation}
In the following, we compute the term $\sum_{G^{j} \in \mathcal{G}^{j}}\mathbb{E}\left\{\left.\Delta \mbox{deg}_{i}^{-}(j)\right|G^{j}\right\} \mathbb{P}\left(G^{j}\right)$, and then compute the summation in (\ref{pop1}) in order to obtain the popularity growth rate. First, note that from Theorem 1, we know that in a tolerant society, each agent $j$ stays $L_{\theta_{j}}^{*}(0)$ time steps in the ego network formation process almost surely. Thus, the set of agents that can potentially link to agent $i$ at any time step $t$ (which we denote as $\Phi^{t}$) is given by
\[\Phi^{t} =  \left\{t-\max_{l \in \Theta} L_{l}^{*}(0)+1, t-\max_{l \in \Theta} L_{l}^{*}(0)+2,.\,.\,.,t\right\}.\]       
That is, an agent's popularity acquisition process in a tolerant and non-opportunisic society depends only on the types of the $\max_{l \in \Theta} L_{l}^{*}(0)$ most recently born agents, and their actions in the most recent $\max_{l \in \Theta} L_{l}^{*}(0)$ time steps. The types of such agents determine their levels of gregariousness, and thus the possibility of each of them linking to agent $i$ at time $t$. Since agents find each others just by random matching, we can then write (\ref{pop2}) as shown in (\ref{pop3}). Based on (\ref{pop3}), the expected number of links gained by agent $i$ at time $t$ is given by (\ref{pop5}).   
\begin{figure*}[!t]
\setcounter{mytempeqncnt}{\value{equation}} \setcounter{equation}{2}
\begin{equation}  
\mathbb{E}\left\{\Delta \mbox{deg}_{i}^{-}(t)\right\} = \sum_{G^{t} \in \mathcal{G}^{t}}\mathbb{E}\left\{\left.\Delta \mbox{deg}_{i}^{-}(t)\right|\left\{\theta_{v}\right\}_{v \in \Phi^{t}}, \left\{A^{t-1}(v,i)\right\}_{v \in \Phi^{t}/\{t\}}\right\} \mathbb{P}\left(\left\{\theta_{v}\right\}_{v \in \Phi^{t}}, \left\{A^{t-1}(v,i)\right\}_{v \in \Phi^{t}/\{t\}}\right).
\label{pop3}
\end{equation} 
\setcounter{equation}{\value{mytempeqncnt}+1} \hrulefill{}\vspace*{4pt}
\end{figure*}
\begin{figure*}[!t]
\setcounter{mytempeqncnt}{\value{equation}} \setcounter{equation}{3}
\begin{align}
\mathbb{E}\left\{\Delta \mbox{deg}_{i}^{-}(t)\right\} &\overset{\text{(a)}}= \sum_{G^{t} \in \mathcal{G}^{t}}\mathbb{E}\left\{\left.\Delta \mbox{deg}_{i}^{-}(t)\right|\left\{\theta_{v}\right\}_{v \in \Phi^{t}}, \left\{A^{t-1}(v,i)\right\}_{v \in \Phi^{t}/\{t\}}\right\} \mathbb{P}\left(\left\{\theta_{v}\right\}_{v \in \Phi^{t}}, \left\{A^{t-1}(v,i)\right\}_{v \in \Phi^{t}/\{t\}}\right) \nonumber \\
&\overset{\text{(b)}}= \sum_{k \in \Phi^{t}} \sum_{\theta_{k} \in \Theta} p_{\theta_{k}} \left(\mathbb{P}\left(m_{k}(t) = i \left|k \notin \mathcal{N}_{i,t-1}^{-}, L_{\theta_{k}}^{*}(0)\geq t - k + 1\right.\right) \mathbb{P}\left(k \notin \mathcal{N}_{i,t-1}^{-}, L_{\theta_{k}}^{*}(0)\geq t - k + 1\right)\right)  \nonumber \\
&\overset{\text{(c)}}= \sum_{k \in \Phi^{t}} \sum_{\theta_{k} \in \Theta} p_{\theta_{k}} \left(\mathbb{I}_{\left\{L_{\theta_{k}}^{*}(0)\geq t - k + 1\right\}}\mathbb{P}\left(m_{k}(t) = i \left|k \notin \mathcal{N}_{i,t-1}^{-}\right.\right) \mathbb{P}\left(k \notin \mathcal{N}_{i,t-1}^{-}\right) \right)  \nonumber \\ 
&\overset{\text{(d)}}= \sum_{k \in \Phi^{t}} \sum_{\theta_{k} \in \Theta} p_{\theta_{k}} \left(\mathbb{I}_{\left\{L_{\theta_{k}}^{*}(0)\geq t - k + 1\right\}} \left(\frac{1}{t-1-(t-k)}\right) \prod_{n=k}^{t-1}\left(1-\frac{1}{n-1-(n-k)}\right)\right) \nonumber \\
&\overset{\text{(e)}}= \sum_{k \in \Phi^{t}} \left(\sum_{\theta_{k} \in \Theta} p_{\theta_{k}} \mathbb{I}_{\left\{L_{\theta_{k}}^{*}(0)\geq t - k + 1\right\}}\right) \left(\frac{1}{k-1}\right) \left(1-\frac{1}{k-1}\right)^{t-k} \nonumber \\
&\overset{\text{(f)}}\approx \sum_{k = t-\max_{l \in \Theta} L_{l}^{*}(0)+1}^{t} \left(\sum_{\theta_{k} \in \Theta} p_{\theta_{k}} \mathbb{I}_{\left\{L_{\theta_{k}}^{*}(0)\geq t - k + 1\right\}}\right) \left(\frac{e^{-\frac{t-k}{k-1}}}{k-1}\right)  \nonumber \\
&\overset{\text{(g)}}= \frac{\left(\sum_{\theta_{k} \in \Theta} p_{\theta_{k}} \mathbb{I}_{\left\{L_{\theta_{k}}^{*}(0)\geq \max_{l \in \Theta} L_{l}^{*}(0)\right\}}\right) e^{-\frac{\max_{l \in \Theta} L_{l}^{*}(0)-1}{t-\max_{l \in \Theta} L_{l}^{*}(0)}}}{t-\max_{l \in \Theta} L_{l}^{*}(0)}+.\,.\,.+ \frac{\left(\sum_{\theta_{k} \in \Theta} p_{\theta_{k}} \mathbb{I}_{\left\{L_{\theta_{k}}^{*}(0)\geq 2\right\}}\right)e^{-\frac{2}{t-3}}} {t-2} + \frac{1}{t-1} \nonumber \\
&\overset{\text{(h)}}= \sum_{w = 1}^{\max_{l \in \Theta} L_{l}^{*}(0)} \frac{\left(\sum_{\theta_{k} \in \Theta} p_{\theta_{k}} \mathbb{I}_{\left\{L_{\theta_{k}}^{*}(0)\geq w\right\}}\right) e^{-\frac{w-1}{t-w}}}{t-w} \nonumber \\
&\overset{\text{(i)}}\asymp \frac{\bar{L}}{t}. 
\label{pop5}
\end{align}
\setcounter{equation}{\value{mytempeqncnt}+1} \hrulefill{}\vspace*{4pt}
\end{figure*}

In the following, we briefly explain the steps involved in (\ref{pop5}). First, (a) is the application of the law of total expectation to the expected in-degree of an agent $i$ by averaging over all the possible types of the most recently born agents, and the probability that such agents have linked to $i$ in the previous time steps. In (b) and (c), we rewrite the expression in (a) by computing the average number of recently born agents who meet agent $i$ and have not already been linked to $i$ untill the time step $t-1$. The expressions for the probability that one of the recently born agents meet with agent $i$ are plugged in (d) and further simplified in (e). A Taylor series approximation $(1-\frac{1}{x})^{a} \approx e^{-\frac{a}{x}}$ is used in (f)-(h), and the asymptotic value of $\mathbb{E}\left\{\Delta \mbox{deg}_{i}^{-}(t)\right\}$ is provided in (i).        

From (\ref{pop5}), the expected number of links gained by any agent $i$ at time $t$ in a large network boils down to the simple expression $\mathbb{E}\left\{\Delta \mbox{deg}_{i}^{-}(t)\right\} = \frac{\bar{L}}{t},$ where $\bar{L} = \sum_{m \in \Theta} p_{m} L_{m}^{*}(0)$. Consequently, the expected popularity of agent $i$ at time $t$ is given by  
\begin{align} 
\mathbb{E}\left\{\mbox{deg}_{i}^{-}(t)\right\} &= \mathbb{E}\left\{\sum_{j=i}^{t}\Delta\mbox{deg}_{i}^{-}(j)\right\} \nonumber \\
&= \sum_{j=i}^{t}\mathbb{E}\left\{\Delta\mbox{deg}_{i}^{-}(j)\right\} \nonumber \\
&= \sum_{j=i}^{t} \frac{\bar{L}}{j} \nonumber \\
&= \bar{L}\left(H_{t}-H_{i-1}\right) \nonumber \\
&\asymp \bar{L}\left((\log(t)-\psi)-(\log(i-1)-\psi)\right) \nonumber \\
&= \bar{L} \log\left(\frac{t}{i-1}\right),
\label{pop9}
\end{align}
where $H_{N}$ is the $N^{th}$ harmonic number, and $\psi$ is the {\it Euler-mascheroni} constant. Thus, $\mathbb{E}\left\{\mbox{deg}_{i}^{-}(t)\right\}$ is $O\left(\bar{L}\log(t)\right)$, and the first part of the Theorem follows. 

Next, we evaluate the popularity growth rate in a tolerant society with fully opportunistic agents, i.e. a society with $h_{k} = 0, \gamma_{k} = 1, \forall k \in \Theta$. Similar to (\ref{pop1}), we start by evaluating $\sum_{G^{t} \in \mathcal{G}^{t}}\mathbb{E}\left\{\left.\Delta \mbox{deg}_{i}^{-}(t)\right|G^{t}\right\} \mathbb{P}\left(G^{t}\right)$. Note that the maximum number of agents forming links at any time step is given by $\max_{l \in \Theta} L_{l}^{*}(0)$ (the birth dates of all such agents belong to the set $\Phi^{t}$). Since in an opportunistic society the meeting process depends on the network structure, we start by evaluating the term $\mathbb{E}\left\{\left.\Delta \mbox{deg}_{i}^{-}(t)\right|G^{t}\right\}$ in (\ref{pop11}).    
\begin{figure*}[!t]
\setcounter{mytempeqncnt}{\value{equation}} \setcounter{equation}{5}
\[\mathbb{E}\left\{\left.\Delta \mbox{deg}_{i}^{-}(t)\right|\mbox{deg}_{i}^{-}(t-1), \mbox{deg}_{i}^{-}(t-2), .\,.\,., \mbox{deg}_{i}^{-}(i)\right\}\]
\begin{align}
&\overset{\text{(a)}}= \sum_{k \in \Phi^{t}} \sum_{\theta_{k} \in \Theta} p_{\theta_{k}} \left(\mathbb{P}\left(m_{k}(t) = i \left|k \notin \mathcal{N}_{i,t-1}^{-}, k \in \bigcup_{j \in \mathcal{N}_{i,t-1}^{-}} \mathcal{N}_{j,t-1}^{-}, K_{k}(t), L_{\theta_{k}}^{*}(0)\geq t - k + 1\right.\right)\right.   \nonumber \\
&\times \left. \mathbb{P}\left(k \notin \mathcal{N}_{i,t-1}^{-}, k \in \bigcup_{j \in \mathcal{N}_{i,t-1}^{-}} \mathcal{N}_{j,t-1}^{-}, K_{k}(t), L_{\theta_{k}}^{*}(0)\geq t - k + 1\right)\right)  \nonumber \\
&\overset{\text{(b)}}= \sum_{k \in \Phi^{t}} \sum_{\theta_{k} \in \Theta} p_{\theta_{k}} \left(\mathbb{I}_{\left\{L_{\theta_{k}}^{*}(0)\geq t - k + 1\right\}}\mathbb{P}\left(m_{k}(t) = i \left|k \notin \mathcal{N}_{i,t-1}^{-}, k \in \bigcup_{j \in \mathcal{N}_{i,t-1}^{-}} \mathcal{N}_{j,t-1}^{-}, K_{k}(t)\right.\right)\right. \nonumber \\
&\times \left. \mathbb{P}\left(k \notin \mathcal{N}_{i,t-1}^{-}, k \in \bigcup_{j \in \mathcal{N}_{i,t-1}^{-}} \mathcal{N}_{j,t-1}^{-},K_{k}(t)\right)\right) \nonumber \\
&\overset{\text{(c)}}= \sum_{k \in \Phi^{t}} \sum_{\theta_{k} \in \Theta} p_{\theta_{k}} \left(\, \frac{\mathbb{I}_{\left\{L_{\theta_{k}}^{*}(0)\geq t - k + 1\right\}}}{K_{k}(t)}\, \mathbb{P}\left(k \notin \mathcal{N}_{i,t-1}^{-}, k \in \bigcup_{j \in \mathcal{N}_{i,t-1}^{-}} \mathcal{N}_{j,t-1}^{-},K_{k}(t)\right)\right) \nonumber \\
&\overset{\text{(d)}}= \frac{1}{t-1} + \left(\frac{\mbox{deg}_{i}^{-}(t-1)}{t-2}\right) \left(\sum_{n \in \Theta} p_{n} \mathbb{I}_{\left\{L_{n}^{*}(0)\geq 2\right\}}\right) \sum_{m \in \Theta}\frac{p_{m}}{L_{m}^{*}(0)} + \sum_{k =t-\max_{l}L_{l}^{*}(0)+1}^{t-2} \sum_{\theta_{k} \in \Theta} p_{\theta_{k}} \left(\, \frac{\mathbb{I}_{\left\{L_{\theta_{k}}^{*}(0)\geq t - k + 1\right\}}}{K_{k}(t)}\times\right. \nonumber \\
&\, \left.\mathbb{P}\left(k \notin \mathcal{N}_{i,t-1}^{-}, k \in \bigcup_{j \in \mathcal{N}_{i,t-1}^{-}} \mathcal{N}_{j,t-1}^{-},K_{k}(t)\right)\right) \nonumber \\
&\overset{\text{(e)}}\geq \frac{1}{t-1}+ \left(\frac{\mbox{deg}_{i}^{-}(t-1)}{t-2}\right) \sum_{m \in \Theta}\frac{p_{m}}{L_{m}^{*}(0)} + \sum_{v=2}^{\max_{l} L_{l}^{*}(0)-1} \left(\frac{\mbox{deg}_{i}^{-}(t-v)}{t-v-1}\right) \left(\sum_{n \in \Theta} p_{n} \mathbb{I}_{\left\{L_{n}^{*}(0)\geq v+1\right\}}\right) \times  \nonumber \\
&\left(\sum_{z_{1} \in \Theta}\sum_{z_{2} \in \Theta}\cdots \sum_{z_{v} \in \Theta} \prod_{u=1}^{v}p_{z_{u}}\left(1-\frac{1}{\sum_{x=1}^{v-1} L_{z_{x}}^{*}(0)}\right) \frac{1}{\sum_{x=1}^{v} L_{z_{x}}^{*}(0)}\right)  \nonumber \\
&\overset{\text{(f)}}\geq \frac{1}{t-1}+\frac{\mbox{deg}_{i}^{-}(t-1)}{t-2} \sum_{w \in \Theta}\frac{p_{w}}{L_{w}^{*}(0)}  \nonumber \\
&\overset{\text{(g)}} \approx \frac{1+\left(\sum_{w \in \Theta}\frac{p_{w}}{L_{w}^{*}(0)}\right) \mbox{deg}_{i}^{-}(t-1)}{t}.  
\label{pop11}
\end{align}
\setcounter{equation}{\value{mytempeqncnt}+1} \hrulefill{}\vspace*{4pt}
\end{figure*}
\begin{figure*}[!t]
\setcounter{mytempeqncnt}{\value{equation}} \setcounter{equation}{6}
\begin{align}
\mathbb{E}\left\{\Delta \mbox{deg}_{i}^{-}(t)\right\} &\overset{\text{(a)}}= \sum_{G^{t} \in \mathcal{G}^{t}}\mathbb{E}\left\{\left.\Delta \mbox{deg}_{i}^{-}(t)\right|G^{t}\right\} \mathbb{P}\left(G^{t}\right)  \nonumber \\
&\overset{\text{(b)}}= \sum_{G^{t} \in \mathcal{G}^{t}}\frac{1+\left(\sum_{w \in \Theta}\frac{p_{w}}{L_{w}^{*}(0)}\right) \mbox{deg}_{i}^{-}(t-1)}{t} \mathbb{P}\left(G^{t}\right)  \nonumber \\
&\overset{\text{(c)}}= \frac{1+\left(\sum_{w \in \Theta}\frac{p_{w}}{L_{w}^{*}(0)}\right) \sum_{G^{t} \in \mathcal{G}^{t}}\mbox{deg}_{i}^{-}(t-1) \mathbb{P}\left(G^{t}\right)}{t} \nonumber \\
&\overset{\text{(d)}}= \frac{1+\left(\sum_{w \in \Theta}\frac{p_{w}}{L_{w}^{*}(0)}\right) \mathbb{E}\left\{\mbox{deg}_{i}^{-}(t-1)\right\}}{t}. 
\label{pop109}
\end{align}
\setcounter{equation}{\value{mytempeqncnt}+1} \hrulefill{}\vspace*{4pt}
\end{figure*}
We briefly explain the steps involved in (\ref{pop11}) in the following. In (a)-(b) we rewrite the law of total expectation by taking the expected number of agents that will meet agent $i$ given that these agents have not yet saturated their utilities, and have not linked with $i$ before, and have become a follower of follower for $i$ in a previous time step. In (c) we further simplify the expression by observing that the probability that agent $k$ meets agent $i$  given that $k$ is a follower of a follower of $i$ is simply given by $\frac{1}{K_{k}(t)}$. In (d) we get the exact probability that agent $t$ links to agent $i$, which is given by $\frac{1}{t-1}$, and the probability that agent $t-1$ links to agent $i$ at time $t$ is given by $\left(\frac{\mbox{deg}_{i}^{-}(t-1)}{t-2}\right)\sum_{m \in \Theta}\frac{p_{m}}{L_{m}^{*}(0)}$, i.e. $k$ links to a follower of $i$ at time $t-1$, and then links to $i$ at time $t$, which happens with a probability of $\frac{p_{m}}{L_{m}^{*}(0)}$ (i.e. the reciprocal of the gregariousness of agent $i$'s follower averaged over its random type distribution). The probability that the ``older" agents link to $i$ at time $t$ depend on the degrees of the followers of $i$, which makes the problem intractable. Thus, we lower bound the $\mathbb{E}\left\{\Delta \mbox{deg}_{i}^{-}(t)\right\}$ by assuming that agents find agent $i$ after more than two time steps only if they have linked to one of its followers at their birth date. This leads to the following approximation for the expected gained links by agent $i$ at time $t$
\[\mathbb{E}\left\{\Delta \mbox{deg}_{i}^{-}(t)\right\} =  \frac{1+b \, \mathbb{E}\left\{\mbox{deg}_{i}^{-}(t-1)\right\}}{t},\]    
where
\[b = \sum_{w \in \Theta}\frac{p_{w}}{L_{w}^{*}(0)}.\]
Now assume a continuous-time approximation for the popularity growth process. Bote that since 
\[\mathbb{E}\left\{\Delta \mbox{deg}_{i}^{-}(t)\right\} = \mathbb{E}\left\{\mbox{deg}_{i}^{-}(t)\right\}-\mathbb{E}\left\{\mbox{deg}_{i}^{-}(t-1)\right\},\] 
then we have that
\[\frac{\partial \mathbb{E}\left\{\mbox{deg}_{i}^{-}(t)\right\}}{\partial t} \approx \mathbb{E}\left\{\Delta \mbox{deg}_{i}^{-}(t)\right\}.\] 
Thus, the popularity of each agent $i$ is governed by the following differential equation 
\[\frac{\partial \mathbb{E}\left\{\mbox{deg}_{i}^{-}(t)\right\}}{\partial t} = \frac{1}{t}\left(1+ b \, \mathbb{E}\left\{\mbox{deg}_{i}^{-}(t)\right\}\right).\]
This differential equation can be solved by dividing both sides by $\left(1+ b \, \mathbb{E}\left\{\mbox{deg}_{i}^{-}(t)\right\}\right)$ and integrating both sides as follows 
\[\int \frac{1}{\left(1+ b \, \mathbb{E}\left\{\mbox{deg}_{i}^{-}(t)\right\}\right)} d\mathbb{E}\left\{\mbox{deg}_{i}^{-}(t)\right\} = \int \frac{1}{t} dt,\]
which reduces to 
\[\frac{1}{b}\log\left(1+b\,\mathbb{E}\left\{\mbox{deg}_{i}^{-}(t)\right\}\right) + c_{1} = \log(t)+c_{2},\]
and hence we have that
\[\mathbb{E}\left\{\mbox{deg}_{i}^{-}(t)\right\} = c_{3} t^{b}-\frac{1}{b},\]
where $c_{1},$ $c_{2},$ and $c_{3}$ are constants. The constant $c_{3}$ can be obtained from the initial conditions as follows. Note that at $t = i$ cannot receive any links since all agents are opportunistic and the set of followers of $i$ is empty at its birth date. Thus, $\mbox{deg}_{i}^{-}(i) = 0$, and consequently $\mathbb{E}\left\{\mbox{deg}_{i}^{-}(t)\right\} = 0,$ which means that $c_{3} = \frac{1}{i^{b}b}$. Since the differential equation above was solved by plugging in a lower bound on agent $i$'s expected popularity at any time step, it follows that
\[\mathbb{E}\left\{\mbox{deg}_{i}^{-}(t)\right\} \geq \frac{1}{b} \left(\frac{t}{i}\right)^{b}-\frac{1}{b}, \forall t \geq i,\]
which means that the popularity of every agent $i$ grows at least sublinearly in time, and the second part of the Theorem follows. 

\section{Proof of Corollary 3}
\renewcommand{\theequation}{\thesection.\arabic{equation}}
\setcounter{mytempeqncnt}{\value{equation}} \setcounter{equation}{0}
Following Definition 1, we say that preferential attachment emerges if $\mbox{deg}_{i}^{-}(t)\geq\mbox{deg}_{j}^{-}(t)$ implies that $\mbox{deg}_{i}^{-}(t)\succeq\mbox{deg}_{j}^{-}(t),\forall i, j \leq t, t \in \mathbb{N}$. We know that in a tolerant society, the maximum number of agents forming links at any time step is $\max_{l}L_{l}^{*}(0)$, and the birth dates of such agents are given by the set $\Phi^{t} = \{t,t-1,.\,.\,.,t-\max_{l}L_{l}^{*}(0)+1\}$. Following the analysis in Appendix F, the probability that agent $k \in \Phi^{t}$ links to agent $i$ at time $t$, which we denote as $p_{ik}^{t}$, is given by (\ref{pop15}).       
\begin{figure*}[!t]
\setcounter{mytempeqncnt}{\value{equation}} \setcounter{equation}{0}
\begin{align}
p_{ik}^{t} = \sum_{\theta_{k} \in \Theta} p_{\theta_{k}} \left(\mathbb{P}\left(m_{k}(t) = i \left|k \notin \mathcal{N}_{i,t-1}^{-}, k \in \bigcup_{j \in \mathcal{N}_{i,t-1}^{-}} \mathcal{N}_{j,t-1}^{-}, K_{k}(t), L_{\theta_{k}}^{*}(0)\geq t - k + 1\right.\right)\right)
\label{pop15}
\end{align}
\setcounter{equation}{\value{mytempeqncnt}+1} \hrulefill{}\vspace*{4pt}
\end{figure*}			
It follows that $\Delta \mbox{deg}_{i}^{-}(t)$ obeys a {\it poisson binomial distribution} with a support $n \in \{0,1,.\,.\,.,\max_{l}L_{l}^{*}(0)\}$. The pmf of $\Delta \mbox{deg}_{i}^{-}(t)$ is given by
\[\mathbb{P}\left(\Delta \mbox{deg}_{i}^{-}(t) = n\right) = \sum_{A \in S^{t}_{n}} \prod_{q \in A} p_{iq}^{t} \prod_{r \in A^{c}} (1-p_{ir}^{t}),\]  
where $S^{t}_{n}$ is the set of all size-$n$ subsets of $\Phi^{t}$. The CDF of $\Delta \mbox{deg}_{i}^{-}(t)$ is given by
\[\mathbb{P}\left(\Delta \mbox{deg}_{i}^{-}(t) \leq n\right) = \sum_{l=0}^{n} \sum_{A \in S^{t}_{l}} \prod_{q \in A} p_{iq}^{t} \prod_{r \in A^{c}} (1-p_{ir}^{t}).\]    
It can be easily shown that $\frac{\partial \mathbb{P}\left(\Delta \mbox{deg}_{i}^{-}(t) \leq n\right)}{\partial p_{iy}^{t}} < 0, \forall y \in \Phi^{t}$. We know from (\ref{pop11}) that if $\mbox{deg}_{i}^{-}(t) \geq \mbox{deg}_{j}^{-}(t),$ then $\mathbb{P}\left(\Delta \mbox{deg}_{i}^{-}(t) \leq n\right) \leq \mathbb{P}\left(\Delta \mbox{deg}_{j}^{-}(t) \leq n\right), \forall n$. Thus, $\mbox{deg}_{i}^{-}(t) \geq \mbox{deg}_{j}^{-}(t)$ implies that $\Delta \mbox{deg}_{i}^{-}(t) \succeq \Delta \mbox{deg}_{j}^{-}(t)$.      														
									
\section{Proof of Corollary 4}
\renewcommand{\theequation}{\thesection.\arabic{equation}}
\setcounter{mytempeqncnt}{\value{equation}} \setcounter{equation}{0}
Note that for the two agents $i$ and $j$ with $i < j$, we can write $\mbox{deg}_{i}^{-}(t)$ as
\[\mbox{deg}_{i}^{-}(t) = \sum_{m=i}^{j-1}\Delta \mbox{deg}_{i}^{-}(m)+\sum_{v=j}^{t}\Delta \mbox{deg}_{i}^{-}(v),\]  
whereas $\mbox{deg}_{j}^{-}(t)$ can be written as 
\[\mbox{deg}_{i}^{-}(t) = \sum_{l=j}^{t}\Delta \mbox{deg}_{j}^{-}(l).\]
Based on the derivations in Appendix F, it is easy to see that both $\sum_{l=j}^{t}\Delta \mbox{deg}_{j}^{-}(l)$ and $\sum_{v=j}^{t}\Delta \mbox{deg}_{i}^{-}(v)$ follow the same distribution, i.e. $\mathbb{P}\left(\sum_{l=j}^{t}\Delta \mbox{deg}_{j}^{-}(l) = n\right) = \mathbb{P}\left(\sum_{v=j}^{t}\Delta \mbox{deg}_{i}^{-}(v) = n\right)$. Now we show that $\mbox{deg}_{i}^{-}(t) \succeq \mbox{deg}_{j}^{-}(t)$. In order to prove that $\mbox{deg}_{i}^{-}(t) \succeq \mbox{deg}_{j}^{-}(t)$, we need to show that $\mathbb{P}\left(\sum_{l=j}^{t}\Delta \mbox{deg}_{j}^{-}(l) \leq n\right) \geq \mathbb{P}\left(\sum_{v=i}^{t}\Delta \mbox{deg}_{i}^{-}(v) \leq n\right), \forall n$. This can be shown by observing that
\[\mathbb{P}\left(\sum_{v=i}^{t}\Delta \mbox{deg}_{i}^{-}(v) \leq n\right) = \]
\[\mathbb{P}\left(\sum_{m=i}^{j-1}\Delta \mbox{deg}_{i}^{-}(m)+\sum_{v=j}^{t}\Delta \mbox{deg}_{i}^{-}(v) \leq n\right),\]
which can be re-arranged as
\[\mathbb{P}\left(\sum_{v=i}^{t}\Delta \mbox{deg}_{i}^{-}(v) \leq n\right) = \]
\[\mathbb{P}\left(\sum_{v=j}^{t}\Delta \mbox{deg}_{i}^{-}(v) \leq n-\sum_{m=i}^{j-1}\Delta \mbox{deg}_{i}^{-}(m)\right).\]    
Since $\sum_{m=i}^{j-1}\Delta \mbox{deg}_{i}^{-}(m) > 0$ almost surely, and since $\sum_{l=j}^{t}\Delta \mbox{deg}_{j}^{-}(l)$ and $\sum_{v=j}^{t}\Delta \mbox{deg}_{i}^{-}(v)$ follow the same distribution, then it follows from the monotonicity of CDFs that
\[\mathbb{P}\left(\sum_{v=j}^{t}\Delta \mbox{deg}_{i}^{-}(v) \leq n-\sum_{m=i}^{j-1}\Delta \mbox{deg}_{i}^{-}(m)\right) \leq\]
\[\mathbb{P}\left(\sum_{l=j}^{t}\Delta \mbox{deg}_{j}^{-}(l) \leq n\right), \forall n,\]       
thus $\mbox{deg}_{i}^{-}(t) \succeq \mbox{deg}_{j}^{-}(t)$. 			

\section{Proof of Corollary 5}
\renewcommand{\theequation}{\thesection.\arabic{equation}}
\setcounter{mytempeqncnt}{\value{equation}} \setcounter{equation}{0}
From (\ref{pop9}), we know that 
\[\mathbb{E}\left\{d_{i}^{1}(t)\right\} = \bar{L} \log\left(\frac{t}{i-1}\right),\]
whereas $\mathbb{E}\left\{d_{i}^{2}(t)\right\}$ is lower-bounded as follows
\[\mathbb{E}\left\{d_{i}^{2}(t)\right\} \geq \frac{1}{b} \left(\left(\frac{t}{i}\right)^{b}-1\right).\] 
Since $\mathbb{E}\left\{d_{i}^{2}(t)\right\}$ grows faster than $\mathbb{E}\left\{d_{i}^{1}(t)\right\},$ we know that $\mathbb{E}\left\{d_{i}^{2}(t)\right\}$ dominates $\mathbb{E}\left\{d_{i}^{1}(t)\right\}$ after a finite time $T^{*}$. We can obtain $T^{*}$ by solving the following transcendental equation for $t$
\[\mathbb{E}\left\{d_{i}^{1}(t)\right\} = \mathbb{E}\left\{d_{i}^{2}(t)\right\}.\]   
An upper-bound on the solution can be obtained by solving the following transcendental equation for $t$, in which we replace $\mathbb{E}\left\{d_{i}^{2}(t)\right\}$ by its lower-bound
\begin{equation}
\bar{L} \log\left(\frac{t}{i-1}\right) = \frac{1}{b} \left(\left(\frac{t}{i}\right)^{b}-1\right). 
\label{lamart1}
\end{equation} 
Note that (\ref{lamart1}) can be put in the following form
\begin{equation}
t^{b \bar{L}} = \frac{(i-1)^{b \bar{L}}}{e} \, e^{\left(\frac{t}{i}\right)^{b}}.
\label{lamart2}
\end{equation}			
A functional form for the solution to (\ref{lamart2}) can be obtained in terms of the Lambert W function $\mathcal{W}_{-1}(.)$ \cite{ref372} as follows  			
\begin{equation}
t^{*} = i \times \left(-\bar{L} \mathcal{W}_{-1}\left(\frac{-1}{\bar{L}} e^{\frac{-1}{\bar{L}}}\right)\right)^{\frac{1}{b}}.
\label{lamart3}
\end{equation}
Thus, $T^{*} \leq t^{*}$ and the Theorem is concluded. 
			
\section{Proof of Theorem 4}
\renewcommand{\theequation}{\thesection.\arabic{equation}}
\setcounter{mytempeqncnt}{\value{equation}} \setcounter{equation}{0}

Similar to the proof of Theorem 3, we start by evaluating the popularity growth rate in an intolerant society with fully non-opportunistic agents, i.e. a society with $h_{k} = 1, \gamma_{k} = 0, \forall k \in \Theta$. The expected popularity of any agent $i$ can be written as
\begin{equation}  
\mathbb{E}\left\{\mbox{deg}_{i}^{-}(t)\right\} = \sum_{j=i}^{t}\mathbb{E}\left\{\Delta \mbox{deg}_{i}^{-}(j)\right\},
\label{pop71}
\end{equation}
where the expectation is taken over all the realizations of the graph process $\left\{G^{t}\right\}_{t=1}^{\infty}$, thus using the {\it the law of total expectation}, $\mathbb{E}\left\{\Delta \mbox{deg}_{i}^{-}(t)\right\}$ in (\ref{pop71}) can be written as
\begin{equation}  
\mathbb{E}\left\{\Delta \mbox{deg}_{i}^{-}(t)\right\} = \sum_{G^{t} \in \mathcal{G}^{t}}\mathbb{E}\left\{\left.\Delta \mbox{deg}_{i}^{-}(t)\right|G^{t}\right\} \mathbb{P}\left(G^{t}\right).
\label{pop72}
\end{equation}
Unlike the case of tolerant societies, the number of agents forming a link at any time step can be arbitrarily large, i.e. since agents are homophilic, they may wait for an arbitrarily large period to form any link since they are constrained to linking to same-type agents only. Thus, (\ref{pop72}) can be written as shown in (\ref{pop73}). The derivation of $\mathbb{E}\left\{\Delta \mbox{deg}_{i}^{-}(t)\right\}$ is given in (\ref{pop75}). In the following, we briefly explain the steps involved in (\ref{pop75}). (a) is an application of the law of total expectation, and (b)-(c) are simplifications of (a) that is obtained by observing that an agent $k$ can link to $i$ only if it has not yet linked to it in a previous time step and its utility function is not yet satisfied. (d) is obtained by observing that $k$ can link to $i$ only if $\theta_{i} = \theta_{k}$. In (e) and (f) we compute the conditional probabilities explicitly, and (g) is obtained by using a first-order Taylor series approximation, which converges to (h) for an asymptotically large network. In the rest of the steps, we compute an approximation for the summation in (h). Note that the result in (l) can be obtained by a simple Mean-field approximation: the expected EFT of a type-$m$ agent is $\frac{L_{m}^{*}(0)}{p_{m}}$, and thus the expected number of type-$m$ agents is $L_{m}^{*}(0)$, thus the expected number of links gained by a type-$m$ agent at time $t$ is $\frac{L_{m}^{*}(0)}{t}$.             
\begin{figure*}[!t]
\setcounter{mytempeqncnt}{\value{equation}} \setcounter{equation}{2}
\begin{equation}  
\mathbb{E}\left\{\Delta \mbox{deg}_{i}^{-}(t)\right\} = \sum_{G^{t} \in \mathcal{G}^{t}}\mathbb{E}\left\{\left.\Delta \mbox{deg}_{i}^{-}(t)\right|\left\{\theta_{v}\right\}_{v = 1}^{t}, \left\{A^{t-1}(v,i)\right\}_{v = 1}^{t}, \left\{\mbox{deg}_{v}^{+}(t)\right\}_{v = 1}^{t}\right\} \mathbb{P}\left(\left\{\theta_{v}\right\}_{v = 1}^{t}, \left\{A^{t-1}(v,i)\right\}_{v = 1}^{t}, \left\{\mbox{deg}_{v}^{+}(t)\right\}_{v = 1}^{t}\right).
\label{pop73}
\end{equation} 
\setcounter{equation}{\value{mytempeqncnt}+1} \hrulefill{}\vspace*{4pt}
\end{figure*}
\begin{figure*}[!t]
\setcounter{mytempeqncnt}{\value{equation}} \setcounter{equation}{3}
\begin{align}
\mathbb{E}\left\{\Delta \mbox{deg}_{i}^{-}(t)\right\} &\overset{\text{(a)}}= \sum_{G^{t} \in \mathcal{G}^{t}}\mathbb{E}\left\{\left.\Delta \mbox{deg}_{i}^{-}(t)\right|\left\{\theta_{v}\right\}_{v = 1}^{t}, \left\{A^{t-1}(v,i)\right\}_{v = 1}^{t}, \left\{\mbox{deg}_{v}^{+}(t)\right\}_{v = 1}^{t}\right\} \mathbb{P}\left(\left\{\theta_{v}\right\}_{v = 1}^{t}, \left\{A^{t-1}(v,i)\right\}_{v = 1}^{t}, \left\{\mbox{deg}_{v}^{+}(t)\right\}_{v = 1}^{t}\right) \nonumber \\
&\overset{\text{(b)}}= \sum_{k = 1}^{t} \sum_{\theta_{k} \in \Theta} p_{\theta_{k}} \left(\mathbb{P}\left(A^{t}(k,i) = 1 \left|k \notin \mathcal{N}_{i,t-1}^{-}, \mbox{deg}_{k}^{+}(t) < L_{\theta_{k}}^{*}(0)\right.\right) \mathbb{P}\left(k \notin \mathcal{N}_{i,t-1}^{-}, \mbox{deg}_{k}^{+}(t) < L_{\theta_{k}}^{*}(0)\right)\right)  \nonumber \\
&\overset{\text{(c)}}= \sum_{k = 1}^{t} \left(\mathbb{P}\left(m_{k}(t) = i \left|k \notin \mathcal{N}_{i,t-1}^{-}, \mbox{deg}_{k}^{+}(t) < L_{\theta_{k}}^{*}(0), \theta_{k} = \theta_{i}\right.\right) \mathbb{P}\left(k \notin \mathcal{N}_{i,t-1}^{-}, \mbox{deg}_{k}^{+}(t) < L_{\theta_{k}}^{*}(0), \theta_{k} = \theta_{i}\right)\right)  \nonumber \\
&\overset{\text{(d)}}= \frac{p_{\theta_{i}}}{t-1} \, \sum_{k = 1}^{t} \, \mathbb{P}\left(k \notin \mathcal{N}_{i,t-1}^{-}, \mbox{deg}_{k}^{+}(t) < L_{\theta_{k}}^{*}(0)\right)  \nonumber \\
&\overset{\text{(e)}}= \frac{p_{\theta_{i}}}{t-1} \, \left(\sum_{k = 1}^{i-1} \, \mathbb{P}\left(k \notin \mathcal{N}_{i,t-1}^{-}, \mbox{deg}_{k}^{+}(t) < L_{\theta_{k}}^{*}(0)\right) + \sum_{k = i+1}^{t} \, \mathbb{P}\left(k \notin \mathcal{N}_{i,t-1}^{-}, \mbox{deg}_{k}^{+}(t) < L_{\theta_{k}}^{*}(0)\right)\right)  \nonumber \\
&\overset{\text{(f)}}= \frac{p_{\theta_{i}}}{t-1} \left(\sum_{k = 1}^{i-1} \left(1-I_{1-p_{\theta_{i}}}\left(L_{\theta_{i}}^{*}(0), t-k-L_{\theta_{i}}^{*}(0)+1\right)\right) \, \prod_{w = i}^{t} \left(1-\frac{1}{w-1}\right) + \nonumber \right. \\
& \left. \sum_{k = i+1}^{t} \left(1-I_{1-p_{\theta_{i}}}\left(L_{\theta_{i}}^{*}(0), t-k-L_{\theta_{i}}^{*}(0)+1\right)\right) \, \prod_{w = k}^{t} \left(1-\frac{1}{w-1}\right)\right) \nonumber \\
&\overset{\text{(g)}}\approx  \frac{p_{\theta_{i}}}{t-1} \left(\sum_{k = 1}^{i-1} \left(1-I_{1-p_{\theta_{i}}}\left(L_{\theta_{i}}^{*}(0), t-k-L_{\theta_{i}}^{*}(0)+1\right)\right) \, \prod_{w = i}^{t} \left(1-\frac{1}{w-1}\right) + \nonumber \right. \\
& \left. \sum_{k = i+1}^{t} \left(1-I_{1-p_{\theta_{i}}}\left(L_{\theta_{i}}^{*}(0), t-k-L_{\theta_{i}}^{*}(0)+1\right)\right) \, e^{-\sum_{w = k}^{t}\frac{1}{w-1}}\right) \nonumber \\
&\overset{\text{(h)}}\asymp \frac{p_{\theta_{i}}}{t-1} \sum_{k = i+1}^{t} \left(1-I_{1-p_{\theta_{i}}}\left(L_{\theta_{i}}^{*}(0), t-k-L_{\theta_{i}}^{*}(0)+1\right)\right) \, e^{-\sum_{w = k}^{t}\frac{1}{w-1}}  \nonumber \\
&\overset{\text{(i)}}= \frac{p_{\theta_{i}}}{t-1} \left(\sum_{k = t-L_{\theta_{i}}^{*}(0)+1}^{t} e^{-\sum_{w = k}^{t}\frac{1}{w-1}} + \sum_{k = i+1}^{t-L_{\theta_{i}}^{*}(0)} \left(1-I_{1-p_{\theta_{i}}}\left(L_{\theta_{i}}^{*}(0), t-k-L_{\theta_{i}}^{*}(0)+1\right)\right) \, e^{-\sum_{w = k}^{t}\frac{1}{w-1}}\right) \nonumber \\
&\overset{\text{(j)}}\approx \frac{p_{\theta_{i}}}{t-1} \left(L_{\theta_{i}}^{*}(0) + \sum_{k = i+1}^{t-L_{\theta_{i}}^{*}(0)} \left(1-I_{1-p_{\theta_{i}}}\left(L_{\theta_{i}}^{*}(0), t-k-L_{\theta_{i}}^{*}(0)+1\right)\right) \, e^{-\sum_{w = k}^{t}\frac{1}{w-1}}\right) \nonumber \\
&\overset{\text{(k)}}\approx \frac{p_{\theta_{i}}}{t-1} \left(L_{\theta_{i}}^{*}(0) + \sum_{k = i+1}^{t-L_{\theta_{i}}^{*}(0)} (1-p_{\theta_{i}})^{k}\right) \nonumber \\
&\overset{\text{(l)}}\approx \frac{L_{\theta_{i}}^{*}(0)}{t}.
\label{pop75}
\end{align}
\setcounter{equation}{\value{mytempeqncnt}+1} \hrulefill{}\vspace*{4pt}
\end{figure*}

Based on (\ref{pop75}), the popularity of an agent $i$ at time $t$ is given by
\begin{align}
\mathbb{E}\left\{\mbox{deg}_{i}^{-}(t)\right\} &= \sum_{j=i}^{t}\mathbb{E}\left\{\Delta \mbox{deg}_{i}^{-}(j)\right\} \nonumber \\
&= \sum_{j=i}^{t} \frac{L_{\theta_{i}}^{*}(0)}{j} \nonumber \\
&= L_{\theta_{i}}^{*}(0) \left(H_{t}-H_{i-1}\right) \nonumber \\
&\asymp L_{\theta_{i}}^{*}(0) \log\left(\frac{t}{i-1}\right). 
\label{pop789}
\end{align} 

\begin{figure*}[!t]
\setcounter{mytempeqncnt}{\value{equation}} \setcounter{equation}{5}
\[\mathbb{E}\left\{\left.\Delta \mbox{deg}_{i}^{-}(t)\right|\mbox{deg}_{i}^{-}(t-1), \mbox{deg}_{i}^{-}(t-2), .\,.\,., \mbox{deg}_{i}^{-}(i)\right\}\]
\begin{align}
&\overset{\text{(a)}}= \sum_{k=1}^{t} \sum_{\theta_{k} \in \Theta} p_{\theta_{k}} \left(\mathbb{P}\left(m_{k}(t) = i \left|k \notin \mathcal{N}_{i,t-1}^{-}, k \in \bigcup_{j \in \mathcal{N}_{i,t-1}^{-}} \mathcal{N}_{j,t-1}^{-}, K_{k}(t), 0<\mbox{deg}_{k}^{-}(t) < L_{\theta_{k}}^{*}(0), \theta_{k}=\theta_{i}\right.\right)\right.   \nonumber \\
&\times \left. \mathbb{P}\left(k \notin \mathcal{N}_{i,t-1}^{-}, k \in \bigcup_{j \in \mathcal{N}_{i,t-1}^{-}} \mathcal{N}_{j,t-1}^{-}, K_{k}(t), 0<\mbox{deg}_{k}^{-}(t) < L_{\theta_{k}}^{*}(0),\theta_{k}=\theta_{i}\right)+ \frac{1}{t-1}\,\mathbb{P}\left(\mbox{deg}_{k}^{-}(t) = 0,\theta_{k}=\theta_{i}\right)\right)  \nonumber \\
&\overset{\text{(b)}}= p_{\theta_{i}} \sum_{k =1}^{t} \left(\mathbb{P}\left(m_{k}(t) = i \left|k \notin \mathcal{N}_{i,t-1}^{-}, k \in \bigcup_{j \in \mathcal{N}_{i,t-1}^{-}} \mathcal{N}_{j,t-1}^{-}, K_{k}(t), 0<\mbox{deg}_{k}^{-}(t) < L_{\theta_{k}}^{*}(0)\right.\right)\right. \nonumber \\
&\times \left. \mathbb{P}\left(k \notin \mathcal{N}_{i,t-1}^{-}, k \in \bigcup_{j \in \mathcal{N}_{i,t-1}^{-}} \mathcal{N}_{j,t-1}^{-},K_{k}(t), 0<\mbox{deg}_{k}^{-}(t) < L_{\theta_{k}}^{*}(0)\right)+ \frac{1}{t-1}\,\mathbb{P}\left(\mbox{deg}_{k}^{-}(t) = 0\right)\right) \nonumber \\
&\overset{\text{(c)}}= p_{\theta_{i}} \sum_{k=1}^{t} \, \frac{1}{K_{k}(t)}\, \mathbb{P}\left(k \notin \mathcal{N}_{i,t-1}^{-}, k \in \bigcup_{j \in \mathcal{N}_{i,t-1}^{-}} \mathcal{N}_{j,t-1}^{-},K_{k}(t), 0 < \mbox{deg}_{k}^{-}(t) < L_{\theta_{k}}^{*}(0)\right) + \frac{(1-p_{\theta_{i}})^{t-k}}{t-1} \nonumber \\
&\overset{\text{(d)}}= \frac{1-(1-p_{\theta_{i}})^{t}}{t-1} + p_{\theta_{i}} \sum_{k=1}^{t} \, \frac{1}{K_{k}(t)}\, \mathbb{P}\left(k \notin \mathcal{N}_{i,t-1}^{-}, k \in \bigcup_{j \in \mathcal{N}_{i,t-1}^{-}} \mathcal{N}_{j,t-1}^{-},K_{k}(t), 0 < \mbox{deg}_{k}^{-}(t) < L_{\theta_{k}}^{*}(0)\right) \nonumber \\
&\overset{\text{(e)}}\geq \frac{1-(1-p_{\theta_{i}})^{t}}{t-1} + p_{\theta_{i}} \sum_{k=i}^{t}\, \frac{1}{K_{k}(t)}\, \frac{\mbox{deg}_{i}^{-}(k)}{k-1} \, \mathbb{P}\left(k \notin \mathcal{N}_{i,t-1}^{-}, K_{k}(t), 0 < \mbox{deg}_{k}^{-}(t) < L_{\theta_{k}}^{*}(0)\right) \nonumber \\
&\overset{\text{(f)}}\approx \frac{1-(1-p_{\theta_{i}})^{t}}{t-1} + p_{\theta_{i}} \sum_{k=t-\lfloor \frac{L_{\theta_{i}}^{*}(0)}{p_{\theta_{i}}} \rfloor + 1}^{t}\, \frac{1}{K_{k}(t)}\, \frac{\mbox{deg}_{i}^{-}(k)}{k-1} \, \mathbb{P}\left(k \notin \mathcal{N}_{i,t-1}^{-}, K_{k}(t), 0 < \mbox{deg}_{k}^{-}(t) < L_{\theta_{k}}^{*}(0)\right) \nonumber \\
&\overset{\text{(g)}}\geq \frac{1-(1-p_{\theta_{i}})^{t}}{t-1} + \frac{\mbox{deg}_{i}^{-}\left(t-\lfloor \frac{L_{\theta_{i}}^{*}(0)}{p_{\theta_{i}}} \rfloor + 1 \right)}{t-\lfloor \frac{L_{\theta_{i}}^{*}(0)}{p_{\theta_{i}}} \rfloor} \, \sum_{m=0}^{L_{\theta_{i}}^{*}(0)-1} \frac{L_{\theta_{i}}^{*}(0)}{(m+1)L_{\theta_{i}}^{*}(0)-m} \prod_{v=1}^{m}\left(1-\frac{1}{v L_{\theta_{i}}^{*}(0) - (v-1)}\right) \nonumber \\
&\overset{\text{(h)}}\geq \frac{1-(1-p_{\theta_{i}})^{t}}{t-1} + \frac{\mbox{deg}_{i}^{-}\left(t\right)}{t-1} \sum_{m=0}^{L_{\theta_{i}}^{*}(0)-1} \frac{L_{\theta_{i}}^{*}(0)}{(m+1)L_{\theta_{i}}^{*}(0)-m} \prod_{v=1}^{m}\left(1-\frac{1}{v L_{\theta_{i}}^{*}(0) - (v-1)}\right) \nonumber \\
&\overset{\text{(i)}}\asymp \frac{1}{t} + \frac{\mbox{deg}_{i}^{-}\left(t\right)}{t} \sum_{m=0}^{L_{\theta_{i}}^{*}(0)-1} \frac{L_{\theta_{i}}^{*}(0)}{(m+1)L_{\theta_{i}}^{*}(0)-m} \prod_{v=1}^{m}\left(1-\frac{1}{v L_{\theta_{i}}^{*}(0) - (v-1)}\right).
\label{pop1789}
\end{align}
\setcounter{equation}{\value{mytempeqncnt}+1} \hrulefill{}\vspace*{4pt}
\end{figure*}

Next, we evaluate the popularity growth rate in an intolerant society with fully opportunistic agents, i.e. a society with $h_{k} = 1, \gamma_{k} = 1, \forall k \in \Theta$. Similar to (\ref{pop75}), we start by evaluating $\sum_{G^{t} \in \mathcal{G}^{t}}\mathbb{E}\left\{\left.\Delta \mbox{deg}_{i}^{-}(t)\right|G^{t}\right\} \mathbb{P}\left(G^{t}\right)$. Since in an opportunistic society the meeting process depends on the network structure, we start by evaluating the term $\mathbb{E}\left\{\left.\Delta \mbox{deg}_{i}^{-}(t)\right|G^{t}\right\}$ in (\ref{pop1789}). The steps involved in (\ref{pop1789}) are similar to those in (\ref{pop75}).

Following the same steps in Appendix F, it can be easily shown that
\[\mathbb{E}\left\{\mbox{deg}_{i}^{-}(t)\right\} \geq \frac{1}{b_{\theta_{i}}} \left(\frac{t}{i}\right)^{b_{\theta_{i}}}-\frac{1}{b_{\theta_{i}}},\]
where $b_{k} > b_{m}$ if $L_{k}^{*}(0) > L_{m}^{*}(0)$. 			

\section{Proof of Corollary 6}
\renewcommand{\theequation}{\thesection.\arabic{equation}}
\setcounter{mytempeqncnt}{\value{equation}} \setcounter{equation}{0}
For $\gamma_{m} = \gamma_{k} = 0,$ we have that  			
\[\mathbb{E}\left\{\mbox{deg}_{i}^{-}(t)\right\} = \log\left(\frac{L_{k}^{*}(0)}{i-1}\right),\]
and			
\[\mathbb{E}\left\{\mbox{deg}_{j}^{-}(t)\right\} = \log\left(\frac{L_{m}^{*}(0)}{j-1}\right).\]			
Thus, if $i < j$, then $\mathbb{E}\left\{\mbox{deg}_{i}^{-}(t)\right\} \geq \mathbb{E}\left\{\mbox{deg}_{j}^{-}(t)\right\}, \forall t \geq i,$ whereas if $i > j$, then $\mathbb{E}\left\{\mbox{deg}_{i}^{-}(t)\right\} \geq \mathbb{E}\left\{\mbox{deg}_{j}^{-}(t)\right\}, \forall t \geq T^{*},$ where 
\[T^{*} = \frac{\left(i-1\right)^{\frac{L_{k}^{*}(0)}{L_{k}^{*}(0)-L_{m}^{*}(0)}}}{\left(j-1\right)^{\frac{L_{m}^{*}(0)}{L_{k}^{*}(0)-L_{m}^{*}(0)}}}.\]
The same conclusion can be reached for the case of $\gamma_{m} = \gamma_{k} = 1$. The second part of the Corollary follows from the fact that popularity grows logarithmically in non-opportunistic societies, whereas it grows at least sublinearly in time for opportunistic societies, thus there always exists a finite time after which the expected popularity of an agent in an opportunistic society exceeds that of an agent in a non-opportunistic society.       

\section{Proof of Theorem 5}
\renewcommand{\theequation}{\thesection.\arabic{equation}}
\setcounter{mytempeqncnt}{\value{equation}} \setcounter{equation}{0}
We assume a mean-field approximation for the popularity growth process and consider that an agent's indegree is deterministic and is given by the expected indegree of that agent. In this case, the cdf of the popularity of type-$k$ agents at time $t$, denoted by $F^{t,k}_{d}(d)$, can be computed as follows
\begin{align}
F^{t,k}_{d}(d) &= 1 - \frac{\left|\left\{i \leq t \left| \mathbb{E}\left\{\mbox{deg}_{i}^{-}(t)\right\} \geq d, \theta_{i} = k\right.\right\}\right|}{\left|\mathcal{V}^{t}_{k}\right|} \nonumber \\
&= 1 - \frac{i^{*}(d)}{t},
\end{align} 
where $i^{*}(d)$ corresponds to the agent's birth date that solves the equation $\mathbb{E}\left\{\mbox{deg}_{i}^{-}(t)\right\} = d$.  

Now we focus on the case where $\gamma_{m} = \gamma_{k} = 0$, and $L^{*}_{k}(0) > L^{*}_{m}(0)$. In this case, we have that
\[F^{t,k}_{d}(d) = 1-e^{\frac{-d}{L^{*}_{k}(0)}},\] 
whereas
\[F^{t,m}_{d}(d) = 1-e^{\frac{-d}{L^{*}_{m}(0)}}.\]
Thus, $F^{t,m}_{d}(d) \geq F^{t,k}_{d}(d), \forall d$. The same approach can be used to prove the second part of the Theorem.

\section{Proof of Theorem 6}
\renewcommand{\theequation}{\thesection.\arabic{equation}}
\setcounter{mytempeqncnt}{\value{equation}} \setcounter{equation}{0}									
We start by showing that if there exists one type of agents $k \in \Theta$ for which $h_{k} < 1$ and $\gamma_{k} < 1$, then the network is connected almost surely. Assume that at any point of time, say time step $\tau$, the network has 2 disconnected components $\mathcal{C}_{1}$ and $\mathcal{C}_{2}$, where $\mathcal{C}_{i}$ is the set of agents in component $i$. We show that these two components will connect almost surely as the network grows. Assume that $E_{12}^{t}$ is the event that an agent of type $k$ who is attached to component $\mathcal{C}_{1}$ meets a stranger who belongs to type $\mathcal{C}_{2}$ and links to it, and $E_{21}^{t}$ is the event that an agent of type $k$ who is attached to component $\mathcal{C}_{2}$ meets a stranger who belongs to type $\mathcal{C}_{1}$ and links to it. It is clear that $\mathbb{P}\left(E_{12}^{t} \vee E_{21}^{t}\right) > 0, \forall t > \tau$. For instance, it is easy to show that $\mathbb{P}\left(E_{12}^{t}\right) > p_{k}(1-\gamma_{k})\frac{\left|\mathcal{C}_{1}\right|}{\left|\mathcal{C}_{1}\right|+\left|\mathcal{C}_{2}\right|}$. Thus, $\mathbb{P}\left(\bigvee_{t=\tau}^{\infty} \left(E_{12}^{t} \vee E_{21}^{t}\right)\right) = 1$, and any two disconnected components in the network will eventually get connected through a type-$k$ agent.

The converse follows from the fact that if the network is connected, then there exists cross-type links and we cannot have $h_{k} = 1, \forall k \in \Theta$. Moreover, if there exists one type of agents $k \in \Theta$ for which $h_{k} < 1$ while all other types are extremely homophilic, then we must have $\gamma_{k} < 1$ or otherwise type-$k$ agents will never meet strangers and will restrict their links to followees of followees who belong to the same network component, and the network will display $|\Theta|-1$ disconnected components.

% that's all folks
\end{document}